\definecolor{links}{RGB}{11, 85, 255}
\definecolor{cites}{RGB}{0, 200, 0}
\definecolor{urls}{RGB}{255, 116, 0}
\pgfplotsset{compat=1.14}
\newcommand{\cA}{\mathcal{A}}
\newcommand{\cE}{\mathcal{E}}
\newcommand{\cF}{\mathcal{F}}
\newcommand{\cI}{\mathcal{I}}
\newcommand{\cM}{\mathcal{M}}
\newcommand{\R}{\mathbb{R}}
\newcommand{\N}{\mathbb{N}}
\newcommand{\poly}{\textsf{poly}}
\newcommand{\topk}{{\textsf{Top-}k}\xspace}
\newcommand{\alg}{\textsf{Alg}}
\newcommand{\adap}{\textsf{Adap}}
\newcommand{\Supp}{\textsf{Supp}}
\newcommand{\na}{\textsf{NA}}
\newcommand{\E}{\mathbb{E}}
\newcommand{\ind}{\mathds{1}}
\newcommand{\ones}{\mathbf{1}}
\newcommand{\up}[1]{^{(#1)}}
\newcommand{\ALG}{\textsf{ALG}}
\newcommand{\OPT}{\textsf{OPT}}
\newcommand{\cost}{\textsf{cost}}
\newcommand{\SD}{\textsf{SD}}
\newcommand{\LD}{\textsf{LD}}
\newcommand{\ba}{\boldsymbol{a}}
\newcommand{\bd}{\boldsymbol{d}}
\newcommand{\IGNORE}[1]{}
\DeclarePairedDelimiter\ceil{\lceil}{\rceil}
\DeclarePairedDelimiter\floor{\lfloor}{\rfloor}
\DeclarePairedDelimiter\inprod{\langle}{\rangle}
\newtheorem{theorem}{Theorem}[section]
\newtheorem{claim}[theorem]{Claim}
\newtheorem{lemma}[theorem]{Lemma}
\newtheorem{corollary}[theorem]{Corollary}
\newtheorem{conjecture}[theorem]{Conjecture}
\newtheorem{remark}[theorem]{Remark}
\newtheorem{definition}[theorem]{Definition}
\newtheorem{observation}[theorem]{Observation}
\begin{document}

\title{Submodular Norms with Applications To
Online Facility Location and Stochastic Probing}
\author{
Kalen Patton\footnote{(kpatton33@gatech.edu) School of Math Georgia Tech} \and
Matteo Russo\footnote{(mrusso@diag.uniroma1.it) DIAG, Sapienza Università di Roma}
\and
Sahil Singla\footnote{(ssingla@gatech.edu) School of Computer Science, Georgia Tech}}
\date{}

\maketitle

\begin{abstract}
\bigskip

Optimization problems often involve vector norms, which has led to extensive research on developing algorithms that can handle objectives beyond the $\ell_p$ norms.
Our work introduces the concept of \emph{submodular norms}, which are a versatile type of norms that possess marginal properties similar to submodular set functions. We show that submodular norms can accurately represent or approximate well-known classes of norms, such as $\ell_p$ norms, ordered norms, and symmetric norms. Furthermore, we establish that submodular norms can be applied to optimization problems such as online facility location, stochastic probing, and generalized load balancing. This allows us to develop a logarithmic-competitive algorithm for online facility location with symmetric norms, to prove a logarithmic adaptivity gap for stochastic probing with symmetric norms, and to give an alternative poly-logarithmic approximation algorithm for generalized load balancing with outer $\ell_1$ norm and inner symmetric norms.

\end{abstract}

 \setcounter{tocdepth}{1}
 {\small
    \tableofcontents
 }


\clearpage

\section{Introduction}

In the field of combinatorial optimization, norm objectives are frequently encountered. Canonical problems, such as the min-weight spanning tree and the $k$-median, involve searching for a feasible solution that minimizes the sum of costs, which is equivalent to the $\ell_1$ norm of the edge cost vector. On the other hand, canonical problems like bottleneck spanning tree and $k$-center aim to minimize the maximum of costs, which is equivalent to the $\ell_\infty$ norm of the edge cost vector. However, because $\ell_1$ and $\ell_\infty$ norms only capture the extreme Utilitarian and Egalitarian objectives respectively, significant research has been devoted to developing combinatorial optimization algorithms for more general norms  (see references in \Cref{sec:furtherRelated}). Among the commonly studied norms are $\ell_p$ norms, ordered norms, Orlicz norms, symmetric norms, and arbitrary monotone norms.

 Over the past decade, there is also a lot of effort towards designing online and stochastic algorithms for more general norms. For instance, remarkable progress has been made in developing algorithms beyond $\ell_p$ norms for various problems, such as load balancing \cite{ChakrabartyS19a, ChakrabartyS19b,IbrahimpurS20,IbrahimpurS21,IbrahimpurS22,Kesselheim0023}, set cover \cite{AzarBCCCG0KNNP16, NagarajanS17}, spanning trees \cite{Ibrahimpur-Thesis22}, and bandits with knapsacks \cite{Kesselheim020, Kesselheim0023}. Notably, most of the recent progress is for the class of symmetric norms, i.e., monotone norms that remain unchanged upon permutation of coordinates. This progress is partly due to Ky Fan's Dominance Theorem (refer to \cite{Bhatia97}), which reduces the problem of designing algorithms for symmetric norms to ordered norms (see \Cref{sec:defs} for a formal definition). Ordered norms are comparatively more manageable due to their explicit form. Given this progress on some combinatorial problems for symmetric norms, a natural question arises:

\begin{quote}
    \emph{What general norms and what combinatorial problems admit algorithms with good performance guarantees?}
\end{quote}

A challenge in making progress beyond symmetric norms is that such norms are not explicit, e.g., they may not be well approximated by ordered norms. 
In this work we introduce the class of \emph{submodular norms}, which is a broad class of norms with marginals properties  mimicking submodular set functions. We show that submodular norms either capture or approximate  popular classes of norms like $\ell_p$ norms, ordered norms, and symmetric norms. Moreover,   submodular norms  are amenable to some of the optimization problems  that were previously intractable like online facility location and stochastic probing.


\subsection{Norms and Submodularity}\label{sec:defs}

We start with the definitions of monotone, symmetric, and ordered norms. We will be only interested in norms defined in the positive orthant.

\begin{definition}[Monotone Norm]
    A \emph{monotone norm} is a function $\|\cdot\|: \R^n_{+} \rightarrow \R_{+}$ and is defined as $$\|x\| := \sup_{\ba \in \cA} \sum_{i} a_ix_i,$$ i.e, by a max of non-negative linear functions over set $\cA$. 
This is equivalent to saying that $\|x\|\geq \|y\|$ whenever $x \geq y \geq 0$ coordinatewise (hence, the name \emph{monotone}).
\end{definition}

\begin{definition}[Symmetric Norm]
    A monotone norm $\|\cdot\|$ is a \emph{symmetric norm} if, for any vector $x \in \R^n_{+}$ and for all of its coordinate permutations $\pi: [n] \rightarrow [n]$, $\|x\| = \|(x_{\pi(i)})_{i \in [n]}\|$.
\end{definition}

We remark that any symmetric norm can be written as $\sup_{\ba \in \cA} \inprod{\ba, x^\downarrow}$, where $x^\downarrow$ represents the sorted (in descending order) vector $x$, and $\cA$ is a set on non-negative descending vectors. This follows from the fact that $\max_\pi \sum_i a_i x_{\pi(i)} = \inprod{\ba^\downarrow, x^\downarrow}$ for non-negative vectors $a, x$.
In the special case where $\cA$ is a singleton, we have ordered norms.

\begin{definition}[Ordered Norm]
    A monotone norm is an \emph{ordered norm} if it can be written as $ \|x\| = \sum_{i} a_ix_i^\downarrow$, where $a_1 \geq \ldots \geq a_n \geq 0$.
\end{definition}

\paragraph*{Submodular Norms} Submodular set functions and their applications to optimization have been extensively studied; see books  \cite{Schrijver-Book03,Fujishige-Book05}. Intuitively, they capture the notion of decreasing marginal gains.  
Although submodular functions were originally defined for discrete settings, the notion has been generalized to arbitrary lattices, in particular to real vectors \cite{Bach18}. This leads to the following notion of continuous submodularity, which has found several applications in machine learning \cite{BBK-arXiv20,Bach13} and will be crucial in our definition of submodular norms.
We discuss standard properties of continuous submodularity in \Cref{sec:contSubmod}.

\begin{definition}[Continuous Submodularity] A real-valued function $f : \R^n_+ \to \R_+$ is \emph{continuously submodular} if for all $x, y \in \R^n_+$, we have
$
f(x \vee y) + f(x \wedge y) \leq f(x) + f(y),
$
where $x \vee y$ and $x \wedge y$ are the coordinate-wise max and min of $x$ and $y$ respectively. 
\end{definition}

Our first contribution is to define the following natural class of submodular norms.

\begin{definition}[Submodular Norm]
A monotone norm $\|\cdot\|$ is \emph{submodular} if it is continuously submodular.
\end{definition}

Examples of submodular norms include all  $\ell_p$ norms and Ordered norms (see \Cref{obs:Examples}). Moreover, the following theorem proved in \Cref{sec:approxSymmetric} shows that any symmetric norm can be approximated by a submodular norm.

\begin{theorem} \label{thm:symTosubmod}
Any symmetric norm can be $O(\log \rho)$ approximated by a submodular norm, where $\rho := \frac{\|(1,1,\ldots, 1)\|}{ \|(1,0,0,\ldots,0)\|} \leq n$. This approximation factor is tight up to $O(\log\!\log \rho)$ terms. 
\end{theorem}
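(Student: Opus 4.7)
My plan is to reduce the symmetric norm to a sum of Top-$k$ norms over a small dyadic family of scales. Using the representation $\|x\|_{\mathrm{sym}} = \sup_{\ba \in \cA} \inprod{\ba, x^\downarrow}$ over descending non-negative $\ba$ (equivalently Ky Fan's Dominance Theorem) together with Abel summation, every $\inprod{\ba, x^\downarrow}$ becomes a non-negative combination $\sum_{k} (a_k - a_{k+1}) \textsf{Top}_k(x)$ of the Top-$k$ norms $\textsf{Top}_k(x) := \sum_{i=1}^{k} x^\downarrow_i$. Dyadically bucketing the index $k$ into $K = \{2^j : 0 \le j \leq \lceil\log n\rceil\}$ and using $\textsf{Top}_k(x) \leq \textsf{Top}_{2^j}(x)$ for $k \leq 2^j$, the whole symmetric norm is dominated by a single universal expression $\sum_{2^j \in K} \gamma_j \textsf{Top}_{2^j}(x)$ where $\gamma_j := \sup_{\ba \in \cA} (a_{2^{j-1}} - a_{2^j})$.

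\textbf{From sum-of-Top-$k$ to submodular norm.} Each $\textsf{Top}_k$ is an ordered norm and hence continuously submodular by \Cref{obs:Examples}; since continuous submodularity and the norm axioms are both preserved under non-negative linear combinations, the candidate
\[
\|x\|_{\mathrm{sub}} := \sum_{2^j \in K} \gamma_j \, \textsf{Top}_{2^j}(x)
\]
is a submodular norm satisfying $\|x\|_{\mathrm{sub}} \geq \|x\|_{\mathrm{sym}}$ by construction. For the reverse direction, a standard averaging/majorization argument shows that each individual term satisfies $\gamma_j \textsf{Top}_{2^j}(x) \leq O(\|x\|_{\mathrm{sym}})$: the vector obtained by replacing the top $2^j$ coordinates of $x$ by their average is weakly majorized by $x$ and has smaller symmetric monotone norm. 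Summing over dyadic scales gives $\|x\|_{\mathrm{sub}} \leq O(\log\rho)\cdot\|x\|_{\mathrm{sym}}$, where the key point is that only $O(\log\rho)$ dyadic scales can be ``active'': setting $\alpha_k := c_k/k$ with $c_k := \|e_{[k]}\|_{\mathrm{sym}}$, the monotonicity of $c_k$ together with the extreme ratio $\alpha_1/\alpha_n = n/\rho$ confines the scales on which $\gamma_j\textsf{Top}_{2^j}(x)$ can match the maximum to an $O(\log\rho)$-long window.

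\textbf{Tightness.} For the $\Omega(\log \rho/\log\log \rho)$ lower bound, I would construct a symmetric norm $\|\cdot\|^\star$ whose values on the flat indicators, $c_k^\star := \|e_{[k]}\|^\star$, take $\Theta(\log \rho)$ ``staircase'' values pushed against the sub-additivity and monotonicity constraints. The obstruction for any submodular approximant $f$ is that the continuous submodularity inequality applied with $x = \ind_S, y = \ind_T$ of varying sizes forces the sequence $f(e_{[k]})$ to be concave in $k$ (in the form $f(e_{[k+m]}) + f(e_{[k-m]}) \leq 2 f(e_{[k]})$), so any submodular approximation of $\|\cdot\|^\star$ on flat inputs must factor through the concave envelope of $c_k^\star$. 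Carefully engineering $\|\cdot\|^\star$ so that the concave-envelope distortion is $\Omega(\log \rho/\log\log \rho)$ yields the matching lower bound.

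The hard part, by a wide margin, is the tightness: the upper bound is essentially a packaging of a standard dyadic Top-$k$ reduction once the right coefficients $\gamma_j$ are identified, whereas the lower bound requires pinpointing the sharp extremal symmetric norm and extracting a clean concavity obstruction from continuous submodularity — the $\log\log\rho$ slack in the theorem statement suggests the extremal example is subtle and not just a step function.
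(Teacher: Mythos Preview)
Your upper bound yields an $O(\log n)$ approximation but \emph{not} the claimed $O(\log\rho)$. The culprit is the coefficient choice $\gamma_j:=\sup_{a\in\cA}(a_{2^{j-1}}-a_{2^j})$: taking the supremum separately for each $j$ decouples the scales and can make all $\Theta(\log n)$ of them simultaneously large even when $\rho=O(1)$. Concretely, let $\|x\|:=\max_{0\le j\le\lfloor\log n\rfloor}2^{-j}\,\textsf{Top}_{2^j}(x)$. Then $\|\ones_{\le k}\|=1$ for every $k$, so $\rho=1$, yet $\gamma_j=2^{-(j-1)}$ (attained by $a=2^{-(j-1)}\ones_{\le 2^{j-1}}\in\cA$) and hence $\|\ones_{\le n}\|_{\mathrm{sub}}=\sum_{j\ge 1}2^{-(j-1)}\cdot 2^j=\Theta(\log n)$. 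Your ``only $O(\log\rho)$ scales are active'' heuristic is simply false here: every scale contributes $2\|x\|$. The paper sidesteps this by choosing breakpoints in the \emph{value} domain rather than the index domain---it picks $m_0\le\cdots\le m_{\lfloor\log\rho\rfloor}$ with $\|\ones_{\le m_j}\|\approx 2^j$ and sums the corresponding supporting functionals, which automatically limits the number of terms to $O(\log\rho)$.

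Your lower-bound plan also has a real gap. You are right that for a symmetric submodular $f$ the map $k\mapsto f(\ones_{\le k})$ is concave, but this obstruction cannot exceed a constant factor. Any symmetric norm has $c_k:=\|\ones_{\le k}\|$ non-decreasing \emph{and} $c_k/k$ non-increasing (since $\tfrac{k+1}{k}\ones_{\le k}$ weakly majorizes $\ones_{\le k+1}$). These two constraints force the concave envelope of $(c_k)$ to lie within a factor $2$ of $c_k$ everywhere: for $i<k<j$ with $k=\lambda i+(1-\lambda)j$ one has $\lambda c_i+(1-\lambda)c_j\le \big(\lambda+(1-\lambda)\tfrac{j}{k}\big)c_k$, and a short calculation shows the parenthesized coefficient is at most $2$. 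So no staircase $\|\cdot\|^\star$ can make the flat-vector distortion super-constant. Indeed, the paper's hard instance $\|x\|=\max_k k^{-\varepsilon}\textsf{Top}_k(x)$ has $c_k=k^{1-\varepsilon}$, which is already concave; the $\Omega(\log n/\log\log n)$ separation comes from a decreasing test vector $y_k\approx k^{-(1-\varepsilon)}$ together with an inductive bound on $\|y_{\le j}\|'$ that exploits submodularity coordinate by coordinate, not from indicator vectors at all.
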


There is an intimate connection between submodular norms and submodular set functions. 
Given a submodular norm $\|\cdot\|$,   the set function $f: 2^{[n]} \to \R^+$ by $f(S) := \|\ones_S\|$ is submodular, so every submodular norm is an extension of a submodular function. Moreover, if $f$ is a monotone submodular function with $f(\emptyset) = 0$, then $f$ can be extended to a submodular norm $\|\cdot\|$ by the \emph{Lov\'asz extension}:  
\[
\|x\| := \int_0^\infty f(\{j : t \leq x_j\}) dt.
\]
  This observation that every submodular set function induces a continuously submodular norm via its Lov\'asz extension   has appeared several times before \cite{Bach-NIPS10,Bach18}. However, our definition of submodular norms can capture many more natural norms.  E.g., all $\ell_p$ norms are submodular but  for $1<p<\infty$ they cannot be written as a Lov\'asz extension of a submodular set function since the dual-norm unit ball has an infinite number of vertices.

\begin{remark} \label{remark:DRsubmod}
A commonly studied variant of continuous submodularity is  DR-submodularity \cite{BianB019, FeldmanK20, NiazadehRW18}: a function $f : \R^d_+ \to \R_+$ is \emph{DR-submodular} if it satisfies \emph{diminishing returns} meaning $ f(w + ae_i) - f(w) \leq f(x + ae_i) - f(x)$ for all $x, w \in \R^d_+$ with $x \leq w$, $i \in [d]$, and $a \geq 0$. 
 It is known that continuous submodularity is  equivalent to having  this diminishing returns inequality only when  $x_i = w_i$; hence continuous submodularity is a weaker property. 
The class of DR submodular functions turns out to be uninteresting when looking at norms since  the only DR-submodular norm is the $\ell_1$-norm (up to rescaling coordinate-wise). See \Cref{sec:contSubmod} for proofs. 
\end{remark}

\subsection{Applications} \label{sec:introApplications}
In addition to being a natural class of norms, submodular norms find multiple applications. We will explore two specific applications-one in the domain of online algorithms and another in the field of stochastic optimization.

\paragraph*{Online Facility Location} In this problem we are given a metric space $(\cM, d)$ equipped with metric $d: \cM \times \cM \rightarrow \R_{\geq 0}$, along with a cost function $f : \cM \to \R_+$ and a norm $\| \cdot \|: \R_+^{n} \rightarrow \R_+$. At each time step $i \in [n]$, an adversary produces a new request $x_i \in \cM$ and the algorithm decides to either assign $x_i$ to the 
closest already-open facility in the set  $F_{i-1}$, thereby incurring a connection cost $d(x_i, F_{i-1})$, or to open a new facility $q$ and assign request $x_i$ to facility $q$, thereby incurring a connection cost $d(x_i, q)$ and an opening cost $f(q)$.
Let $F$ be the final set of opened facilities, let $F_i$ be the set of facilities opened until (and including) the $i$-th request, and let $\bd = (d_1, \dots, d_n) \in \R^{n}_+$ be the vector of connection costs $d_i := d(x_i, F_i)$. Our goal is to minimize the total cost  $ \sum_{q \in F} f(q) + \|\bd\|.$

Online facility location was introduced by Meyerson for $\ell_1$ norm \cite{Mey01}, where he showed an $O(\log n)$ competitive algorithm. A tight competitive ratio of $\Theta(\nicefrac{\log n}{\log\!\log n})$ was later obtained by Fotakis \cite{Fotakis-Journal08}. When all requests are given up front (offline setting), it is a classical NP-hard problem where we can design $O(1)$ approximation algorithm, even for general norms \cite{GMS-arXiv22}. In the online setting, however, no non-trivial algorithm was previously known beyond $\ell_1$ norms.

\begin{restatable}{theorem}{nonUnif}\label{thm:non-unif}
    For online facility location problem with a submodular norm $\|\cdot\|$, there exists a randomized online algorithm that obtains cost at most $O(\log \rho) \cdot \sum_{z \in F^*} f(z) + O(1) \cdot \|\bd^*\|$, where $F^*$ and $\bd^*$ are the set of facilities and vector of assignment distances respectively given by the offline optimum algorithm and $\rho := \frac{\|(1,1,\ldots, 1)\|}{ \min_i \|e_i\|} \leq n \cdot \frac{\max_i \|e_i\|}{\min_i \|e_i\|}$. 
\end{restatable}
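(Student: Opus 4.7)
The plan is to generalize Meyerson's randomized online algorithm~\cite{Mey01,Fotakis-Journal08} for online facility location under $\ell_1$ to the submodular-norm setting, using continuous submodularity to control the growth of the connection-distance vector $\bd$. When request $x_i$ arrives, let $d_i := d(x_i, F_{i-1})$ and let $m_i := \|\bd_{i-1} + d_i\, e_i\| - \|\bd_{i-1}\|$ be the marginal norm cost of merely connecting $x_i$. Monotonicity and continuous submodularity of $\|\cdot\|$ together give $0 \le m_i \le d_i \|e_i\|$ (apply submodularity to the disjoint-support pair $\bd_{i-1}$ and $d_i e_i$), so the algorithm can open a new facility at $x_i$ with probability $p_i := \min\{1,\, \alpha\, m_i / f_{x_i}\}$ for a parameter $\alpha = \Theta(\log \rho)$, and otherwise connect $x_i$ to its nearest open facility. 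The scaling $\alpha = \Theta(\log \rho)$ is the crucial knob driving the asymmetric $O(\log\rho)$-vs-$O(1)$ trade-off between opening and connection cost in the final bound.

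The analysis would then proceed by a cluster-based charging against the offline optimum $F^*$: each request $x_i$ is charged to the OPT facility $z \in F^*$ to which it is assigned, with $d^*_i := d(x_i, z)$. Within each cluster $R_z := \{x_i : x_i \text{ assigned to } z\}$, the triangle inequality yields $d(x_i,x_j) \le d^*_i + d^*_j$, which gives a Meyerson-style pointwise bound $d_i \le d^*_i + d^*_j$ once an earlier $x_j \in R_z$ has an open ALG-facility at its location. The key per-cluster lemma would bound the expected number of ALG openings inside $R_z$ by $O(\log \rho)$ (yielding the opening-cost side) and the expected contribution of $R_z$ to $\bd^{\ALG}$ by a small multiple of its OPT contribution, in a form compatible with composition via monotonicity and continuous submodularity of $\|\cdot\|$. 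This compositionality step is where the submodular-norm inequality $\|x \vee y\| + \|x \wedge y\| \le \|x\| + \|y\|$ is essential, since naively summing per-cluster norm contributions would over-count for non-$\ell_1$ norms.

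The main obstacle will be isolating only an $O(1)$ factor on $\|\bd^*\|$: a direct adaptation of Meyerson's probabilistic argument yields $O(\log \rho)$ on both terms. To sharpen the distance side, I would construct a potential-function argument in which, at each step, the expected marginal increase in $\|\bd^{\ALG}\|$ is dominated by the marginal of a combined potential $\Phi_i$ built from the interleaving of $\bd^{\ALG}_{\le i-1}$ and $\bd^*_{\le i}$, while the expected opening cost at step $i$ incurs only an extra $\alpha$-factor on the same increment. Using continuous submodularity to telescope $\Phi_i - \Phi_{i-1}$ across the run would give $\E[\|\bd^{\ALG}\|] \le O(\|\bd^*\|) + O(\log \rho) \sum_{z \in F^*} f(z)$, which is the claimed bicriteria bound. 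This is the step where the submodular-norm hypothesis is indispensable; the argument fails for arbitrary monotone norms.
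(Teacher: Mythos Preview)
Your proposal has two concrete gaps that would prevent the argument from going through.

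\textbf{First, the algorithm itself is wrong for non-uniform costs.} You open a facility only at the demand location $x_i$, with probability depending on $f_{x_i}$. But in the non-uniform setting the cost $f(x_i)$ at the demand point can be arbitrarily large while a cheap facility sits nearby; your algorithm would then never open and incur unbounded connection cost. The paper's algorithm instead considers all cost levels $f\up{1},\dots,f\up{m}$ simultaneously: at each step it samples a level $j$ and opens at the nearest point in $W_i\up{j}=F_{i-1}\cup\{x:f(x)\le f\up{j}\}$, with carefully capped probabilities $p_i\up{j}$. Without this multi-level construction you cannot handle the non-uniform theorem at all.

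\textbf{Second, computing the marginal from the \emph{true} connection costs fails even in the uniform case.} If your $\bd_{i-1}$ records the actual connection distances (which become $0$ whenever you open), then on the star graph $K_{1,n}$ with $\ell_\infty$ norm and $f=1$, every $m_i$ equals $2$ and your algorithm opens at every leaf, giving cost $n$ against $\OPT=2$. The paper explicitly identifies this failure mode and fixes it by computing marginals $\delta_i$ against an \emph{auxiliary} vector $\hat\bd$ that records capped distances $\hat d_i=\min\{d(x_i,F_{i-1}),\tau_i\}$ regardless of whether a facility was opened; submodularity then makes the marginals shrink as $\hat\bd$ grows, so the algorithm becomes \emph{less} eager to open over time. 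Your scaling $\alpha=\Theta(\log\rho)$ pushes in the opposite direction (more opening), and in any case the paper's $\log\rho$ factor does not come from a probability multiplier but from a geometric ring decomposition of each optimal cluster into $L=\lceil\log\rho\rceil$ annuli, with a separate long-distance/short-distance accounting in each ring. Your potential $\Phi_i$ ``built from the interleaving of $\bd^{\ALG}_{\le i-1}$ and $\bd^*_{\le i}$'' is too vague to substitute for this; you would need to say exactly what $\Phi_i$ is and verify the telescoping, and I do not see how to do that without something equivalent to the auxiliary-cost and ring machinery.
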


Since any symmetric norm can be $O(\log \rho)$ approximated by a submodular norm by \Cref{thm:symTosubmod}, we get the following corollary.

\begin{corollary}
For online facility location problem with a symmetric norm, there exists an $O(\log \rho)$-competitive randomized algorithm.
\end{corollary}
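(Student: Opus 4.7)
The plan is a direct composition of \Cref{thm:symTosubmod} and \Cref{thm:non-unif}. Given a symmetric norm $\|\cdot\|$, I first invoke \Cref{thm:symTosubmod} to produce a submodular norm $\|\cdot\|_s$ that, after rescaling, satisfies $\|x\|_s \le \|x\| \le c\|x\|_s$ for all $x \in \R^n_+$ with $c = O(\log \rho)$. I then run the randomized online algorithm of \Cref{thm:non-unif} on the online facility location instance with $\|\cdot\|_s$ as its norm, producing an opened-facility set $F$ and a connection-cost vector $\bd$.

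Next, I would relate the two optima. Let $(F^\star,\bd^\star)$ denote the offline optimum under the symmetric norm, with total cost $\OPT$. Because $\|\cdot\|_s \le \|\cdot\|$, the pair $(F^\star,\bd^\star)$ is also feasible for the submodular-norm instance with cost at most $\OPT$, so the submodular-norm optimum $(F^\star_s,\bd^\star_s)$ satisfies $\sum_{z \in F^\star_s} f(z) \le \OPT$ and $\|\bd^\star_s\|_s \le \OPT$. Applying \Cref{thm:non-unif} to the submodular instance then yields
\[
\sum_{q \in F} f(q) + \|\bd\|_s \;\le\; O(\log \rho)\sum_{z \in F^\star_s} f(z) + O(1)\,\|\bd^\star_s\|_s \;\le\; O(\log \rho)\cdot \OPT.
\]
Finally, I would translate the algorithm's cost back to the symmetric norm using $\|\bd\| \le c\|\bd\|_s$, and combine this with the (norm-independent) facility cost $\sum_{q\in F} f(q)$ already bounded above, to get an upper bound on the algorithm's actual cost $\sum_{q \in F} f(q) + \|\bd\|$ in terms of $\OPT$.

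The reduction itself is essentially a one-line calculation once the two ingredients are in place, so the substantive work lies entirely upstream in proving \Cref{thm:symTosubmod} and \Cref{thm:non-unif}. The part I expect to require the most care is controlling the final competitive ratio: the asymmetric bound in \Cref{thm:non-unif}, with the $\log \rho$ factor only on the facility term and a constant on the connection term, is what prevents the composition from blowing up, and one must verify that the conversion factor $c$ entering when passing from $\|\bd\|_s$ back to $\|\bd\|$ is absorbed into a single logarithmic factor---either by the choice of submodular approximant guaranteed by \Cref{thm:symTosubmod} or by a slightly more careful term-by-term accounting of the algorithm's cost against $\OPT$.
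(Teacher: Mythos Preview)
Your overall reduction is exactly the paper's, and you correctly sense where the danger lies. However, with the orientation $\|x\|_s \le \|x\| \le c\|x\|_s$ you chose, the composition yields only $O(\log^2 \rho)$, not $O(\log\rho)$: \Cref{thm:non-unif} bounds only the \emph{sum} $\sum_{q\in F} f(q) + \|\bd\|_s$, so from that inequality you cannot extract anything better than $\|\bd\|_s \le O(\log\rho)\,\OPT$, and the back-conversion $\|\bd\|\le c\|\bd\|_s$ then contributes a second logarithm.

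The paper resolves this by reversing the sandwich: take the submodular approximant to \emph{dominate} the symmetric norm, i.e.\ $\|x\|\le\|x\|'\le c\|x\|$ (which is exactly how \Cref{lem:log-rho-approx} is stated). Then $\|\bd\|\le\|\bd\|'$ is free, and the chain becomes
\[
\sum_{q\in F} f(q)+\|\bd\| \ \le\ \sum_{q\in F} f(q)+\|\bd\|' \ \le\ O(\log\rho')\sum_{z\in F^*}f(z)+O(1)\,\|\bd^*\|' \ \le\ O(\log\rho)\,\OPT,
\]
where $(F^*,\bd^*)$ is the $\|\cdot\|$-optimum. Two small points make this go through: (i) the guarantee of \Cref{thm:non-unif} is proved against any fixed offline solution, so one may compare directly to $(F^*,\bd^*)$ rather than to the $\|\cdot\|'$-optimum; and (ii) $\log\rho_{\|\cdot\|'}=\Theta(\log\rho)$ because the two norms are within a factor $c=O(\log\rho)$ of each other. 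The only place the approximation factor $c$ enters is in $\|\bd^*\|'\le c\|\bd^*\|$, which sits next to the $O(1)$ coefficient and hence does not compound with the $O(\log\rho)$ on the facility term. So the ``choice of submodular approximant'' you allude to is exactly right---just flip the inequality---and no further term-by-term accounting is needed.
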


For concreteness, this corollary implies an $O(\log n)$-competitive algorithm for $\ell_1$ norm, which matches Meyerson's bound \cite{Mey01}, an $O(1)$-competitive algorithm for $\ell_\infty$ norm, and an $O(\log k)$-competitive algorithm for $\topk$ norm. This is tight up to an $O(\log \log \rho)$ factor for any symmetric norm by the lower bound construction given in \Cref{thm:ofl-lower-bound}.

The proof of \cref{thm:non-unif} relies on generalizing Meyerson's algorithm beyond $\ell_1$ norms. 
Meyerson's algorithm constructs a new facility at each demand point $x_i$ with probability ${d(x_i, F_{i-1})}/{f}$, thereby balancing the cost of assigning the demand against the cost of constructing a new facility. A natural generalization of this algorithm to general norms is to construct a new facility with probability ${\delta_i}/{f}$, where marginal cost $\delta_i = \|(d_1, \dots, d_{i-1}, d(x_i, F_{i-1}), 0, \dots, 0)\| - \|\bd_{\leq i-1}\|.$ Unfortunately, we will show that such an algorithm is   $\Omega(n)$-competitive even for the $\ell_\infty$ norm. Our crucial change to Meyerson's algorithm is to carefully define \emph{auxiliary assignment costs} $\hat{d_i}$ which upper bound the true costs $d_i$. Now we use $\hat{d}_i$ instead of $d_i$ to calculate the marginal cost $\delta_i$. Due to norm submodularity, this underestimates the marginal costs, making the algorithm more inclined to assign demand points instead of constructing new facilities.

Next, we discuss a stochastic optimization application of submodular norms. 

\paragraph*{Stochastic Probing} This problem is a natural stochastic generalization of constrained submodular maximization.  Here, we are given probability distributions of $n$ independent random variables $X = (X_1, \dots, X_n) \in \R_+$, a downward-closed  set family $\cF \subseteq 2^{[n]}$, and a monotone objective $f : \R^n_+ \to \R_+$. The goal is to select a feasible set $S \in \cF$ of variables in order to maximize $f(X_S)$. 
The optimal strategy for this problem is generally \emph{adaptive}, i.e., it selects elements of $S$ one at a time and may change its decisions based on observations of the selected variables.

Since adaptive strategies are complicated (could be an exponential-sized decision tree) and hard to implement for many applications of stochastic probing, we are interested in finding non-adaptive algorithms that maximize $\max_{S \in \cF} \E[f(X_S)]$. The main question, which has been studied in several papers \cite{AN16,GN-IPCO13,GNS-SODA16,GNS-SODA17,BSZ-Random19,EKM-COLT21}, 
is how much do we lose when we move from adaptive to non-adaptive algorithms, i.e., if $\adap(X, \cF, f)$ denotes the optimal adaptive strategy and $\na(X, \cF, f)$ denotes the optimal non-adaptive algorithm, then what is the maximum possible   \emph{adaptivity gap}  $\frac{\adap(X, \cF, f)}{\na(X, \cF, f)}$.

For submodular set functions, the worst-case adaptivity gap is known to be $2$ \cite{GNS-SODA17,BSZ-Random19}. An interesting conjecture posed in \cite{GNS-SODA17} is whether the adaptivity gap for \emph{XOS} set functions is poly-logarithmic in $n$, where an XOS set function $f: 2^{[n]} \rightarrow \R_+$ is a max over linear set functions.  Since a monotone norm is nothing but a max over linear functions (given by the dual-norm unit ball), they form an extension of XOS set functions from the hypercube to all non-negative real vectors. Thus, we can generalize the conjecture of \cite{GNS-SODA17} to the following:

\begin{conjecture}
The adaptivity gap  for stochastic probing with monotone norms is poly-$\log(n)$.
\end{conjecture}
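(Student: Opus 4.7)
The plan is a two-step reduction: first approximate an arbitrary monotone norm by a submodular norm with only poly-logarithmic distortion, then bound the adaptivity gap for submodular norms via a threshold decomposition into monotone submodular set functions, which are known to have constant adaptivity gap by \cite{GNS-SODA17,BSZ-Random19}.

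For the first step, I would try to strengthen Theorem~\ref{thm:symTosubmod} from symmetric to arbitrary monotone norms. A natural attempt is to partition the dual set $\cA$ of witnesses in the representation $\|x\| = \sup_{\ba \in \cA}\inprod{\ba,x}$ by coordinate-magnitude scale and support-size class, apply the symmetric-case construction on each piece (after rescaling coordinates so that the piece becomes symmetric on its effective support), and recombine. An important subtlety is that a pointwise max of submodular norms is generally not submodular, so the recombination must be done by a sum or an average rather than a max, which typically loses another logarithmic factor against the original norm.

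For the second step, given a submodular norm $\|\cdot\|$, I would discretize the random vector $X$ at $O(\log n)$ geometric thresholds $t_0 < t_1 < \cdots$ and use monotonicity plus triangle inequality to sandwich $\|X_S\|$ between $\max_k t_k \,\|\ones_{S_k}\|$ and $O(\log n) \cdot \max_k t_k\, \|\ones_{S_k}\|$, where $S_k := \{i \in S : X_i \geq t_k\}$. For each fixed $k$, the mapping $S \mapsto \|\ones_{S_k}\|$ is a monotone submodular set function, so applying the constant-gap result at each $k$ and paying an extra $\log n$ factor to guess the dominant scale yields an overall $O(\log^2 n)$ adaptivity gap for submodular norms. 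A delicate issue is that the per-scale problems share the same feasibility constraint $\cF$ and realization $X$, so rather than stitching per-scale non-adaptive sets (which might violate $\cF$), one should guess $k^*$ in advance and commit to the level-$k^*$ non-adaptive strategy.

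The hardest step is the submodular-norm approximation in the first reduction: monotone norms lack permutation invariance, so Ky Fan's dominance theorem and the ordered-norm machinery that drive the symmetric case are unavailable. It is conceivable that the best achievable factor is super-polylogarithmic, in which case the conjecture would demand a more direct attack that exploits the XOS structure $\|\cdot\| = \sup_{\ba \in \cA}\inprod{\ba,\cdot}$ without passing through submodular norms — for instance, an online-learning-style aggregation of linear non-adaptive strategies indexed by $\ba \in \cA$, using a multiplicative-weights procedure over $\cA$ to track the adaptive benchmark in expectation.
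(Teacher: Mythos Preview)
The statement you are attempting to prove is an open conjecture; the paper explicitly says it is \emph{not} able to resolve it and only establishes the special case of symmetric norms (\Cref{thm:adapGapSymmetric}). So there is no proof in the paper to compare against, and your proposal should be read as an attack on an open problem.

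Your first step is provably impossible. You propose to strengthen \Cref{thm:symTosubmod} to approximate an arbitrary monotone norm by a submodular norm with poly-logarithmic distortion. But the paper shows exactly the opposite in \Cref{sec:beyond-sym}: there is a monotone norm for which \emph{every} submodular norm dominating it has distortion $\Omega(\sqrt{n})$ on some vector. The construction is the max-of-block-sums norm, and the proof mirrors the standard $\Omega(\sqrt{n})$ gap between XOS and submodular set functions. So no amount of partitioning the dual set $\cA$, rescaling coordinates, or recombining by sums will get you below $\sqrt{n}$; your ``hardest step'' is not just hard but false as stated. To your credit, you flag this possibility in your last paragraph, but you should recognize that the paper already settles it.

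Your second step is also more complicated than necessary. The paper proves directly (\Cref{thm:adapGapSubmod}) that the adaptivity gap for submodular norms is at most $2$, by an inductive coupling argument in the style of \cite{BSZ-Random19} that works at the level of the norm itself, with no threshold decomposition and no extra $\log n$ factors. Your threshold-and-guess approach would also give a bound, but a weaker $O(\log^2 n)$ one, and it introduces the feasibility-stitching issue you already noticed. So even if Step~1 had worked, Step~2 is the wrong tool here.

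The upshot is that the submodular-norm route cannot resolve the conjecture for general monotone norms. Your closing suggestion---attacking the XOS structure $\|x\| = \sup_{\ba \in \cA}\inprod{\ba,x}$ directly, perhaps via an online-learning aggregation over $\cA$---is the only viable direction in your proposal, but as written it is a one-sentence sketch, not a plan.
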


Although we are not able to resolve this general conjecture, we make progress by resolving it for all symmetric norms.

\begin{restatable}{theorem}{adapGapSymmetric}\label{thm:adapGapSymmetric}
The adaptivity gap for stochastic probing with symmetric norms is $O(\log n)$.
\end{restatable}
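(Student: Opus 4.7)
The plan is to leverage \Cref{thm:symTosubmod} to reduce to the submodular case, and then to establish a constant adaptivity gap for stochastic probing with any submodular norm. In detail, \Cref{thm:symTosubmod} provides, for every symmetric norm $\|\cdot\|$, a submodular norm $\|\cdot\|'$ with $\|\cdot\|' \le \|\cdot\| \le O(\log \rho) \cdot \|\cdot\|'$, where $\rho \le n$. Since both $\adap$ and $\na$ are maxima of $\E[\|X_S\|]$ over policies, this two-sided approximation transfers to the ratio:
$\adap(X,\cF,\|\cdot\|)/\na(X,\cF,\|\cdot\|) \le O(\log \rho) \cdot \adap(X,\cF,\|\cdot\|')/\na(X,\cF,\|\cdot\|')$. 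Hence it suffices to show an $O(1)$ adaptivity gap whenever the objective is a submodular norm.

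For the reduced statement, I would mimic the proof establishing the constant adaptivity gap of monotone submodular set functions. Let the optimal adaptive policy probe a (random) set $T \in \cF$, and for each $i \in [n]$ let $q_i$ be the marginal probability that $i \in T$. The non-adaptive policy I would construct independently attempts each $i$ with probability proportional to $q_i$ and uses contention resolution to remain in the downward-closed family $\cF$ (losing only a constant factor). Letting $S$ denote the resulting random non-adaptive set, the key claim is $\E[\|X_S\|] = \Omega(1) \cdot \E[\|X_T\|]$. I would prove this by a coupling: conditional on $T$, use the continuous submodularity inequality $\|x \vee y\| + \|x \wedge y\| \le \|x\| + \|y\|$ with $x = X_T$ and $y = X_S$, combined with monotonicity of the norm, to transfer a constant fraction of the adaptive value to the non-adaptive selection.

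The main obstacle is this last coupling argument. In the set-function world one uses discrete marginals $f(T) - f(T\setminus\{i\}) \le f(\{i\}) - f(\emptyset)$ together with a union bound over random sets; here the analog must be phrased in terms of continuous marginals $\|x + X_i e_i\| - \|x\|$, since $X_i$ is an arbitrary non-negative random variable rather than a Bernoulli. I expect the cleanest route is a telescoping argument: enumerate elements in the order $\adap$ probes them, and charge each incremental norm increase to the corresponding coordinate of $X_S$ under a coupling of the two executions, using continuous submodularity to convert adaptive marginals into non-adaptive ones. Once this lemma is in place, combining it with the sandwich from \Cref{thm:symTosubmod} yields the claimed $O(\log n)$ adaptivity gap.
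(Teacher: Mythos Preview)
Your reduction step via \Cref{thm:symTosubmod} is exactly what the paper does: approximate the symmetric norm by a submodular norm, losing $O(\log n)$, and then appeal to a constant adaptivity gap for submodular norms (the paper's \Cref{thm:adapGapSubmod}, which gives a factor of $2$).

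Where you diverge from the paper is in how you propose to prove the constant gap for submodular norms. You suggest taking marginal probing probabilities $q_i$, sampling each $i$ independently, and then invoking a contention resolution scheme to land back in $\cF$. This detour is both unnecessary and genuinely problematic: the problem statement only assumes $\cF$ is downward-closed, and there is no constant-balance CRS for arbitrary downward-closed families. So the ``losing only a constant factor'' step does not go through in general.

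The paper's route, following \cite{BSZ-Random19}, sidesteps this entirely. The non-adaptive algorithm does not resample coordinates independently; it simply draws $S$ from the \emph{same} distribution over $\cF$ that $\adap$ induces (this is a legitimate non-adaptive strategy since $\adap$ always outputs a set in $\cF$), but evaluates the norm on an independent copy $Y$ of $X$. The comparison $\E[\|X_S\|] \le 2\,\E[\|Y_S\|]$ is then proved by induction on $n$: peel off the first probed coordinate, use that the residual $f_x(Z) := \|(x,Z)\| - \|(x,0)\|$ is again submodular and monotone decreasing in $x$ (this is where continuous submodularity enters), and apply the inductive hypothesis to the remaining $n-1$ coordinates. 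Your ``telescoping in probing order'' intuition is on the right track, but the correct non-adaptive comparator is the mimicked distribution, not an independently-sampled-and-resolved set.
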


The proof of this result  relies on first  approximating the symmetric norm by a submodular norm as given in \Cref{thm:symTosubmod}. Next, we generalize the technique of bounding adaptivity gaps for submodular set functions in \cite{BSZ-Random19} to submodular norms.

\paragraph*{Generalized Load Balancing}
The setting of generalized load balancing was introduced by Deng, Li, and Rabani \cite{DengLR023} as a way to capture many related make-span minimization problems. In this problem, we have $n$ jobs and $m$ unrelated machines, and we seek to find an assignment $\sigma : [n] \to [m]$ of jobs to machines. Each job $j$ has a processing time $\mathbf p_{ij} \in \R_+$ on machine $i$. Each machine also has a monotone  \emph{inner norm} $\psi_i : \R_+^n \to \R_+$, and the \emph{load} of machine $i$ is given by 
$$load_i(\sigma) := \psi_i\Big[(\mathbf p_{ij} \cdot \ind_{\sigma(j) = i})_{j \in [n]}\Big].$$
Additionally, the costs over all machines are aggregated with a monotone \emph{outer norm} $\phi : \R_+^m \to \R_+$. The goal is to find an assignment of jobs to machines $\sigma$ that minimizes $\phi[(load_i(\sigma))_{i \in [n]}]$.

In \cite{DengLR023}, the authors study the setting where $\phi$ and all $\psi_i$ are symmetric, giving an $O(\log n)$ approximate algorithm using LP based methods. Additionally, when $\phi$ is the $\ell_1$ norm, and $load_i(\sigma)$ is a monotone submodular function of $\sigma^{-1}(i)$, it is known that a simple greedy approach can get a $O(\log n)$ approximation as well \cite{SvitkinaZT10}.

Viewing this problem under the framework of submodular norms, we note that if the $\psi_i$ are submodular norms, then $\psi_i\Big[(\mathbf p_{ij} \cdot \ind_{\sigma(j) = i})_{j \in [n]}\Big]$ is indeed a monotone submodular function of $\sigma^{-1}(i)$. This means that in the setting where $\phi$ is an $\ell_1$ norm, and $\psi_i$ are submodular norms, the result of \cite{SvitkinaZT10} implies a $O(\log n)$ approximation. Combining this result with approximation of symmetric norms by submodular norms from \cref{lem:log-rho-approx}, we also note that a $O(\log^2 n)$ approximate algorithm can be obtained in the setting where $\phi$ is $\ell_1$, and the $\psi_i$ are symmetric norms. This gives an alternative way to achieve the $\poly \log (n)$ approximation factor that \cite{DengLR023} gets for this setting, using the greedy algorithm of \cite{SvitkinaZT10} instead of LP methods.

\IGNORE{\color{red}
\paragraph{Vertex Cover.} 
In vertex cover with submodular norms, we are given a graph $G= (V,E)$ and a
norm $\|\cdot \| : \R_+^{|V|} \rightarrow \R_+$. The goal is to find a  subset of vertices $S$ s.t. every edge is incident to a vertex in $S$ while minimizing $\|\ind_S\|$. This problem generalizes  the classical NP Hard problem of vertex cover where we want to maximize the sum of the weights of vertices in $S$. 
There is a simple $2$ approximation algorithm for the classical problem, which is known to be tight under the unique games conjecture \cite{WS-Book,KR-JCSS08}.  We ask for what norms can we obtain good approximation algorithms for vertex cover.

\begin{theorem}
There exists a $2$-approximation algorithm for vertex cover with submodular norms. 
\end{theorem}
An immediate corollary of this theorem is an $O(\log \rho)$-approximation algorithm for vertex cover with symmetric norms due to \Cref{thm:symTosubmod}.  
Our technique in the proof of this theorem is to generalize the result of \cite{IwataN09} from submodular set functions to submodular norms.
}

\subsection{Further Related Work} \label{sec:furtherRelated}

 In recent years, there has been a surge of interest in the study of general norms. Some of the combinatorial problems that have been studied beyond $\ell_p$ norms are  load balancing \cite{ChakrabartyS19a,ChakrabartyS19b,IbrahimpurS20,IbrahimpurS21}, $k$-clustering \cite{BSS-STOC18,ChakrabartyS19a}, vector scheduling \cite{Kesselheim0023,DLR-SODA23,IbrahimpurS22},  set cover \cite{AzarBCCCG0KNNP16, NagarajanS17}, spanning trees \cite{Ibrahimpur-Thesis22}, and generalized assignment with convex costs \cite{GuptaKP12,Kesselheim0023}. Beyond combinatorial optimization, general norms have been recently studied for problems such as mean estimation with statistical queries \cite{LiNRW19}, nearest-neighbor search \cite{AndoniNNRW17, AndoniNNRW18},   regression  \cite{AndoniLSZZ18, SongWYZZ19}, and communication complexity~\cite{AndoniBF23}.

Continuous submodular functions have been extensively studied in the machine learning literature. We refer to the beautiful article of Bach \cite{Bach18} for their properties. Some of their applications to combinatorial optimization are discussed in \cite{AxelrodLS20, NiazadehRW18} and to learning are discussed in \cite{ZhangDCH022, FeldmanK20}. The fact that submodular set function induces a norm via its Lov\'asz extension has found several applications for regression since they induce sparsity \cite{Bach-NIPS10,Bach13}.

\paragraph*{Paper Outline.} 
Our work revolves around submodular norms and combinatorial optimization problems where the objective function is a submodular norm. In \Cref{sec:submod}, we illustrate the key properties of submodular norms  and the extent to which they serve as a good proxy for other classes of norms. In this respect, we identify a crucial parameter $\rho$ that structurally characterizes a given submodular norm and may be of independent interest. In \Cref{sec:online-fac}, we leverage these properties to derive a competitive algorithm for online facility location. Finally, in \Cref{sec:stochProbing}, we provide an application of submodular norms to adaptivity gaps for stochastic probing.
\section{Submodular Norms}\label{sec:submod}

We study  properties of submodular norms and how they relate to  other commonly studied norms.

\subsection{Properties and Important Special Cases}\label{sec:submod-relation}

We first discuss some common examples of submodular norms.
 
\begin{observation} \label{obs:Examples}
The following norms are submodular:  
\begin{enumerate} [noitemsep,topsep=0pt]
    \item All $\ell_p$ norms are submodular.
    \item All $\topk$ and ordered norms are submodular.
    \item For a matroid $\cM = ([n], \cI)$, the \emph{matroid rank norm} $\|x\| := \max_{S \in \cI} (\sum_{i \in S} x_i)$ is submodular.
\end{enumerate}
\end{observation}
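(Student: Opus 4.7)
The plan is to verify the continuous-submodularity inequality $\|x \vee y\| + \|x \wedge y\| \leq \|x\| + \|y\|$ for each family, using a different tool for each: calculus for the $\ell_p$ norms, the Lov\'asz-extension fact already recalled in the introduction for the matroid rank norm, and a reduction to $\topk$ (which is the matroid rank norm of a uniform matroid) for ordered norms together with the observation that non-negative combinations of submodular norms are submodular norms.

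For item (1), the case $p = 1$ is immediate since $\|\cdot\|_1$ is linear, and $p = \infty$ follows directly: if $M := \|x\|_\infty \geq \|y\|_\infty =: m$, then $\|x \vee y\|_\infty = M$ while $\|x \wedge y\|_\infty \leq m$, which yields the inequality. For $1 < p < \infty$, I would compute the off-diagonal mixed partials on the positive orthant and obtain
\[
\frac{\partial^2 \|x\|_p}{\partial x_i \,\partial x_j} \;=\; (1-p)\, x_i^{p-1}\, x_j^{p-1}\, \|x\|_p^{\,1-2p} \;\leq\; 0 \qquad (i \neq j),
\]
which is a known characterization of continuous submodularity on any $C^2$ domain (see \cite{Bach18}). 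The inequality then extends to the boundary, where some $x_i = 0$, by continuity of $\|\cdot\|_p$.

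For item (3), the matroid rank function $r(S) := \max\{|T| : T \subseteq S,\, T \in \cI\}$ is a classical monotone submodular set function with $r(\emptyset) = 0$; the greedy algorithm for max-weight independent sets shows that
\[
\max_{S \in \cI}\, \sum_{i \in S} x_i \;=\; \int_0^\infty r\bigl(\{j : t \leq x_j\}\bigr)\, dt,
\]
so the matroid rank norm is exactly the Lov\'asz extension of $r$, and hence a submodular norm by the extension result from the introduction. Item (2) now follows: $\topk$ is the matroid rank norm of the uniform matroid $U_{k,n}$, and an ordered norm $\sum_i a_i x_i^\downarrow$ with $a_1 \geq \cdots \geq a_n \geq 0$ (setting $a_{n+1} := 0$) equals $\sum_k (a_k - a_{k+1}) \sum_{i \leq k} x_i^\downarrow$ by Abel summation, a non-negative combination of $\topk$ norms. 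Since non-negative combinations preserve both the monotone-norm property and the continuous-submodularity inequality, the claim follows.

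The only mildly delicate point is the passage from the interior to the boundary in the $\ell_p$ case, which is handled by continuity of $\|\cdot\|_p$; everything else is routine calculation or an appeal to the Lov\'asz-extension observation and greedy algorithm already spelled out earlier in the paper.
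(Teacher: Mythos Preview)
Your proof is correct. The main difference from the paper is in the $\ell_p$ case: the paper argues via a composition lemma, showing that $g \circ f$ is continuously submodular whenever $f$ is continuously submodular and $g$ is monotone concave, and then takes $f = \|x\|_p^p$ (separable, hence submodular) and $g(y) = y^{1/p}$; you instead compute the mixed partials directly and appeal to the second-order characterization from \cite{Bach18}. Both are short and equally valid; the paper's route gives a reusable composition fact, while yours avoids any auxiliary lemma. For items (2) and (3) you and the paper agree that matroid rank norms are Lov\'asz extensions and that $\topk$ is the uniform-matroid case; your Abel-summation step expressing an ordered norm as a conical combination of $\topk$ norms is in fact more explicit than the paper's own proof, which handles $\topk$ but does not spell out the ordered-norm case. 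The closure under conical combinations that you invoke is exactly \Cref{lem:propSubmodNorms}(3), so there is no circularity.
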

\begin{proof}
    To see that $\ell_p$ norms are submodular, it suffices to show that for any monotone concave $g : \R_+ \to \R_+$, and submodular $f : \R_+^n \to \R_+$, the function $g \circ f$ is submodular. We can then apply this when $f = \|x\|_p^p$ and $g(y) = y^{1/p}$.
    
    To prove the claim, notice that for $x, y \in \R_+^n$,
    \begin{align*}
    g(f(x \vee y)) - g(f(x)) \leq g(f(x \vee y) - f(x) + f(x \wedge y)) - g(f(x \wedge y))
    \leq g(f(y)) - g(f(x \wedge y)).
    \end{align*}
    
    On the other hand, $\topk$ norms and matroid rank norms are special cases of Lov\'asz extensions. The matroid rank norm is the Lov\'asz extension of the rank function, and a $\topk$ norm is a matroid rank norm for the $k$-uniform matroid. 
\end{proof}

Submodular norms are also closed under several natural operations.

\begin{lemma} \label{lem:propSubmodNorms}
    The following operations return a submodular norm:
    \begin{enumerate}[noitemsep,topsep=0pt]
        \item Any rescaling of the coordinates of a submodular norm.
        
        \item Sums of partial\footnote{Partial means norms defined on a subset of coordinates with every other coordinate treated as $0$.} submodular norms.
        
        \item Any conical combination of submodular norms  is submodular.\footnote{Let $x_1, \ldots, x_m \in \R^n$ be real-valued vectors. We say that $y = \sum_{i \in [m]} \alpha_ix_i$ is a \emph{conical} combination of the vectors if $\alpha_i \geq 0$ for all $i \in [m]$.}  
        
    \end{enumerate}
\end{lemma}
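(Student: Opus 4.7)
The plan is to verify each of the three operations preserves continuous submodularity and the monotone norm structure; in all three cases the norm property is inherited by elementary reasoning (maxima of non-negative linear functions are preserved by non-negative coordinate scaling, restriction to coordinate subsets, and conical combinations), so the work is in checking the submodularity inequality. The unifying idea is that the coordinate-wise meet $\wedge$ and join $\vee$ commute with any coordinate-wise non-negative operation (scaling or projection), and submodularity is stable under non-negative linear combinations.

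For (1), let $\|\cdot\|$ be a submodular norm and $c \in \R^n_+$, and define $\|x\|_c := \|Dx\|$ where $D = \diag(c)$. Monotonicity is immediate since $x \geq y \geq 0$ forces $Dx \geq Dy \geq 0$. For submodularity, since $c_i \geq 0$, $D$ commutes with $\vee$ and $\wedge$, so
\[
\|x \vee y\|_c + \|x \wedge y\|_c = \|D(x \vee y)\| + \|D(x \wedge y)\| = \|Dx \vee Dy\| + \|Dx \wedge Dy\| \leq \|Dx\| + \|Dy\| = \|x\|_c + \|y\|_c,
\]
using continuous submodularity of $\|\cdot\|$.

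For (2), suppose each $\|\cdot\|_{(i)}$ is a submodular norm on $\R^{S_i}_+$ for some $S_i \subseteq [n]$, extended to $\R^n_+$ by $\|x\|_{(i)} := \|x|_{S_i}\|_{(i)}$. Since the restriction map $x \mapsto x|_{S_i}$ commutes with $\vee$ and $\wedge$ (meet/join are computed coordinate-wise), the extension is continuously submodular on $\R^n_+$. The sum $\sum_i \|x\|_{(i)}$ is then submodular because the submodular inequality is additive (i.e., sums of functions satisfying $f(x \vee y) + f(x \wedge y) \leq f(x) + f(y)$ satisfy the same inequality), and it is a monotone norm whenever the $S_i$ cover $[n]$.

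For (3), given submodular norms $\|\cdot\|_1, \ldots, \|\cdot\|_m$ and $\alpha \in \R_+^m$, set $\|x\| := \sum_{i \in [m]} \alpha_i \|x\|_i$. This is a monotone norm as a non-negative combination of monotone norms (and can be written as $\sup_{\ba \in \cA} \inprod{\ba, x}$ for $\cA := \sum_i \alpha_i \cA_i$). The submodular inequality follows by adding the $m$ instances of $\|x \vee y\|_i + \|x \wedge y\|_i \leq \|x\|_i + \|y\|_i$ with non-negative weights $\alpha_i$.

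The main (minor) obstacle is verifying that the relevant operations commute with $\vee$ and $\wedge$, which crucially requires non-negativity of the scaling factors in (1) and of the weights in (3); if any coefficient were negative, the meet and join of scaled vectors would no longer equal the scaled meet and join, and the submodular inequality would fail in general. Apart from this, the proofs are routine manipulations of the defining inequality.
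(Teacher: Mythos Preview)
Your proof is correct and follows essentially the same approach as the paper: both arguments hinge on the fact that coordinate-wise rescaling and restriction commute with $\vee$ and $\wedge$, and that the submodularity inequality is preserved under non-negative linear combinations. You provide slightly more explicit detail (e.g., writing out the inequality chain for rescaling and noting that the $S_i$ must cover $[n]$ to get a genuine norm rather than a semi-norm), but the underlying ideas are identical.
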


\begin{proof}
    The first property follows since coordinate-wise rescaling of vectors commutes with coordinate-wise max and min.

    The second property follows from the fact that a partial submodular norm is a submodular semi-norm (i.e., a norm without the requirement to be positive definite). A sum of semi-norms remains a semi-norm, and from \cite{Bach18}, a sum of continuously submodular functions is continuously submodular.
    
    Finally, it is folklore that conical combinations of norms are norms, and it is also easy to show that such combinations also preserve continuous submodularity (see \cite{Bach18}).
\end{proof}

Besides their strict containment of many common norms, submodular norms are also powerful because they can be used to approximate other norms. In \cref{sec:approxSymmetric}, we will discuss how symmetric norms can be approximated by submodular norms up to logarithmic factors. In addition, we note in \cref{sec:beyond-sym} that submodular norms may approximate a much larger class of norms than just symmetric, although they are still far from the most general class of monotone norms. These approximation relations are summarized in \cref{fig:norms}.
\begin{figure}[h]
    \centering

\tikzset{every picture/.style={line width=0.75pt}} 

\begin{tikzpicture}[x=0.75pt,y=0.75pt,yscale=-1,xscale=1]

\draw  [line width=1.5]  (0,56) .. controls (0,25.07) and (25.07,0) .. (56,0) -- (414,0) .. controls (444.93,0) and (470,25.07) .. (470,56) -- (470,224) .. controls (470,254.93) and (444.93,280) .. (414,280) -- (56,280) .. controls (25.07,280) and (0,254.93) .. (0,224) -- cycle ;
\draw  [line width=1.5]  (240,169.25) .. controls (240,158.62) and (248.62,150) .. (259.25,150) -- (430.75,150) .. controls (441.38,150) and (450,158.62) .. (450,169.25) -- (450,240.75) .. controls (450,251.38) and (441.38,260) .. (430.75,260) -- (259.25,260) .. controls (248.62,260) and (240,251.38) .. (240,240.75) -- cycle ;
\draw  [line width=1.5]  (30,142) .. controls (30,124.33) and (44.33,110) .. (62,110) -- (358,110) .. controls (375.67,110) and (390,124.33) .. (390,142) -- (390,238) .. controls (390,255.67) and (375.67,270) .. (358,270) -- (62,270) .. controls (44.33,270) and (30,255.67) .. (30,238) -- cycle ;
\draw    (60,12) -- (60,99) ;
\draw [shift={(60,101)}, rotate = 270] [color={rgb, 255:red, 0; green, 0; blue, 0 }  ][line width=0.75]    (10.93,-4.9) .. controls (6.95,-2.3) and (3.31,-0.67) .. (0,0) .. controls (3.31,0.67) and (6.95,2.3) .. (10.93,4.9)   ;
\draw [shift={(60,10)}, rotate = 90] [color={rgb, 255:red, 0; green, 0; blue, 0 }  ][line width=0.75]    (10.93,-4.9) .. controls (6.95,-2.3) and (3.31,-0.67) .. (0,0) .. controls (3.31,0.67) and (6.95,2.3) .. (10.93,4.9)   ;
\draw    (228,240) -- (42,240) ;
\draw [shift={(40,240)}, rotate = 360] [color={rgb, 255:red, 0; green, 0; blue, 0 }  ][line width=0.75]    (10.93,-4.9) .. controls (6.95,-2.3) and (3.31,-0.67) .. (0,0) .. controls (3.31,0.67) and (6.95,2.3) .. (10.93,4.9)   ;
\draw [shift={(230,240)}, rotate = 180] [color={rgb, 255:red, 0; green, 0; blue, 0 }  ][line width=0.75]    (10.93,-4.9) .. controls (6.95,-2.3) and (3.31,-0.67) .. (0,0) .. controls (3.31,0.67) and (6.95,2.3) .. (10.93,4.9)   ;
\draw [line width=0.75]    (445.2,210) -- (420,210) -- (399.2,210) ;
\draw [shift={(397.2,210)}, rotate = 360] [color={rgb, 255:red, 0; green, 0; blue, 0 }  ][line width=0.75]    (10.93,-3.29) .. controls (6.95,-1.4) and (3.31,-0.3) .. (0,0) .. controls (3.31,0.3) and (6.95,1.4) .. (10.93,3.29)   ;
\draw [shift={(447.2,210)}, rotate = 180] [color={rgb, 255:red, 0; green, 0; blue, 0 }  ][line width=0.75]    (10.93,-3.29) .. controls (6.95,-1.4) and (3.31,-0.3) .. (0,0) .. controls (3.31,0.3) and (6.95,1.4) .. (10.93,3.29)   ;
\draw    (420,12) -- (420,138) ;
\draw [shift={(420,140)}, rotate = 270] [color={rgb, 255:red, 0; green, 0; blue, 0 }  ][line width=0.75]    (10.93,-4.9) .. controls (6.95,-2.3) and (3.31,-0.67) .. (0,0) .. controls (3.31,0.67) and (6.95,2.3) .. (10.93,4.9)   ;
\draw [shift={(420,10)}, rotate = 90] [color={rgb, 255:red, 0; green, 0; blue, 0 }  ][line width=0.75]    (10.93,-4.9) .. controls (6.95,-2.3) and (3.31,-0.67) .. (0,0) .. controls (3.31,0.67) and (6.95,2.3) .. (10.93,4.9)   ;

\draw (181,11) node [anchor=north west][inner sep=0.75pt]   [align=left] {\underline{Monotone Norms}};
\draw (91,122) node [anchor=north west][inner sep=0.75pt]   [align=left] {\underline{Submodular Norms}};
\draw (261,161) node [anchor=north west][inner sep=0.75pt]   [align=left] {\underline{Symmetric Norms}};
\draw (261,190) node [anchor=north west][inner sep=0.75pt]   [align=left] {$\displaystyle \ell _{p}$-norms};
\draw (261,221) node [anchor=north west][inner sep=0.75pt]   [align=left] {Ordered norms};
\draw (61,161) node [anchor=north west][inner sep=0.75pt]   [align=left] {Lovász extensions};
\draw (30,41) node [anchor=north west][inner sep=0.75pt]    {$\sqrt{n}$};
\draw (121,242.4) node [anchor=north west][inner sep=0.75pt]    {$n$};
\draw (395,222.4) node [anchor=north west][inner sep=0.75pt]    {$\approx \log n$};
\draw (59,190) node [anchor=north west][inner sep=0.75pt]   [align=left] {Sum of partial $\displaystyle \ell _{p}$-norms};
\draw (91,41) node [anchor=north west][inner sep=0.75pt]   [align=left] {Max of positive linear functionals};
\draw (427,52.4) node [anchor=north west][inner sep=0.75pt]    {$n$};

\end{tikzpicture}

    \caption{The containment relationships between monotone norms, submodular norms, and symmetric norms, along with some examples. The ``distances'' shown indicate the worst-case approximation factor (up to constants) for a norm in each outer class by a norm in the corresponding inner class.}
    \label{fig:norms}
\end{figure}
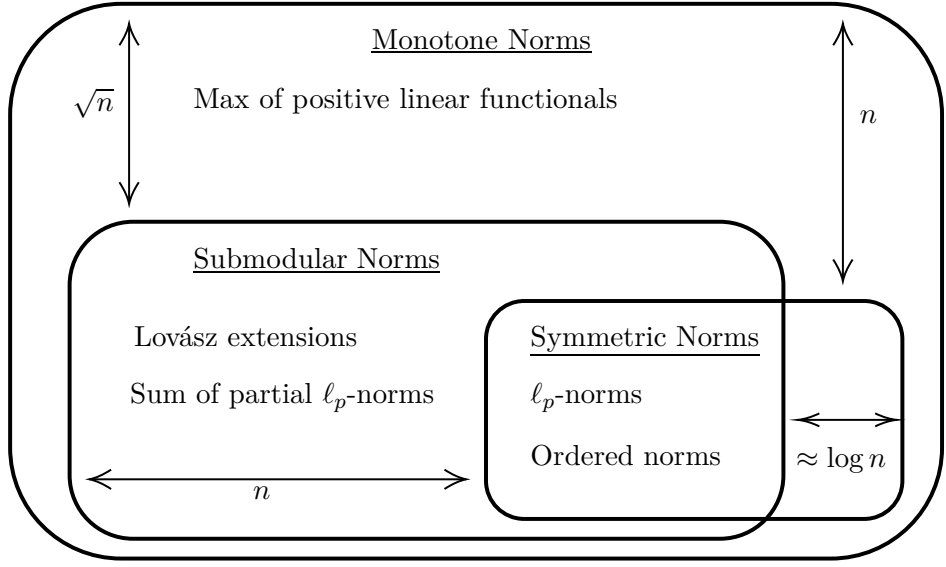

\subsection{Approximation of Symmetric Norms} \label{sec:approxSymmetric}
A major benefit of studying submodular norms is that they can approximate any symmetric norm. Indeed, previous works have noted that symmetric norms can be approximated by an ordered norm up to a factor of $O(\log n)$ \cite{ChakrabartyS19a,Kesselheim0023}. 
For our purposes, it will be useful to make this approximation  more precise by replacing $\log n$ with $\log \rho$, where parameter $\rho$ is defined as follows:
\begin{definition}
If $e_1, \dots, e_n$ denote the standard basis vector
and let $\ones_{\leq i} := \sum_{1\leq j \leq i} e_j$ denote the vector with $1$s at the first $i$ coordinates  and $0$ otherwise.
Then  for any monotone norm $\|\cdot\|$ we define the parameter 
    $$\rho_{\|\cdot\|} := \frac{\|\ones_{\leq n} \|}{\min_{i \in [n]} \|e_i\|}.$$
When the norm is clear from context, we  simply write $\rho = \rho_{\|\cdot\|}$.
\end{definition}
Notice that for symmetric norms, we have $\rho = \frac{\|\ones_{\leq n} \|}{\|e_1\|} \leq n$. One can think of $\rho$ for symmetric norms as a measure of how closely a norm behaves like $\|\cdot\|_1$ (large $\rho$) versus $\|\cdot\|_\infty$ (small $\rho$).

\begin{observation}
For $\ell_p$ norms, we have $\rho_{\|\cdot\|_p} = n^{1/p}$. For $\topk$ norms, we have $\rho_{\|\cdot\|_\topk} = k$.
\end{observation}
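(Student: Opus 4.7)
The plan is to verify both claims by direct computation from the definition of $\rho$. Since $\rho_{\|\cdot\|}$ is a ratio of two explicit quantities, namely $\|\ones_{\leq n}\|$ and $\min_{i \in [n]} \|e_i\|$, and both $\ell_p$ and $\topk$ norms admit closed-form evaluations on the all-ones vector and on coordinate basis vectors, the proof reduces to two arithmetic identities.

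First I would handle the $\ell_p$ case. By definition of the $\ell_p$ norm, $\|\ones_{\leq n}\|_p = \left(\sum_{i=1}^n 1^p\right)^{1/p} = n^{1/p}$, while $\|e_i\|_p = 1$ for every $i \in [n]$, so $\min_{i\in[n]} \|e_i\|_p = 1$. Dividing yields $\rho_{\|\cdot\|_p} = n^{1/p}$, as claimed.

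Next, for the $\topk$ norm, recall that $\|x\|_\topk$ equals the sum of the $k$ largest entries of $x$ (assuming $n \geq k$, which is the regime of interest). Applied to $\ones_{\leq n}$ this picks any $k$ of the $n$ equal entries, each equal to $1$, so $\|\ones_{\leq n}\|_\topk = k$. Applied to any standard basis vector $e_i$, it picks up the unique nonzero entry (since $k \geq 1$), giving $\|e_i\|_\topk = 1$. Therefore $\rho_{\|\cdot\|_\topk} = k/1 = k$.

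There is no genuine obstacle in this proof: both norms are defined by explicit formulas that are trivial to evaluate on the two specific inputs required by the definition of $\rho$. The only minor thing worth flagging for the reader is the convention $n \geq k \geq 1$, which is implicit in the definition of the $\topk$ norm on $\R^n_+$.
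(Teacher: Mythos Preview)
Your proposal is correct and is precisely the direct computation one would expect; the paper itself states this observation without proof, treating both identities as immediate from the definitions. Your brief verification that $\|\ones_{\leq n}\|_p = n^{1/p}$, $\|\ones_{\leq n}\|_\topk = k$, and $\|e_i\| = 1$ in both cases is all that is needed.
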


As we will see, the parameter $\rho$ appears again in both the upper and lower bound analysis in \cref{sec:online-fac} and \cref{sec:ofl-lower-bound}, making the improvement from $\log n$ to $\log \rho$ in \cref{lem:log-rho-approx} necessary for tight bounds in our applications.

\begin{lemma} \label{lem:log-rho-approx}
For any symmetric norm $\|\cdot\|$ with $\rho_{\|\cdot\|} = \rho$, there is an ordered norm $\|\cdot\|'$ such that $\|x\| \leq \|x\|' \leq 2(\log \rho + 1) \cdot \|x\|$.
\end{lemma}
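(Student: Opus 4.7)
The plan is to take $\|\cdot\|'$ as a non-negative combination of top-$k$ norms for indices $k$ spaced dyadically so that only $O(\log \rho)$ of them suffice. Since $\|\cdot\|$ is symmetric, I may assume $x = x^{\downarrow}$. Writing $b_k := \|\ones_{\leq k}\|$, I have $b_1 = \min_i \|e_i\|$, $b_n = \|\ones_{\leq n}\|$, and $\rho = b_n/b_1$. First I would establish the subadditivity relation $b_{k+1}/(k+1) \leq b_k/k$, which follows from the averaging identity $\ones_{\leq k+1} = \tfrac{1}{k}\sum_{i=1}^{k+1}(\ones_{\leq k+1} - e_i)$ together with the triangle inequality, since each summand is a coordinate permutation of $\ones_{\leq k}$. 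Second I would prove the key lemma $\tfrac{b_k}{k} \|x\|_{(k)} \leq \|x\|$, where $\|x\|_{(k)} := \sum_{i \leq k} x_i^{\downarrow}$: setting $\bar{x}_k := \|x\|_{(k)}/k$, the vector $\bar{x}_k \ones_{\leq k}$ is the average over permutations of the first $k$ coordinates of $(x_1, \dots, x_k, 0, \dots, 0)$, so convexity and symmetry of $\|\cdot\|$ give $\bar{x}_k b_k \leq \|(x_1, \dots, x_k, 0, \dots, 0)\| \leq \|x\|$.

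I would then choose breakpoints greedily: $k_0 := 1$ and $k_j := \min\{k > k_{j-1} : b_k \geq 2 b_{k_{j-1}}\}$, terminating at $k_J = n$ when no such $k$ exists. Since $b_{k_j} \geq 2^j b_1$ and $b_{k_J} \leq b_n = \rho b_1$, this yields $J \leq \log_2 \rho$. Define
\[
\|x\|' \;:=\; C \sum_{j=0}^{J} \frac{b_{k_j}}{k_j} \, \|x\|_{(k_j)}
\]
for a constant $C$ to be fixed. Each $\|x\|_{(k_j)}$ is an ordered norm with non-increasing weight vector $(\ind(i \leq k_j))_i$, and a conic combination of ordered norms is again ordered (the weights $\sum_{j : k_j \geq i} c_j$ are non-increasing in $i$), so $\|\cdot\|'$ is itself an ordered norm. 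The upper bound $\|x\|' \leq C(J+1) \|x\| \leq C(\log_2 \rho + 1) \|x\|$ follows immediately by applying the key lemma to each summand.

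For the lower bound $\|x\| \leq \|x\|'$, I would start from the triangle-inequality estimate
\[
\|x\| \;\leq\; \sum_{k=1}^{n} (x_k^{\downarrow} - x_{k+1}^{\downarrow}) \, b_k,
\]
obtained by decomposing $x = \sum_k (x_k^{\downarrow} - x_{k+1}^{\downarrow}) \ones_{\leq k}$. Partitioning the $k$-sum into dyadic blocks $(k_{j-1}, k_j]$ and using $b_k \leq b_{k_j}$ on each block, the $j$-th block contributes at most $b_{k_j} \cdot x_{k_{j-1}+1}^{\downarrow} \leq b_{k_j} \cdot \|x\|_{(k_{j-1})}/k_{j-1}$. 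By the definition of $k_j$ we have $b_{k_j - 1} < 2 b_{k_{j-1}}$, while subadditivity gives $b_{k_j} - b_{k_j - 1} \leq b_1 \leq b_{k_{j-1}}$, so $b_{k_j} \leq 3 b_{k_{j-1}}$. Re-indexing, this telescopes to $\|x\| \leq 4 \sum_j (b_{k_j}/k_j) \|x\|_{(k_j)}$, so that $C = 4$ secures the lower bound.

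The main obstacle is not the high-level approach but the constants: the dyadic accounting above gives $4(\log_2 \rho + 1)$, whereas the lemma claims the tighter $2(\log \rho + 1)$. Closing this gap would require either a sharper choice of breakpoints (e.g.\ doubling with respect to a slightly different quantity), a more careful summation-by-parts on the blocks, or replacing the dyadic cover entirely by the upper concave envelope $\tilde{b}$ of the sequence $b$. One can check that $b_k \leq \tilde{b}_k \leq 2 b_k$ holds from $b_k/k$ being non-increasing; taking $\|x\|'$ to be the Lovász extension of $\tilde{b}$ (i.e.\ $\sum_k (\tilde{b}_k - \tilde{b}_{k-1}) x_k^{\downarrow}$), which is automatically an ordered norm since $\tilde{b}$ is concave, should then yield the stated factor after bounding the Lovász extension against $\|x\|$ term-by-term via the key lemma.
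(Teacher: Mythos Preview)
Your approach is essentially the paper's --- in fact almost identical to the alternative proof in the appendix, which also takes $\|x\|' = 2\sum_j \frac{\|\ones_{\leq m_j}\|}{m_j}\,\|x\|_{(m_j)}$ at dyadic breakpoints $m_j$ and proves the key inequality $\frac{b_k}{k}\|x\|_{(k)}\le\|x\|$ (there via majorization and Schur-convexity; your averaging argument is equivalent). The only real discrepancy is the constant.

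You lose a factor of $2$ in the lower-bound step because you round $k$ \emph{up} to the next breakpoint (using $b_k\le b_{k_j}$ on the block $(k_{j-1},k_j]$) and then have to push $b_{k_j}$ back to $b_{k_{j-1}}$ at the cost of the extra factor $3$. The paper rounds \emph{down}: for the largest $j$ with $k_j\le k$ one has $b_k<2\,b_{k_j}$ directly from the definition of $k_{j+1}$, and hence
\[
\|x\|\;\le\;\sum_k \lambda_k b_k \;\le\; 2\sum_j b_{k_j}\!\!\sum_{k_j\le k<k_{j+1}}\!\!\lambda_k \;\le\; 2\sum_j b_{k_j}\,x_{k_j}^{\downarrow}\;\le\; 2\sum_j \frac{b_{k_j}}{k_j}\,\|x\|_{(k_j)}.
\]
So $C=2$ already works, and neither the $b_{k_j}\le 3b_{k_{j-1}}$ step nor the concave-envelope detour is needed.
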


\begin{proof}
Let $\|x\| = \max_{a \in \cA} \inprod{a, x^\downarrow}$. Without loss of generality, assume $\|e_1\| = 1$, so $ \|\ones_{\leq n}\| =\|(1,\dots, 1)\| = \rho$. 

Let $1 = m_0 \leq m_1 \leq \dots \leq m_{\floor{\log \rho}}$ be such that $m_j$ is the least integer with $\|\ones_{\leq m_j}\| \geq 2^j$. Let $a_0, \dots, a_{\floor{\log \rho}} \in \cA$ be such that $\|\ones_{\leq m_j}\| = \inprod{a_j, \ones_{\leq m_j}}$.

Now consider the ordered norm $\|x\|' := 2\inprod{a^*, x^\downarrow}$, where $a^* = \sum_j a_j$. Clearly, we have
$$
\frac{1}{2(\floor{\log \rho} + 1)}\|x\|' \leq \max_{j} \inprod{a_j, x^\downarrow} \leq \max_{a \in \cA} \inprod{a, x^\downarrow} = \|x\|.
$$
Additionally, notice that for any $x \in \R^n_+$, we can write $x^\downarrow = \sum_{k \in [n]} \lambda_k \ones_{\leq k}$ for some $\lambda_k \geq 0$. We have
$$
\|x\|' = 2\sum_j \sum_k \lambda_k \inprod{a_j, \ones_{\leq k}} 
\geq 2\sum_k \lambda_k \max_{j} \inprod{a_j, \ones_{\leq k}} 
\geq^{\dagger} \sum_k \lambda_k \|\ones_{\leq k}\| \geq \|x\|.
$$
Here, $\dagger$ follows from rounding each $k$ down to the nearest $m_j$ and using 
\[\inprod{a_j, \ones_{\leq k}} \geq \inprod{a_j, \ones_{\leq m_j}} = \|\ones_{\leq m_j}\| \geq \frac{1}{2}\|\ones_{\leq k}\|. \qedhere \]
\end{proof}

\begin{remark}
    The ordered norm approximation in \cref{lem:log-rho-approx} can also be obtained in polynomial time with only value oracle access to $\|\cdot \|$. To do this, we modify the above proof by instead writting $\|x\|' := \sum_j \inprod{b_j, x^\downarrow}$, where $b_j := \frac{\|\ones_{\leq m_j}\|}{m_j} \cdot \ones_{\leq m_j}$, since both $\|\ones_{\leq m_j}\|$ and $m_j$ may be obtained from a value oracle by binary search. The proof of the lemma with this modified norm can be found in \cref{sec:missingSubmod}.
\end{remark}

\paragraph*{Tightness of approximation}

In the worst case where $\rho = \Omega(n)$, the $\log n$ factor turns out to be nearly the best possible factor for approximation of a symmetric norm by a submodular norm. 
The following lemma shows that the construction in \Cref{lem:log-rho-approx} is tight up to $O(\log\!\log n)$ factors.
\begin{lemma}
    For any $\varepsilon \in (0, 1/2)$, define
    $$\|x\| := \max_{k \in [n]} k^{-\varepsilon} \cdot \inprod{ \ones_{\leq k}, x^\downarrow}.$$ 
    For any submodular norm $\|\cdot\|'$ such that $\|x\|' \geq \|x\|$ for all $x \in \R^n_+$, there exists $y \in \R^n_+$ such that $\|y\|' \geq C\frac{\varepsilon}{1-\varepsilon}(\log n)^{1-\varepsilon} \|y\|$. Taking $\varepsilon = \frac{1}{\log\log n}$ gives $\|y\|' \geq \Omega(\frac{\log n}{\log \log n})\|y\|$.
\end{lemma}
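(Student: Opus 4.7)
The plan is to proceed in three parts: symmetrize $\|\cdot\|'$, propose a dyadic witness, and lower bound $\|y\|'$ via duality.

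First, I would reduce to the case where $\|\cdot\|'$ is symmetric. Because $\|\cdot\|$ is already symmetric and continuous submodularity is preserved under convex combinations of coordinate-permuted norms, the averaged norm $\|\cdot\|^{\mathrm{sym}}(y) := \tfrac{1}{n!}\sum_\pi \|\pi y\|'$ is again a submodular norm dominating $\|\cdot\|$. Any witness $y$ for $\|\cdot\|^{\mathrm{sym}}$ transfers back: for some $\pi$, $\|\pi y\|' \ge \|y\|^{\mathrm{sym}}$, and $\|\pi y\|=\|y\|$ by symmetry of $\|\cdot\|$. Assuming $\|\cdot\|'$ symmetric, set $g(k):=\|\ones_{\leq k}\|'$; this is concave and monotone with $g(k)\ge k^{1-\varepsilon}$. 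A finer application of dominance to the vector $\ones_{\leq k-1} + t e_k$ at the phase-transition value $t\asymp \varepsilon$ (where the argmax $k$ in the definition of $\|\cdot\|$ jumps from $k-1$ to $k$) strengthens this to the marginal bound $\alpha_k := g(k)-g(k-1) \ge k^{-\varepsilon}$ for every $k$.

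Second, I would take the dyadic test vector
\[
y \;:=\; \sum_{j=0}^{\lfloor\log_2 n\rfloor} 2^{-j(1-\varepsilon)}\,\ones_{\leq 2^j}, \qquad \text{so that }\; y^\downarrow_i \asymp i^{-(1-\varepsilon)}.
\]
A direct calculation shows $\|y\|=\Theta(1/\varepsilon)$: every dyadic scale $k=2^j$ contributes essentially the same amount to the maximum $\max_k k^{-\varepsilon}\inprod{\ones_{\leq k},y^\downarrow}$, and the geometric sum over $j$ of these equal contributions is $\Theta(1/\varepsilon)$.

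Third, I would lower bound $\|y\|'$ via its dual representation $\|y\|'=\sup_{a\in B^*_{\|\cdot\|'}}\inprod{a,y}$. A convexity-plus-symmetry argument---averaging $x$ over permutations acting on its top $k$ coordinates---yields $\|x\|' \ge (g(k)/k)\inprod{\ones_{\leq k},x^\downarrow}$ for every $k$ and $x$, which says $(g(k)/k)\ones_S \in B^*_{\|\cdot\|'}$ for every $k$-subset $S$. Combining these dual generators across the $\lfloor\log_2 n\rfloor$ dyadic scales via a H\"older-type interpolation with conjugate exponent $p=1/(1-\varepsilon)$, one builds a dual vector $a$ with $a_i\asymp i^{-\varepsilon}$ whose appropriate $B^*_{\|\cdot\|'}$-scaling is of order $(\log n)^\varepsilon$ rather than $\log n$; pairing with $y$ gives $\inprod{a,y}\asymp H_n/(\log n)^\varepsilon \asymp (\log n)^{1-\varepsilon}$, which together with $\|y\|=\Theta(1/\varepsilon)$ yields the claimed ratio.

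The main obstacle is this last step: the naive uniform convex combination of the generators $(g(k)/k)\ones_S$ recovers only $\|y\|' \ge \|y\|$, so squeezing out the extra $(\log n)^{1-\varepsilon}$ factor requires the H\"older interpolation to genuinely use the marginal lower bound $\alpha_k \ge k^{-\varepsilon}$, not merely the corner lower bound $g(k) \ge k^{1-\varepsilon}$. The tight extremal case is $\|\cdot\|' = \|\cdot\|_{1/(1-\varepsilon)}$ (whose dual ball is the $\ell_{1/\varepsilon}$-ball), which makes the interpolation sharp and accounts for the appearance of the H\"older conjugacy factor $\varepsilon/(1-\varepsilon)$ in the final bound; specializing to $\varepsilon=1/\log\log n$ then recovers the advertised $\Omega(\log n/\log\log n)$ separation.
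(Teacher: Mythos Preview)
Your dual approach has a genuine gap at its two load-bearing steps. First, the marginal bound $\alpha_k = g(k)-g(k-1)\ge k^{-\varepsilon}$ is false in general. Take $\|\cdot\|'$ to be the $\topk$ norm with $k = m := \lceil n^{1-\varepsilon}\rceil$; this is a submodular symmetric norm, and one checks directly that $\|x\|_{\textsf{Top-}m}\ge\|x\|$ for all $x\in\R_+^n$ (for $k\le m$ trivially, and for $k>m$ using $\langle\ones_{\le k},x^\downarrow\rangle\le\tfrac{k}{m}\langle\ones_{\le m},x^\downarrow\rangle$). Yet $\alpha_k=0$ for every $k>m$. The ``phase-transition'' argument you sketch cannot work: convexity only gives the upper bound $\|\ones_{\le k-1}+te_k\|'\le g(k-1)+t\alpha_k$, and combining this with dominance would require an \emph{upper} bound on $g(k-1)$, which you do not have. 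Second, the ``H\"older-type interpolation'' on the dual side is unspecified. Knowing that $(g(k)/k)\ones_{\le k}\in B^*_{\|\cdot\|'}$ only places their convex hull in $B^*_{\|\cdot\|'}$, and you correctly observe that convex combinations recover merely $\|y\|'\ge\|y\|$. Submodularity of $\|\cdot\|'$ is a primal property with no clean dual translation that would let you ``interpolate'' beyond the convex hull; without a concrete construction of the claimed dual vector $a$, this step is a placeholder, not an argument.

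The paper's proof is entirely primal and does not go through $g(k)$ or duality at all. It picks the same $y$ (up to constants) with $y_k\asymp k^{-(1-\varepsilon)}$, symmetrizes as you do, and then proves inductively that $\|y_{\le j}\|'\ge \frac{(\log(j+1))^{1-\varepsilon}}{4(1-\varepsilon)}$. The inductive step exploits submodularity in the following way: form the vector $z$ by appending to $y_{\le j}$ a block of $\ell := \lceil (j+1)\log(j+1)\rceil$ fresh coordinates all equal to $y_{j+1}$. Symmetry plus submodularity give $\|y_{\le j+1}\|'-\|y_{\le j}\|'\ge \tfrac{1}{\ell}(\|z\|'-\|y_{\le j}\|')$, and then dominance is applied to $z$ itself: $\|z\|'\ge\|z\|\ge (j+\ell)^{1-\varepsilon}y_{j+1}\ge (\log(j+1))^{1-\varepsilon}$. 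The crucial point is that dominance is invoked on vectors with a long constant tail, \emph{not} on indicators $\ones_{\le k}$; this is exactly the information your approach discards by working only with $g(k)$ and its marginals.
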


\begin{proof}
    Let $y$ be defined by $y_k := \frac{k^{\varepsilon} - (k-1)^{\varepsilon}}{\varepsilon}$. A simple calculation yields  that 
    $$k^{-(1-\varepsilon)} \leq y_k \leq (k-1)^{-(1-\varepsilon)}.$$
    Additionally, we have
    $$
    \|y\| = \max_{k \in [n]} k^{-\varepsilon}\cdot\sum_{i=1}^k y_i = \max_{k \in [n]} k^{-\varepsilon} \cdot \frac{k^\varepsilon}{\varepsilon} = \frac{1}{\varepsilon}.
    $$

    Now to estimate $\|y\|'$, first note that we may assume $\|\cdot\|'$ to be symmetric, otherwise replace $\|\cdot\|'$ with its average over all permutations of inputs. We will inductively show that for $j \in [\nicefrac{n}{\log n}]$, we have $\|y_{\leq j}\|' \geq b_j := \frac{(\log (j+1))^{1-\varepsilon}}{4(1-\varepsilon)}$.

    For $j = 1$, we check 
    $$b_1 \leq \frac{1}{4(1-\varepsilon)} \leq \frac{1}{2} \leq 1 = \|y_{\leq 1}\| \leq \|y_{\leq 1}\|'.$$
    Now assume the claim holds for a given $j \geq 1$. Consider $z \in \R^d_+$ defined by
    $$
    z_i = \begin{cases}
        y_i & 1 \leq i \leq j,\\
        y_{j+1} & j < i \leq j + \ell,\\
        0 & j + \ell < i,
    \end{cases}
    $$
    where $\ell := \ceil{(j+1)\log(j+1)}$. Notice that by submodularity and symmetry, we have
    $$
    \|y_{\leq j+1}\|' \geq \|y_{\leq j}\|' + \frac{\|z\|' - \|y_{\leq j}\|'}{\ell} \geq b_j + \frac{\|z\|' - b_j}{\ell}.
    $$

    We see that
    $$
    \|z\|' \geq \|z\| \geq (j+\ell)^{-\varepsilon} \cdot (j+\ell) y_{j+1} \geq \left(\frac{j+\ell}{j+1}\right)^{1-\varepsilon} \geq \log(j+1)^{1-\varepsilon}.
    $$
    Thus, we have $\|z\|' - b_j \geq \frac{1}{2}\log (j+1)^{1- \varepsilon}$. Finally, we have
    $$
    \|y_{\leq j+1}\|' \geq b_j + \frac{\log(j+1)^{1-\varepsilon}}{2\ell} \geq b_j + \frac{\log(j+1)^{-\varepsilon}}{4(j+1)} \geq b_j + \int_{j+1}^{j+2}\frac{(\log x)^{-\varepsilon}}{4x}dx = b_{j+1}. \qedhere
    $$
    
\end{proof}

\subsection{Beyond Symmetric Norms}\label{sec:beyond-sym}

Given that submodular norms allow us to approximate symmetric norms up to $\log n$ factors, we may ask if other classes of norms can be similarly approximated.
We  note that there exist submodular norms that are an $\Omega(n)$ factor away from any symmetric norm, which suggests that symmetric norms are not the largest class of norms which are approximated by submodular norms. Indeed, sums of partial $\ell_p$ or $\topk$ norms, such as those considered in \cite{NagarajanS17}, are submodular but can be highly asymmetric. Thus, although we focus on approximating symmetric norms by submodular norms, it is likely that many asymmetric norms admit submodular approximations. However, we leave the problem of characterizing these norms for future work.

In the following lemmas, we adopt the notation $x_S$ for $x \in \R^n$ and $S \subseteq [n]$ to denote $x$ after zeroing out all entries except those at indices in $S$, as well as $\ones_S$ to denote the indicator vector of $S$.

\begin{lemma}
    There exists a submodular norm $\|\cdot\|'$ for which any symmetric norm $\|\cdot\|$ satisfying $\|x\| \leq \|x\|'$ for all $x \in \R^n_+$, also has $\|y\|' \geq \Omega(n) \cdot \|y\|$ for some $y \in \R^n_+$.
\end{lemma}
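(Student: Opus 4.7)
The plan is to construct an extremely asymmetric submodular norm and exploit the rigidity of symmetric norms to force a large ratio. Consider the weighted $\ell_1$-type norm
\[
\|x\|' := x_1 + \frac{1}{n}\sum_{i=2}^{n} x_i.
\]
This is a bona fide norm on $\R^n_+$, and since it is linear on the positive orthant it is in fact \emph{modular}: for all $x, y \in \R^n_+$,
\[
\|x \vee y\|' + \|x \wedge y\|' = \sum_i w_i\bigl(\max(x_i,y_i) + \min(x_i,y_i)\bigr) = \|x\|' + \|y\|',
\]
where $w_1 = 1$ and $w_i = 1/n$ for $i \geq 2$. In particular $\|\cdot\|'$ is continuously submodular, so it is a submodular norm. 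Equivalently, this is a sum of two partial $\ell_1$ norms with very unequal weights, matching the informal hint before the lemma.

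Next, suppose $\|\cdot\|$ is any symmetric norm on $\R^n_+$ with $\|x\| \leq \|x\|'$ for every $x \in \R^n_+$. Testing at the standard basis vectors $e_i$ for $i \geq 2$ yields $\|e_i\| \leq \|e_i\|' = 1/n$. By symmetry of $\|\cdot\|$, the value $\|e_i\|$ is independent of $i$, so in particular $\|e_1\| \leq 1/n$ as well. Now pick $y := e_1$. Then $\|y\|' = 1$ while $\|y\| = \|e_1\| \leq 1/n$, giving
\[
\frac{\|y\|'}{\|y\|} \;\geq\; n,
\]
which is the desired $\Omega(n)$ separation.

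There is no real obstacle here; the only design choice is to pick a submodular norm whose scaling across coordinates is maximally skewed. Linear functionals on the positive orthant are automatically modular (hence submodular), so any weighted $\ell_1$ norm with a single ``heavy'' coordinate works, and the symmetry constraint collapses the gap between heavy and light coordinates when one passes to any majorized symmetric norm. The vector $y = e_1$ then isolates the coordinate that symmetry cannot distinguish from the $n-1$ light ones, which is precisely what drives the linear-in-$n$ gap.
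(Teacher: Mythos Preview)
Your proof is correct, but it takes a different construction from the paper. The paper sets $A=\{1,\dots,n/2\}$, $B=\{n/2+1,\dots,n\}$ and uses $\|x\|' := \|x_A\|_\infty + \|x_B\|_1$; it then tests at $y=\ones_B$, using symmetry to transfer the bound $\|\ones_A\|\le\|\ones_A\|'=1$ over to $\|\ones_B\|$, while $\|\ones_B\|'=n/2$. Your argument replaces this with the weighted linear functional $\|x\|'=x_1+\tfrac1n\sum_{i\ge 2}x_i$ and tests at $y=e_1$, transferring the bound $\|e_2\|\le 1/n$ to $\|e_1\|$ by symmetry. Both proofs hinge on the identical mechanism: pick two ``types'' of coordinates on which $\|\cdot\|'$ behaves very differently, and use the permutation-invariance of $\|\cdot\|$ to force the small-side bound onto the large side. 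Your version is shorter and even gives a clean factor of $n$ rather than $n/2$. The one feature the paper's construction buys that yours does not is that $\|e_i\|'=1$ for \emph{every} $i$, so the asymmetry is not merely a coordinate rescaling of a symmetric norm; this is arguably a more striking witness to the gap between the submodular and symmetric classes, but it is not required by the lemma as stated.
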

\begin{proof}
    Let $A := \{1, \dots, \nicefrac{n}{2}\}$ and $B := \{\nicefrac{n}{2} + 1, \dots, n\}$. Define the norm $\|x\|':= \|x_A\|_\infty + \|x_B\|_1$. Notice that $\|\cdot\|'$ is a sum of partial $\ell_p$ norms, so it is submodular by \Cref{lem:propSubmodNorms}.
     Now suppose $\|\cdot\|$ is a symmetric norm with $\|x\| \leq \|x\|'$ for all $x \in \R^n_+$. Then $\|\ones_A\| \leq \|\ones_A\|' = 1$. However, we also have $\|\ones_B\| = \|\ones_A\|$ by symmetry, and $\|\ones_B\|' = \nicefrac{n}{2}$. Thus, taking $y = \ones_B$, we have our lemma.
\end{proof}
 
In the most general setting of monotone norms, however, submodular norms cannot give better than an $\Omega(\sqrt{n})$ approximation. The proof of this fact is similar to the canonical proof of the $\Omega(\sqrt{n})$ factor gap between submodular set functions and XOS set functions.

\begin{lemma}
    There exists a monotone norm $\|\cdot\|$ for which any submodular norm $\|\cdot\|'$ satisfying $\|x\| \leq \|x\|'$ for all $x \in \R^n_+$, also has $\|y\|' \geq \Omega(\sqrt{n}) \cdot \|y\|$ for some $y \in \R^n_+$.
\end{lemma}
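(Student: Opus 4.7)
The plan is a direct reduction to the canonical $\Omega(\sqrt{n})$ gap between monotone XOS set functions and monotone submodular set functions. I will start with a monotone XOS set function $f : 2^{[n]} \to \R_+$ realizing this gap, write it in XOS form $f(S) = \max_{i \in [m]} \inprod{\ba_i, \ones_S}$ for non-negative vectors $\ba_1, \dots, \ba_m \in \R_+^n$, and define the monotone norm
\[
    \|x\| := \max_{i \in [m]} \inprod{\ba_i, x},
\]
which fits the paper's definition as a supremum of non-negative linear functionals and agrees with $f$ on indicators, i.e.\ $\|\ones_S\| = f(S)$.

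Given any submodular norm $\|\cdot\|'$ with $\|x\| \leq \|x\|'$ for every $x \in \R_+^n$, I will set $g(S) := \|\ones_S\|'$. Monotonicity of $\|\cdot\|'$ yields monotonicity of $g$, while continuous submodularity of $\|\cdot\|'$ applied to the pair $\ones_A, \ones_B$ --- combined with the identities $\ones_A \vee \ones_B = \ones_{A\cup B}$ and $\ones_A \wedge \ones_B = \ones_{A\cap B}$ --- gives
\[
    g(A \cup B) + g(A \cap B) \;\leq\; g(A) + g(B),
\]
so $g$ is a monotone submodular set function dominating $f$. By the canonical set-function gap, there exists $S^* \subseteq [n]$ with $g(S^*) \geq \Omega(\sqrt{n}) \cdot f(S^*)$, and therefore $y := \ones_{S^*}$ witnesses $\|y\|'/\|y\| = g(S^*)/f(S^*) = \Omega(\sqrt{n})$, establishing the lemma.

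The main obstacle --- and essentially the only step requiring real work --- is instantiating the underlying $\Omega(\sqrt{n})$ gap between XOS and submodular set functions. Standard approaches either (i) take a random family of $m = 2^{\Theta(\sqrt{n})}$ subsets $S_1, \dots, S_m \subseteq [n]$ of size $\sqrt{n}$ and let $f(S) := \max_i |S \cap S_i|$, then argue by concentration and an averaging/symmetrization step that any submodular $g \geq f$ must be $\Omega(\sqrt{n})$ too large on some set; or (ii) use an explicit combinatorial design such as the lines of an affine plane with analogous intersection properties. Either construction, fed into the reduction above, immediately yields the norm-level $\Omega(\sqrt{n})$ bound; the only care needed is that no $\sqrt{n}$ factor is lost in translating across the XOS decomposition.
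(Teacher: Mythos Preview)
Your reduction is sound and matches the paper's own framing --- it explicitly prefaces the proof by saying it ``is similar to the canonical proof of the $\Omega(\sqrt{n})$ factor gap between submodular set functions and XOS set functions.'' Restricting a monotone norm to indicator vectors gives a monotone XOS set function, restricting a submodular norm gives a monotone submodular set function, and the witness $y$ can always be taken to be an indicator, so the norm-level statement and the set-function statement are equivalent.

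The weak point is the instantiation you sketch. The random family with $m = 2^{\Theta(\sqrt n)}$ size-$\sqrt n$ sets is the construction used for query-complexity and sketching lower bounds on submodular functions, not for the ``submodular dominating XOS'' gap you need here: with that many sets, $f(S)=\max_i |S\cap S_i|$ is already large on essentially every moderately sized $S$, which is the opposite of what you want --- you need a set where $f$ is small but every dominating submodular $g$ is forced to be large. Your symmetrization suggestion does not rescue this, since this $f$ is not permutation-invariant. The construction that actually delivers the gap, and the one the paper carries out in full, is the block partition: split $[n]$ into $\sqrt n$ blocks $B_1,\dots,B_{\sqrt n}$ of size $\sqrt n$, take $\|x\|=\max_k\sum_{i\in B_k}x_i$ (so $f(S)=\max_k|S\cap B_k|$), and build $y$ greedily one block at a time. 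Whenever $\|y\|'<\sqrt n/2$, adding all of the next block raises $\|\cdot\|'$ by at least $\sqrt n/2$, so by submodularity some single element of that block has marginal at least $\tfrac12$; after all $\sqrt n$ blocks one gets $\|y\|'=\Omega(\sqrt n)$ while $\|y\|=1$. Feed this $f$ into your reduction and the proof is complete.
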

\begin{proof}
    Partition $[n]$ into $\sqrt{n}$ blocks $B_1, \dots, B_{\sqrt{n}}$, each of size $\sqrt{n}$. Define the norm $\|x\| := \max_{k \in [\sqrt{n}]} \left( \sum_{i \in B_k} x_i \right)$. Now suppose $\|\cdot\|'$ is a submodular norm satisfying $\|x\| \leq \|x\|'$ for all $x \in \R_+^n$. We will construct $y$ by starting with the zero vector and iteratively choosing one element $i_k$ of each $B_k$ to activate (set $y_{i_k} = 1$). Clearly $\|y\| = 1$, so we just need to show $\|y\|' \geq \Omega(\sqrt n)$. 
    
    Formally, let $y \up 0 = 0$, and for each $k = 1, \dots,  \frac{\sqrt{n}} 2$, do the following. If $\|y\up{k-1}\|' \geq \frac{\sqrt{n}}2$, we are done and simply choose $y = y\up{k-1}$. Otherwise, notice that $\|y\up{k-1} + \ones_{B_k}\|' \geq \|y\up{k-1} + \ones_{B_k}\| = \sqrt{n}$, so $\|y\up{k-1} + \ones_{B_k}\|' - \|y\up{k-1}\| \geq \frac{\sqrt{n}}2$. By submodularity, there exists some $i_k \in B_k$ such that $\|y\up{k-1} + e_{i_k}\|' \geq \|y \up {k-1}\|' + \frac{1}{2\sqrt{n}}$, for which we set $y\up k := y\up {k-1} + e_{i_k}$ By induction, if we do not terminate early, we have $\|y \up {\sqrt{n}}\|' \geq \frac{n}{2\sqrt{n}} = \frac{\sqrt{n}}{2}$.
\end{proof}

\section{Online Facility Location with Submodular Norms} \label{sec:online-fac}

In this section, we illustrate how submodular norms can be applied to  Online Facility Location. Recall from 
\Cref{sec:introApplications}, in this problem we are given a 
metric space $(\cM, d)$  along with a cost function $f : \cM \to \R_+$. At each time step $i \in [n]$, an adversary produces a new request $x_i \in \cM$, and the algorithm decides whether to  assign $x_i$ to the closest open facility $F_{i-1}$ or to open a new facility $q$ and assign request $x_i$ to  $q$. The goal is to minimize the total facility opening costs plus a given norm $\|\cdot \|$ of the connection costs, i.e., $  \min \sum_{q \in F} f(q) + \|\bd\|,$ 
where $\bd  = (d_1, \dots, d_n) \in \R^{n}_+$ is the vector of connection costs $d_i := d(x_i, F_i)$.

In the case of uniform costs, $f(q) = f$ for all $q$, so the total facility opening cost becomes $f \cdot |F|$.  We use $\bd_{\leq i} = (d_1, \ldots, d_i, 0, \ldots,0)$ to denote the first $i$ coordinates of vector $\bd$.

\subsection{Uniform Costs}

For now, we will focus on the case when facility costs are uniformly $f$.

\begin{theorem}\label{thm:main-unif}
Let $\|\cdot\|$ be a submodular norm, and let $\rho := \rho_{\|\cdot\|}$. For the $\|\cdot\|$ norm online facility location problem with uniform facility costs $f$, there exists a randomized online algorithm that obtains cost at most $O(\log \rho) \cdot |F^*| f + O(1) \cdot \|\bd^*\|$, where $F^*$ and $\bd^*$ are the set of facilities and vector of assignment distances, respectively, given by the optimal offline algorithm.
\end{theorem}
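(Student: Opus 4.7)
I would follow a generalization of Meyerson's randomized algorithm. When the $i$-th request $x_i$ arrives, let $d_i = d(x_i, F_{i-1})$; the algorithm maintains an \emph{auxiliary} cost $\hat d_i \geq d_i$, computes the marginal
\[
\hat\delta_i \;:=\; \|(\hat d_1, \ldots, \hat d_{i-1}, \hat d_i, 0, \ldots, 0)\| - \|(\hat d_1, \ldots, \hat d_{i-1}, 0, \ldots, 0)\|,
\]
opens a new facility at $x_i$ with probability $\min(\hat\delta_i/f,\, 1)$, and otherwise assigns $x_i$ to its closest facility in $F_{i-1}$. Because the marginals telescope, the expected number of facilities opened is at most $\|\hat{\bd}\|/f$, so the goal splits into (i) showing $\mathbb{E}\|\hat{\bd}\| \le O(\log\rho)\cdot|F^*|\,f + O(1)\cdot\|\bd^*\|$, and (ii) showing $\|\bd\| \le O(\|\hat{\bd}\|)$. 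Submodularity of $\|\cdot\|$ enters in both parts: it forces $\hat\delta_i$ to be no smaller than a certain ``standalone'' lower bound proportional to $\|\hat d_i\,e_i\|$ (preventing under-opening when $x_i$ is isolated), and it guarantees that a large already-present coordinate damps future marginals (preventing over-opening in well-covered regions).

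\textbf{Analysis cluster by cluster.} I would charge expected cost against the optimal solution using the standard Meyerson template, partitioning the requests by the OPT facility $q^*\in F^*$ to which they are assigned. For a cluster $C^*_q$ centered at $q^*$, I sort requests by arrival time and split by the dyadic distance scale $2^k$ relative to a characteristic radius. At a given scale, either the algorithm already has a facility within that scale of $x_i$—in which case $d_i$ is small and the contribution to $\|\bd\|$ is charged to $\|\bd^*\|$—or it does not, in which case $\hat\delta_i$ is $\Omega(f/N)$ for a Meyerson-style counter $N$, so a constant-probability opening occurs within $O(\log\rho)$ requests at that scale. Summing the $O(\log\rho)$ scales per cluster and then over $|F^*|$ clusters gives the $O(\log\rho)\cdot|F^*|\,f$ facility budget, while the connection-cost bound $\|\bd\|\le O(\|\hat{\bd}\|)$ from submodularity, combined with $\|\hat{\bd}\|\le O(\|\bd^*\|)$ on assigned requests, yields the $O(1)\cdot\|\bd^*\|$ connection term.

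\textbf{Main obstacle.} The crux is choosing $\hat d_i$. As the introduction flags, the naive choice $\hat d_i=d_i$ already fails for $\ell_\infty$: once one coordinate of $\bd$ is large, every subsequent marginal collapses and the algorithm refuses to open facilities even when $x_i$ is far from $F_{i-1}$. The fix must define $\hat d_i$ using only the algorithm's own history, dominate $d_i$ pointwise, and simultaneously keep $\mathbb{E}\|\hat{\bd}\|$ within $O(\log\rho)\cdot|F^*|\,f + O(\|\bd^*\|)$. My intended construction is to maintain, for each opened algorithm facility $p \in F$, a ``service radius'' that only shrinks geometrically as new nearby requests arrive, and to set $\hat d_i$ to be the smallest such radius covering $x_i$ (or $d_i$ itself if no cover exists); submodularity then lower-bounds $\hat\delta_i$ by roughly $\|\hat d_i\,e_i\|$ whenever $x_i$ is uncovered. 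Proving that this envelope telescopes to the claimed bound while preserving $\|\bd\|\le O(\|\hat{\bd}\|)$, and that the per-cluster analysis loses only $O(\log\rho)$, is the main technical work.
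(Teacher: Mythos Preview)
Your high-level template (auxiliary costs $\hat d_i$, marginal $\hat\delta_i$, open with probability $\hat\delta_i/f$, analyze per optimal cluster and per dyadic ring) matches the paper. But the crucial step — the definition of $\hat d_i$ and the role submodularity plays — is where your plan diverges, and it contains a genuine error.

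\textbf{The false lower bound.} You write that ``submodularity then lower-bounds $\hat\delta_i$ by roughly $\|\hat d_i\,e_i\|$ whenever $x_i$ is uncovered.'' This is backwards: continuous submodularity only gives \emph{upper} bounds on marginals, never lower bounds. For $\ell_\infty$, once any earlier coordinate $\hat d_j$ exceeds $\hat d_i$, the marginal $\hat\delta_i$ is exactly zero regardless of how large $\hat d_i$ is or how ``uncovered'' $x_i$ is. Your service-radius construction, which can only \emph{increase} $\hat d_i$ relative to $d(x_i,F_{i-1})$, makes this worse, not better. Consequently the sentence ``$\hat\delta_i$ is $\Omega(f/N)$ \ldots so a constant-probability opening occurs within $O(\log\rho)$ requests'' has no support, and the long-distance phase of your analysis collapses.

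\textbf{How the paper actually handles this.} The paper's $\hat d_i$ is much simpler than service radii: it is just $d(x_i,F_{i-1})$, \emph{capped from above} so that $\delta_i \le f$. (In particular $\hat d_i \le d(x_i,F_{i-1})$, not $\ge$.) No lower bound on $\delta_i$ is ever needed. For long-distance demands in a ring, the paper uses a martingale/lottery stopping argument: at each step the expected cost paid is $2\delta_i$, and the process stops once a facility is built, so the total expected cost before stopping is at most $2f$ per ring — regardless of how small individual $\delta_i$ may be. Submodularity enters only in the short-distance phase, and only as an \emph{upper} bound: $\sum_{i\in \SD}\delta_i \le \|\hat\bd_{\SD}\|$, which is then controlled because $\hat d_i \le 3d_i^* + r$ for short-distance $i$. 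Your diagnosis that the $\ell_\infty$ failure is about \emph{refusing} to open is also inverted relative to the paper's motivating example (the star graph), where the naive algorithm opens a facility at \emph{every} leaf; the paper's fix is specifically designed to make the algorithm open \emph{fewer} facilities by inflating the history (using pre-opening distances rather than post-opening zeros), with the cap preventing any single huge $\hat d_j$ from zeroing all subsequent marginals.
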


Notice that because our algorithm obtains a constant factor approximation for the assignment costs, we have the following corollary.

\begin{corollary}
There is an $O(\log \rho)$-competitive algorithm for uniform costs online facility location with  symmetric  norms.
\end{corollary}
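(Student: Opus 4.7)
The plan is to generalize Meyerson's classic randomized algorithm \cite{Mey01}. At each round $i$, set $d_i := d(x_i, F_{i-1})$ and choose an \emph{auxiliary distance} $\hat d_i \geq d_i$ online; then open a new facility at $x_i$ with probability $\min\{1, \hat\delta_i / f\}$, where $\hat\delta_i := \|\hat{\bd}_{\leq i}\| - \|\hat{\bd}_{\leq i-1}\|$. The design rule for $\hat d_i$ I would use is a doubling-plateau: keep $\hat d_i$ equal to the current plateau value so long as $d_i$ stays within a constant factor of it, and double the plateau whenever $d_i$ exceeds it. This plateau idea is essential because the naive generalization (with $\hat d_i = d_i$) collapses for $\ell_\infty$-type norms: once a single large $d_i$ appears, all subsequent marginals $\delta_i$ vanish and the algorithm never reopens. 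By continuous submodularity, replacing $d_i$ by $\hat d_i \geq d_i$ only shrinks the marginals $\hat\delta_i$ relative to $\delta_i$, so the algorithm opens less aggressively overall, while the plateau rule keeps the marginals strictly positive for moderately spaced requests.

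The analysis then reduces, after telescoping and monotonicity, to bounding $\E[\|\hat{\bd}\|]$. Indeed, the expected facility cost is at most $f + \sum_i \E[\hat\delta_i] = f + \E[\|\hat{\bd}\|]$ (the leading $f$ covers the mandatory opening at $x_1$), while $\|\bd\| \leq \|\hat{\bd}\|$ coordinatewise by monotonicity. It therefore suffices to show $\E[\|\hat{\bd}\|] = O(\log\rho)\, |F^*| f + O(1)\,\|\bd^*\|$.

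To prove this bound, I would partition the requests by the OPT facility serving them into clusters $R^*_1, \ldots, R^*_{|F^*|}$. Fix a cluster $R$ served by $B^* \in F^*$ with $c_i := d(x_i, B^*)$, and let $\tau$ be the first round in $R$ at which the algorithm opens. By the triangle inequality, $d_i \leq c_i + c_\tau$ for every $i \in R$ with $i > \tau$, so the post-$\tau$ coordinates of $\hat{\bd}$ restricted to $R$ are dominated by $O(c_i + c_\tau)$. Since the clusters partition the coordinates, summing the post-$\tau$ portions gives a norm bound of $O(\|\bd^*\|)$ (using monotonicity together with the fact that $c_\tau$ is itself a coordinate of $\bd^*$). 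For the pre-$\tau$ portion, the doubling-plateau rule visits at most $O(\log \rho)$ distinct positive scales, and a Meyerson-style amortization shows the per-scale contribution to $\|\hat{\bd}\|$ is $O(f)$: once the cumulative marginal on a plateau reaches $\Theta(f)$, the algorithm has opened inside $R$ with constant probability. Hence the pre-$\tau$ contribution is $O(f \log \rho)$ per cluster and $O(f |F^*| \log \rho)$ in total.

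The main obstacle is designing the plateau rule so that the $O(\log \rho)$ factor is sharp, not the weaker $O(\log n)$ or $O(\log \mathrm{diam})$. Three subtleties stand out: (i) continuous submodularity must be invoked to argue that the cumulative marginal on a plateau only grows to $\Theta(f)$ after the algorithm has opened with constant probability inside the cluster---this is exactly where diminishing marginals are essential and where general monotone norms would fail; (ii) the number of plateau levels has to be bounded by $O(\log \rho)$ rather than $O(\log n)$ so that the algorithm is $O(1)$-competitive for $\ell_\infty$ (where $\rho=1$) while still matching Meyerson's $O(\log n)$ for $\ell_1$; and (iii) the post-$\tau$ aggregation across clusters must be carried out coordinatewise and closed by a single application of monotonicity, rather than summing cluster-wise norms, which would otherwise introduce an extra subadditivity loss.
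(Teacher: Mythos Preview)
Your proposal has a structural gap: the corollary is stated for \emph{symmetric} norms, but your entire argument (the plateau rule, the marginal analysis, point (i)) relies on continuous submodularity, which symmetric norms do not in general possess. The paper's proof of this corollary is a two-line reduction: first approximate the symmetric norm $\|\cdot\|$ by an ordered (hence submodular) norm $\|\cdot\|'$ via \Cref{lem:log-rho-approx}, losing an $O(\log\rho)$ factor; then invoke \Cref{thm:main-unif} for the submodular norm $\|\cdot\|'$. The $O(\log\rho)$ competitive ratio comes from the approximation loss on the assignment-cost term, not from the algorithm's analysis. You never perform this reduction, so your use of submodularity is unjustified.

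Even if one reads your proposal as an attempt at \Cref{thm:main-unif} (the submodular case), the algorithm is oriented in the wrong direction. You set $\hat d_i \geq d(x_i, F_{i-1})$ via a rising plateau so that marginals stay positive; the paper does the opposite, capping $\hat d_i$ \emph{down} from $d(x_i,F_{i-1})$ so that $\delta_i \leq f$, while still maintaining $\hat d_i \geq d(x_i,F_i)$. The failure of the naive algorithm on the $\ell_\infty$ star is that it opens at \emph{every} leaf (because after opening $d_i=0$, so each new $\delta_i=2$); your diagnosis that ``marginals vanish and the algorithm never reopens'' corresponds to a different convention, and your plateau fix does not help there either: for $\ell_\infty$, once the plateau value has appeared once in $\hat\bd$, all subsequent coordinates at that same plateau have marginal zero, so $\hat\delta_i=0$ regardless. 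Your claim that ``replacing $d_i$ by $\hat d_i\geq d_i$ only shrinks the marginals'' is also not a consequence of submodularity (increasing both the baseline and the new coordinate has no monotone effect on the marginal). Finally, the $O(\log\rho)$ bound in the paper arises from a ring decomposition of each optimal cluster with inner radius $r=\|\bd^*\|/\|\ones\|$ and outer radius $r\rho$; it is an artifact of the analysis, not of a plateau count in the algorithm. Your plateau levels are algorithmic and cannot depend on $\|\bd^*\|$, so bounding their number by $O(\log\rho)$ rather than $O(\log \mathrm{diam})$ needs a separate argument you have not supplied.
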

\begin{proof}
    Given the uniform cost facility location problem with a monotone symmetric norm $\|\cdot\|$, let $F^*$ and $\bd^*$ be the set of facilities and assignments distances given by the optimal offline algorithm. For our online algorithm, we will approximate $\|\cdot\|$ by a submodular norm $\|\cdot\|'$ using \Cref{lem:log-rho-approx}, and then run the algorithm in \Cref{thm:main-unif} on norm $\|\cdot\|'$. Since $\log \rho = \log \rho_{\|\cdot\|} = \Theta(\log \rho_{\|\cdot\|'})$, this algorithm will incur cost at most
    \[
    |F|f + \|\bd\| ~\leq~ |F|f + \|\bd\|' ~\leq~ O(\log \rho) |F^*|f + O(1) \|\bd^*\|' ~\leq~ O(\log \rho) |F^*|f + O(\log \rho) \|\bd^*\|. \qedhere
    \]
\end{proof}

\paragraph*{Proof outline for \Cref{thm:main-unif}} We want to generalize Meyerson's algorithm   beyond $\ell_1$ norms and use submodularity to complete the analysis. 
Meyerson's algorithm constructs a new facility at each demand point $x_i$ with probability ${d(x_i, F_{i-1})}/{f}$, thereby balancing the cost of assigning the  demand against the cost of constructing a new facility. To adapt this algorithm to more general norms, it is natural to construct a new facility at $x_i$ with probability ${\delta_i}/{f}$, where $\delta_i = \|(d_1, \dots, d_{i-1}, d(x_i, F_{i-1}), 0, \dots, 0)\| - \|\bd_{\leq i-1}\|$ is the  marginal cost of assigning $x_i$.
 
Unfortunately, the above natural generalization of Meyerson's algorithm can have an $\Omega(n)$ competitive ratio. For instance, consider the star graph $K_{1, n}$ equipped with the standard unweighted graph distance metric. Suppose that our construction costs are $f = 1$, our submodular norm is the $\ell_\infty$ norm, and the demand points are all the leaves of the star graph. The optimal solution constructs a single facility at the center, yielding a total cost of $1 + \|(1, \dots, 1)\|_\infty = 2$. On the other hand, the suggested algorithm   constructs a facility for every demand point, as $\delta_i = 2$ for each $i \geq 2$, incurring a total cost of $n$.

To get around this issue, 
we will define \emph{auxiliary assignment costs} $\hat{d_i}$ that upper bound the true costs $d_i$. The key modification to the algorithm is that we will use $\hat{d}_i$ instead of $d_i$ for calculating the marginals $\delta_i$. By overestimating the assignment costs that we have incurred, the algorithm underestimates potential marginal costs due to submodularity, making it more inclined to assign demand points instead of constructing new facilities. Moreover, the flexibility of the analysis allows us to show that the increased costs of $\hat{d}_i$ still obtain an $O(\log \rho)$ competitive ratio.

We now present the formal proof.

\begin{proof}[Proof of \Cref{thm:main-unif}]
To formalize the outline above, we will first inductively define our auxiliary cost vector $\hat \bd$ and the marginals $\delta_i$ by
\begin{gather*}
\hat d_i := \min\Big\{d(x_i, F_{i-1}), \;\;\max\{z\geq 0 :f \geq \|(\hat{d_1}, \dots, \hat d_{i-1}, z, 0, \dots, 0)\| -   \|\hat \bd_{\leq i-1}\|\}\Big\} \quad \text{and}\\
\delta_i := \|\hat \bd_{\leq i}\| -   \| \hat\bd_{\leq i-1} \|.
\end{gather*}
Thus, $\hat d_i$ is the assignment distance $d(x_i, F_{i-1})$ capped such that $\delta_i \leq f$. 
For our algorithm, we construct a facility at $x_i$ with probability ${\delta_i}/{f}$, and assign $x_i$ to the nearest facility otherwise. These are well-defined probabilities since $\delta_i \leq f$. To see the upper-bound $d_i \leq \hat d_i$, notice that if $\delta_i < f$, then $\hat d_i = d(x_i, F_{i-1}) \geq d(x_i, F_{i})= d_i$. If $\delta_i = f$, then $d_i = 0$ since a facility is constructed at $x_i$ with probability $1$, so $d_i = 0 \leq \hat d_i$.

Let $\cost(i) := f \cdot \ind_{|F_i| > |F_{i-1}|} + \delta_i$ be the marginal increase in auxiliary cost at step $i$, so $\sum_{i \in [n]} \cost(i) = f \cdot |F| + \|\hat \bd \|$. We will bound separately the cost of demand points that arrive before and after the first nearby facility is constructed. Similar to Meyerson's proof, we have the following bound on costs incurred before a facility is constructed in a given set.

\begin{restatable}{claim}{prefcost}\label{clm:pre-f-cost}
Let $A \subseteq [n]$ be a fixed set of indices, and let $S \subseteq A$ be the subset of indices that arrive before the first facility is constructed at any step in $A$. Then $\E [ \sum_{i \in S} \cost(i) ] \leq 2f$.
\end{restatable}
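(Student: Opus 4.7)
The plan is to generalize Meyerson's classical analysis, exploiting that our algorithm opens a facility at $x_i$ with probability exactly $\delta_i/f \in [0,1]$, where the cap baked into $\hat d_i$ guarantees $\delta_i \le f$. Submodularity of the norm plays no direct role in this claim; what matters is only monotonicity (so $\delta_i \ge 0$) and the cap (so $\delta_i/f$ is a bona fide probability).

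Let $T \in A \cup \{\infty\}$ denote the first index in $A$ at which a facility is constructed, and define the non-increasing indicator $Y_i := \ind_{T \ge i}$, so that $S = \{i \in A : Y_i = 1\}$. Splitting the two contributions to $\cost(i)$ gives
\[
\sum_{i \in S} \cost(i) \;=\; \sum_{i \in A} Y_i\, \delta_i \;+\; f \sum_{i \in A} Y_i \cdot \ind_{|F_i| > |F_{i-1}|},
\]
and I would bound the expectation of each of the two sums separately by $f$. For the second sum, note that as soon as a facility is opened at some $j \in A$ with $Y_j = 1$, we have $Y_i = 0$ for every $i > j$. Hence $\sum_{i \in A} Y_i \cdot \ind_{|F_i| > |F_{i-1}|} \in \{0,1\}$ deterministically, with expectation at most $1$.

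For the first sum, I would condition on the history $\mathcal{H}_i$ available just before the coin flip at step $i$ (after $x_i$ arrives). Both $Y_i$ and $\delta_i$ are $\mathcal{H}_i$-measurable, and for $i \in A$ the fresh coin at step $i$ opens a facility with probability $\delta_i/f$, so
\[
\E[Y_{i+1} \mid \mathcal{H}_i] \;=\; Y_i \bigl(1 - \delta_i/f\bigr),
\]
which rearranges to $\E[Y_i\, \delta_i] = f\bigl(\E[Y_i] - \E[Y_{i+1}]\bigr)$. For $i \notin A$, no $A$-relevant coin flip occurs at $i$, so $Y_{i+1} = Y_i$ and the differences $\E[Y_i] - \E[Y_{i+1}]$ vanish. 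Summing over all $i \in [n]$ the differences telescope, yielding
\[
\sum_{i \in A} \E[Y_i\, \delta_i] \;=\; f \bigl(\E[Y_1] - \E[Y_{n+1}]\bigr) \;=\; f \cdot \Pr[T < \infty] \;\le\; f.
\]
Combining the two bounds gives $\E\bigl[\sum_{i \in S} \cost(i)\bigr] \le 2f$, as desired.

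The main subtlety will be justifying the conditioning cleanly: one must verify that $\delta_i$ is $\mathcal{H}_i$-measurable and that the step-$i$ coin is independent of $\mathcal{H}_i$. Both follow immediately from the algorithm's definition, since $\hat d_i$ (and therefore $\delta_i$) is a deterministic function of $(x_1, \ldots, x_i, F_{i-1}, \hat d_1, \ldots, \hat d_{i-1})$, and each Bernoulli coin is drawn fresh. The definitional cap ensuring $\delta_i \le f$ is exactly what keeps $\delta_i/f$ a valid probability, which is the precise place where the auxiliary distances $\hat d_i$ (rather than the true $d_i$) are needed to keep Meyerson's coupling argument going for a general norm.
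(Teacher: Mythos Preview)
Your proof is correct and is essentially the same argument as the paper's, just presented through explicit indicator variables and a telescoping sum rather than the paper's lottery-game metaphor. The paper observes that paying $\delta_i$ for a $\delta_i/f$ chance at reward $f$ is a fair game, so the ``play until first win'' strategy has $\E[C]=\E[R]\le f$; your identity $\E[Y_i\,\delta_i]=f(\E[Y_i]-\E[Y_{i+1}])$ is exactly this fairness statement unpacked, and your bound on the opening-cost term is the paper's $\E[R]\le f$.
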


The proof of this claim is essentially identical to Meyerson's, so we will defer it to \Cref{sec:MissingUnif}.

Now, let us enumerate our offline algorithm's facility set as $F^* = \{c^*_1, \dots, c^*_K\}$, where $K = |F^*|$. Let $C_1^*, \dots, C_K^*$ be the offline clusters, i.e., $C^*_k$ is the set of $i \in [n]$ for which $x_i$ is assigned to $c^*_k$.

Let $r := \frac{\|\bd^*\|}{\|(1, \dots, 1)\|}$. We partition each cluster into rings as $C^*_k = \bigcup_{\ell = 0}^L C_k ^\ell$, where $L = \ceil{\log \rho}$,
\begin{align*}
    C_k^0 &:= \{i \in C_k^* : d(x_i, c^*_k) \leq r\}, \quad \text{and} \\
    C_k^\ell &:= \{i \in C_k^* : 2^{\ell-1}r \leq d(x_i, c^*_k) \leq 2^\ell r\} \quad \textnormal{for } \ell \in \{1, \ldots, L\}.
\end{align*}

Notice that this is a partition since
\[
\max_{i \in C_k^*} d(x_i, C_k^*) \leq \|\bd^*\|_\infty \leq \frac{\|\bd^*\|}{\min_i \|e_i\|} = r\rho \leq 2^L r.
\]

We will analyze the costs incurred by our algorithm on demand points in each ring. Within each ring, we will consider two types of demands separately: \emph{long-distance} demands $\LD_k^\ell$ and \emph{short-distance} demands $\SD_k^\ell$, defined as
\begin{gather*}
    \LD_k^\ell := \{i \in C_k^\ell : d(c^*_k, F_{i-1}) > 2^\ell r\} \quad \text{and} \quad 
    \SD_k^\ell := \{i \in C_k^\ell : d(c^*_k, F_{i-1}) \leq 2^\ell r\}.
\end{gather*}

In other words, long (respectively, short) distance demands arrive before (respectively, after) a facility has been constructed within the outer perimeter of its corresponding ring. We now make the following claims.

\begin{claim}\label{clm:unif-ld}
We have 
$$\E \Big[\sum_{\ell = 0}^L\sum_{k=1}^K\sum_{i \in \LD_k^\ell} \cost(i) \Big] \leq 2(L+1)Kf.$$
\end{claim}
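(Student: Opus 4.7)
The plan is to reduce Claim~\ref{clm:unif-ld} to Claim~\ref{clm:pre-f-cost} (\texttt{prefcost}) applied to carefully chosen subsets of indices, one for each pair $(k,\ell)$, and then sum.

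First, for each cluster $k \in [K]$ and ring index $\ell \in \{0,1,\dots,L\}$, I would define
\[
A_{k,\ell} \;:=\; \{i \in [n] \,:\, d(x_i, c^*_k) \leq 2^\ell r\},
\]
the set of demand indices whose locations lie within $2^\ell r$ of the offline center $c^*_k$. By definition of the rings, $C_k^\ell \subseteq A_{k,\ell}$ for every $\ell \geq 0$ (the ring $C_k^0$ has radius $r = 2^0 r$, and $C_k^\ell$ for $\ell \geq 1$ has outer radius $2^\ell r$). In particular, $\LD_k^\ell \subseteq A_{k,\ell}$.

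Next, I would observe that since our algorithm only opens facilities at demand locations, any facility $q \in F_{i-1}$ with $d(q, c^*_k) \leq 2^\ell r$ must have been opened at some previous step $j < i$ at the location $x_j$, which then forces $j \in A_{k,\ell}$. Consequently, for any $i \in \LD_k^\ell$, the condition $d(c^*_k, F_{i-1}) > 2^\ell r$ is equivalent to saying that no facility has been constructed at any step $j < i$ with $j \in A_{k,\ell}$. In other words, $\LD_k^\ell$ is contained in the subset $S_{k,\ell} \subseteq A_{k,\ell}$ of indices that arrive before the first facility is constructed at any step in $A_{k,\ell}$ (in the sense used by Claim~\ref{clm:pre-f-cost}).

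Applying Claim~\ref{clm:pre-f-cost} with $A = A_{k,\ell}$ and using the nonnegativity of $\cost(\cdot)$ then yields
\[
\E\!\left[\sum_{i \in \LD_k^\ell} \cost(i)\right] \;\leq\; \E\!\left[\sum_{i \in S_{k,\ell}} \cost(i)\right] \;\leq\; 2f.
\]
Summing over all $K(L+1)$ pairs $(k,\ell)$ and invoking linearity of expectation gives the desired bound $2(L+1)Kf$.

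The main thing to get right is step two: justifying that the event ``a facility is open within $2^\ell r$ of $c^*_k$'' coincides exactly with ``a facility has been constructed at some step in $A_{k,\ell}$''. This hinges on the fact that facilities live only at demand points, so the set $A_{k,\ell}$ is the correct ``trigger set'' to feed into Claim~\ref{clm:pre-f-cost}. Once that equivalence is in place, the rest is a direct application and a union bound over $(k,\ell)$.
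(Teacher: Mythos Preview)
Your proof is correct and follows essentially the same approach as the paper: apply Claim~\ref{clm:pre-f-cost} once per pair $(k,\ell)$, observe that $\LD_k^\ell$ is contained in the ``before-first-facility'' subset $S$, and sum. The only cosmetic difference is that the paper takes $A = C_k^\ell$ (the ring itself) rather than your larger ball $A_{k,\ell}$; your choice makes the equivalence with the condition $d(c_k^*, F_{i-1}) > 2^\ell r$ exact rather than just an inclusion, but either set works since only $\LD_k^\ell \subseteq S$ is needed.
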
 

\begin{claim}\label{clm:unif-sd}
We have
$$\E \Big[\sum_{\ell = 0}^L \sum_{k=1}^K\sum_{i \in \SD_k^\ell} \cost(i) \Big] \leq 8\|\bd^*\|.$$
\end{claim}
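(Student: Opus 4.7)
The plan is to first establish a pointwise bound $\sum_{i \in \SD}\delta_i \le 4\|\bd^*\|$ (for every realization of the algorithm's randomness), and then convert to expected cost using the identity $\E[\cost(i)\mid \text{past}] = 2\delta_i$. This identity holds because $\cost(i) = \delta_i + f\cdot \ind_{|F_i|>|F_{i-1}|}$ and, by the construction rule, a facility is opened at $x_i$ with probability $\delta_i/f$; hence $\E\bigl[\sum_{i \in \SD}\cost(i)\bigr] = 2\,\E\bigl[\sum_{i \in \SD}\delta_i\bigr] \le 8\|\bd^*\|$.

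For the pointwise bound, I would telescope the marginals using continuous submodularity. Let $\SD := \bigcup_{k,\ell}\SD_k^\ell$, let $\bd^{\SD}$ denote the vector with coordinates $\hat d_i$ at $i \in \SD$ and $0$ elsewhere, and define the partial sums $\tilde\bd_{\le i} := \sum_{j \in \SD,\, j \le i}\hat d_j e_j$. For $i \in \SD$, I apply submodularity to $A := \hat\bd_{\le i-1}$ and $B := \tilde\bd_{\le i}$. Since $A_i = 0$ and $A \ge \tilde\bd_{\le i-1}$ coordinatewise, one checks that $A \vee B = \hat\bd_{\le i}$ and $A \wedge B = \tilde\bd_{\le i-1}$, yielding
\[
\delta_i \;=\; \|\hat\bd_{\le i}\| - \|\hat\bd_{\le i-1}\| \;\le\; \|\tilde\bd_{\le i}\| - \|\tilde\bd_{\le i-1}\|.
\]
Summing over $i \in \SD$ telescopes to $\sum_{i \in \SD}\delta_i \le \|\bd^{\SD}\|$.

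To control $\|\bd^{\SD}\|$, I would apply the metric triangle inequality within each ring: for $i \in \SD_k^\ell$,
\[
\hat d_i \;\le\; d(x_i, F_{i-1}) \;\le\; d(x_i, c_k^*) + d(c_k^*, F_{i-1}) \;\le\; d_i^* + 2^\ell r,
\]
where $d_i^* := d(x_i, c_k^*)$. For $\ell \ge 1$ the ring definition forces $d_i^* \ge 2^{\ell-1}r$, so $\hat d_i \le 3 d_i^*$; for $\ell = 0$ we have $\hat d_i \le d_i^* + r$ directly. In both cases $\hat d_i \le 3 d_i^* + r$, so $\bd^{\SD} \le 3\bd^* + r\cdot\ones_{\le n}$ coordinatewise. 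Monotonicity and the triangle inequality of $\|\cdot\|$ then give
\[
\|\bd^{\SD}\| \;\le\; 3\|\bd^*\| + r\,\|\ones_{\le n}\| \;=\; 4\|\bd^*\|,
\]
using the definition $r = \|\bd^*\|/\|\ones_{\le n}\|$.

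The main obstacle is the submodularity telescoping step, since the partial vectors $\tilde\bd_{\le i}$ and $\hat\bd_{\le i-1}$ are both random and track different subsets of coordinates: one must verify pointwise (over the algorithm's realization) that the meet and join land exactly on $\tilde\bd_{\le i-1}$ and $\hat\bd_{\le i}$, which relies crucially on $A_i = 0$ at the ``newly added'' coordinate $i$. Once this identity is in place, the triangle-inequality bound inside each ring and the expected-cost conversion are routine, and the $\ell = 0$ ring is easily absorbed into the additive $r\cdot\ones_{\le n}$ term.
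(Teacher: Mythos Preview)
Your proposal is correct and follows essentially the same approach as the paper: both arguments use the identity $\E[\cost(i)\mid\text{past}]=2\delta_i$, telescope $\sum_{i\in\SD}\delta_i\le\|\hat\bd_{\SD}\|$ via continuous submodularity, bound $\hat d_i\le 3d_i^*+r$ within each ring by the triangle inequality, and finish with $\|\hat\bd_{\SD}\|\le 3\|\bd^*\|+r\|\ones_{\le n}\|=4\|\bd^*\|$. Your explicit verification that $A\vee B=\hat\bd_{\le i}$ and $A\wedge B=\tilde\bd_{\le i-1}$ is exactly what the paper's terser line $\|\hat\bd_{\le i}\|-\|\hat\bd_{\le i-1}\|\le\|\hat\bd_{\SD\cap[i]}\|-\|\hat\bd_{\SD\cap[i-1]}\|$ is invoking.
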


These claims together give \Cref{thm:main-unif}, since $L = O(\log \rho)$.

Notice that \Cref{clm:unif-ld} follows immediately from \Cref{clm:pre-f-cost} since long-distance demands must arrive before a facility is constructed   in $C \up j _\ell$. This implies $\E \left[\sum_{i \in \LD_k^\ell} \cost(i)\right]  \leq 2f$ for each $\ell$ and $k$. 

To show \Cref{clm:unif-sd}, let $\SD := \bigcup_{\ell=0}^L \bigcup_{k= 1}^K \SD_k^\ell$. We seek to show that $\sum_{i \in \SD} \cost(i) \leq 8\|\bd^*\|$. Notice that if $i \in \SD_k ^0$ for some $k$, then we have  $\hat d_i \leq d(x_i, F_{i-1}) \leq d(x_i, c^*_k) + d(c^*_k, F_{i-1}) \leq d^*_i + r$. Similarly, if $i \in \SD_k^\ell$ for some $k$ and $\ell \neq 0$, we have $\hat d_i \leq 3d^*_i$. Thus, we can say that $\hat d_i \leq 3d^*_i + r$ for all $i \in \SD$. This gives
\[
\E \Big[\sum_{i \in \SD} \cost(i)\Big] ~\leq~ \E \Big[\sum_{i \in \SD} 2 \delta_i\Big] ~\leq~ 2\E \Big[\|\hat \bd_{\SD}\|\Big] ~\leq~ 2\|r \ones_{\leq n} + 3\bd^*\| ~=~ 2r\|\ones_{\leq n}\| + 6\|\bd^*\| ~\leq~ 8\|\bd^*\|,
\]
where the second inequality comes from the submodular property, the third comes from norm monotonicity, and the last inequality is by choice of $r$. In particular, for the second inequality, we crucially use the submodularity of the norm to say
\[
\sum_{i \in \SD} \delta_i ~=~ \sum_{i \in \SD} \left( \|\hat \bd_{\leq i}\| - \|\hat \bd_{\leq i-1}\| \right) ~\leq~ \sum_{i \in \SD} \left( \|\hat \bd_{\SD \cap [i]}\| - \|\hat \bd_{\SD \cap [i-1]}\| \right) ~=~ \|\hat \bd_{\SD}\|. \qedhere
\]
\end{proof}

\subsection{Non-Uniform Costs} \label{sec:nonUnif}
In this section, we argue how one could develop the ideas of the previous section further to extend the algorithm to Online Facility Location with different opening costs across facilities. We will show how to modify Meyerson's algorithm for non-uniform costs in a similar manner to the uniform cost setting, but in a way that also handles new challenges that arise. We motivate and describe the new algorithm, but to avoid clutter, we defer most of the proof details to \Cref{sec:MissingNonUnif}.

\nonUnif*

Recall that previously, the algorithm's only choice at step $i$ was whether to construct a facility at $x_i$ or not. In the non-uniform setting, it might not be feasible to only ever construct at $x_i$, since the cost of opening there might be prohibitively high. Meanwhile a nearby location might have a much lower cost. Instead, the algorithm must consider all possible cost levels at which it could construct and how far facilities at different cost levels are from $x_i$.

First, let us recap how Meyerson's algorithm handles this in the $\ell_1$ norm setting. By losing at most a factor of $2$, we can assume that all opening costs are in some set $\{f_1, \dots, f_m\}$, where each $f_j$ is a power of 2. For each $j \in [m]$, we define the $W_i \up j$ to be the set of facilities which at step $i$ are open or have opening cost at most $f_j$:
$$
W_i\up j := F_{i-1} \cup \{x \in \cM: f(x) \leq f_j\}.
$$
Additionally, $W_i\up 0 := F_{i-1}$. Now Meyerson's algorithm \cite{Mey01} will, for each $j \in [m]$, open a facility at the nearest location in $W_j$ with probability $\frac{d(x_i, W_i\up {j-1}) - d(x_i, W_i \up j)}{f_j}$, capped at 1, then assign $x_i$ to the nearest open facility. As in the uniform case, we can see that the expected facility opening cost incurred is $d(x_i, W_i \up 0) = d(x_i, F_{i-1})$. This allows us to again consider a ``long-distance'' and ``short-distance'' phase within each ring of each optimal cluster.

To adapt this algorithm to the submodular norm setting, the idea will again be to consider construction probabilities, not based on the distances $d(x_i, W_i \up j)$, but instead based on the marginal increase in an auxiliary assignment cost $\|\hat \bd\|$. However, the definition of $\hat \bd$ is not as straightforward as  in the uniform cost setting. In particular, we need to satisfy $d_i \leq \hat d_i$ for each $i$ deterministically. However, without carefully controlling dependencies between facility construction (for instance, by constructing independently for each $j$), it may be possible that $d_i = d(x_i, F_{i-1})$ if no facilities are constructed. This would force $\hat d_i$ to be too large to serve as a useful upper bound.

To avoid this, we instead construct facilities by sampling a single cost level $f_j$ at which to build a facility. 
The probability of sampling $f_j$ is given by a version of the probabilities used above. 
Suppose these probabilities sum to a value greater than 1. In that case, we crucially limit the probabilities for smaller $f_j$ (corresponding to a facility at a greater distance) until the total probability is capped at 1: We name these probabilities $p_i\up{j}$'s (\Cref{def:cap}). This ensures that the true assignment distance $d_i$ is never too large, which allows us to pick a conveniently small upper bound $\hat d_i$.

\begin{algorithm}
\caption{Non-Uniform Submodular Online Facility Location}\label{alg:nunif-ofc}
\KwData{Metric space $(\cM, d)$ and online requests $x_1, \ldots, x_n$}
\KwResult{Opened facilities set $F$ and assignment of requests to facilities (at the time of arrival)}
\For{request $x_i$, $i = 1, \ldots, n$}{
    Sample $j \in \{0, \dots, m\}$ according to the distribution given by $p_i\up j$\;
    Assign $x_i$ to the nearest location in $W_i\up j$, constructing a facility there if necessary\;
}
\end{algorithm}

Our analysis proceeds by considering the partition of arrival indices into optimal clusters $[n] = \bigcup_{k \in [K]} C_k$, each with center $c^*_k$. Similarly to the uniform cost setting, we partition each cluster into rings as $C^*_k = \bigcup_{\ell = 0}^L C_k ^\ell$, where $L = \ceil{\log \rho}$ and
\begin{align*}
    C_k^0 &:= \{i \in C_k^* : d(x_i, c^*_k) \leq r\},\\
    C_k^\ell &:= \{i \in C_k^* : 2^{\ell-1}r \leq d(x_i, c^*_k) \leq 2^\ell r\}, \quad \textnormal{for } \ell \in [L].
\end{align*}

For each ring $C_k^\ell$, we divide the analysis into two main stages: the \emph{short-distance} and \emph{long-distance} stages. We leave the formal definition of these stages for the proof details, but intuitively, each demand is considered a long-distance demand until a facility is constructed at a distance which is a constant multiple of the radius of $C_k^\ell$. After this happens, subsequent demands in the ring are considered short-distance demands.
In each of these stages, the algorithm obtains the following.

\begin{restatable}[Short-Distance Stage]{lemma}{lemsd}\label{lem:sd}
    The expected cost incurred by the algorithm in the short-distance stage is
    \begin{align*}
        \E\left[\ALG_{\SD}\right] = \E \Big[\sum_{i \in \SD} \cost(i)\Big] \leq 36 \norm{\bd^*}.
    \end{align*}
\end{restatable}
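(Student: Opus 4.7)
The plan is to mirror the short-distance analysis from the uniform case (\Cref{clm:unif-sd}), adapting each of its three ingredients to handle non-uniform opening costs. First, submodularity of $\|\cdot\|$ still gives the telescoping bound $\sum_{i \in \SD} \delta_i \leq \|\hat \bd_{\SD}\|$, where $\delta_i$ is the marginal auxiliary cost at step $i$; this ingredient carries over verbatim from the uniform analysis.

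Second, I would bound $\hat d_i$ pointwise for short-distance demands. By definition of the short-distance stage (which begins once a facility has been constructed within a constant multiple of the ring radius $2^{\ell}r$ of the cluster center $c^*_k$), the triangle inequality gives
\[
\hat d_i \;\leq\; d(x_i, F_{i-1}) \;\leq\; d(x_i, c^*_k) + d(c^*_k, F_{i-1}) \;\leq\; 3 d^*_i + r,
\]
where $r := \|\bd^*\|/\|\ones_{\leq n}\|$ and $d^*_i := d(x_i, c^*_k)$. This is essentially the same bound as in the uniform case, with the constants controlled by the choice of ring radii.

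Third, and this is the new ingredient, I would bound the expected facility-opening cost at step $i$ by $O(\delta_i)$. In the non-uniform algorithm, the cost paid at step $i$ is $f_j$ with probability $p_i^{(j)}$, so the expected opening cost is $\sum_j f_j\, p_i^{(j)}$. By construction, each $f_j\, p_i^{(j)}$ should correspond to the marginal increase in auxiliary cost from making $W_i^{(j)}$ available (mirroring the balance $\delta_i/f$ in the uniform case), and the capping step (which shrinks probabilities for small $f_j$ first) is designed precisely so that the aggregate remains at most a constant multiple of $\delta_i$. Consequently $\E[\cost(i)] \leq C\,\delta_i$ for some absolute constant $C$. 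Combining the three ingredients with norm monotonicity,
\[
\E\Big[\sum_{i \in \SD}\cost(i)\Big] \;\leq\; C\cdot \E\big[\|\hat \bd_{\SD}\|\big] \;\leq\; C\cdot \|r\,\ones_{\leq n} + 3\,\bd^*\| \;\leq\; 4C \cdot \|\bd^*\|,
\]
using $r\cdot \|\ones_{\leq n}\| = \|\bd^*\|$ in the last step. Tracking the constants (with $C = 9$) yields the stated $36\|\bd^*\|$ bound.

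The main obstacle is the third ingredient: verifying that the capping step and the definition of $\hat d_i$ together imply $\sum_j f_j\, p_i^{(j)} \leq O(\delta_i)$. In Meyerson's original $\ell_1$ analysis the analogous identity holds exactly by a telescoping sum across cost levels, but the capping step---needed to keep the total construction probability at most $1$ and to ensure that $\hat d_i$ remains a valid upper bound on the true assignment distance $d_i$---breaks this exact equality. Showing that the definitions of $p_i^{(j)}$ and $\hat d_i$ are calibrated so that capping loses only a constant factor is the technical heart of the argument and the reason the non-uniform bound is $36\|\bd^*\|$ rather than the $8\|\bd^*\|$ obtained in the uniform case.
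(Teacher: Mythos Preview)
Your three-ingredient outline is structurally the same as the paper's proof, but you have misdiagnosed both where the difficulty lies and where the constant $36$ comes from.

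The ``main obstacle'' you flag in the third ingredient is not an obstacle. By the paper's definition $p_i^{(j)} := (\delta_i^{(j-1)} - \delta_i^{(j)})/f^{(j)}$, the expected opening cost at step $i$ is
\[
\sum_{j=1}^m f^{(j)} p_i^{(j)} \;=\; \sum_{j=1}^m \bigl(\delta_i^{(j-1)} - \delta_i^{(j)}\bigr) \;=\; \delta_i^{(0)} - \delta_i^{(m)} \;=\; \delta_i^{(0)},
\]
an exact telescoping sum. The capping by $\tau_i$ is already baked into $\hat d_i^{(j)}$ and hence into $\delta_i^{(j)}$, so it does not disturb this identity at all; it only serves to make $\sum_j p_i^{(j)} \leq 1$. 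Thus $\E[\cost(i)] \leq 2\delta_i^{(0)}$ exactly as in the uniform case, and your constant is $C=2$, not $C=9$.

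The factor $36$ instead arises in your second ingredient. In the non-uniform setting the paper defines $\SD_k^\ell := \{i \in C_k^\ell : \hat d_i^{(0)} \leq (\lambda+1)2^\ell r\}$ directly in terms of $\hat d_i$ (not in terms of a facility being built near $c_k^*$; note $\hat d_i$ can be small purely because of the cap $\tau_i$). This immediately gives $\hat d_i \leq (\lambda+1)r$ when $\ell=0$ and $\hat d_i \leq 2(\lambda+1)d_i^*$ when $\ell>0$. The parameter $\lambda=5$ is forced by the long-distance analysis (\Cref{lem:ld}), and plugging it in yields $2\cdot 3(\lambda+1)\|\bd^*\| = 36\|\bd^*\|$. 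Your bound $\hat d_i \leq 3d_i^* + r$ corresponds to $\lambda=1$ and would give only $8\|\bd^*\|$, which is too optimistic for the threshold actually needed.
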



\begin{restatable}[Long-Distance Stage]{lemma}{lemld}\label{lem:ld}
    The expected cost incurred  in the long-distance stage is
    \begin{align*}
        \E\Big[\ALG_{\LD}\Big] \leq 48 (\log \rho + 1) \cdot \sum_{k \in [K]} f(c^*_k).
    \end{align*}
\end{restatable}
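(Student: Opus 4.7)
The plan is to adapt the long-distance argument from the uniform case (\Cref{clm:unif-ld}) to accommodate the more involved sampling distribution $\{p_i^{(j)}\}$ of \Cref{alg:nunif-ofc}. Decomposing the long-distance cost across cluster-ring pairs,
\begin{align*}
\E\big[\ALG_{\LD}\big] \leq \sum_{k \in [K]} \sum_{\ell = 0}^{L} \E\Big[\sum_{i \in \LD_k^\ell} \cost(i)\Big],
\end{align*}
it suffices to establish a per-pair bound of the form $\E\big[\sum_{i \in \LD_k^\ell} \cost(i)\big] = O(f(c_k^*))$. Summing over the $(L+1)K = O((\log \rho + 1) K)$ pairs then gives the lemma.

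The core step is a non-uniform analogue of \Cref{clm:pre-f-cost}: for a fixed target location $c \in \cM$ and a fixed subset $A \subseteq [n]$, if $S \subseteq A$ is the subset of indices arriving before the algorithm opens a facility within some specified target radius of $c$ during a step of $A$, then $\E\big[\sum_{i \in S} \cost(i)\big] = O(f(c))$. Applying this with $c = c_k^*$, $A = C_k^\ell$, and target radius a constant multiple of $2^\ell r$ recovers exactly the long-distance demands of ring $C_k^\ell$ and yields the per-pair bound.

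To prove the key claim, let $j^*$ be the smallest index with $f_{j^*} \geq f(c_k^*)$; then $f_{j^*} \leq 2 f(c_k^*)$ by the power-of-two assumption on cost levels, and $c_k^* \in W_i^{(j^*)}$ for every $i$, which means any sample at level $j \geq j^*$ places a facility within distance $d(x_i, c_k^*) \leq 2^\ell r$ of $c_k^*$ and thereby ends the long-distance phase. Imitating the Meyerson-style telescoping, at each step $i \in S$ the phase ends with probability at least $p_i^{(j^*)}$, which is in turn essentially $\delta_i^{(j^*)}/f_{j^*}$. The survival probabilities $\Pr[i \in S]$ therefore decay geometrically in $\sum_{j \in A,\, j < i} \delta_j^{(j^*)}/f_{j^*}$, so the expected sum of auxiliary assignment marginals $\E\big[\sum_{i \in S} \delta_i\big]$ is $O(f_{j^*})$. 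The single facility opened upon termination contributes at most $f_{j^*}$ in expectation, giving the desired $O(f(c_k^*))$ bound.

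The principal obstacle is the probability capping in \Cref{def:cap}: when the uncapped sampling probabilities sum to more than $1$, the algorithm trims the smaller cost levels first, which can depress $p_i^{(j^*)}$ below the natural $\delta_i^{(j^*)}/f_{j^*}$. We plan to resolve this with a case split at each step $i \in S$. Either capping is triggered by some higher level $j > j^*$, in which case the aggregate probability of a near-$c_k^*$ construction at step $i$ is $\Omega(1)$ and the phase length is stochastically dominated by a geometric random variable; or the truncation affects only levels below $j^*$, leaving $p_i^{(j^*)}$ intact so the clean telescoping argument goes through. Continuous submodularity of $\|\cdot\|$ is invoked throughout to guarantee that the marginals $\delta_i^{(j^*)}$ remain a valid charging quantity after earlier facility openings at higher levels have already reshaped the state $\hat\bd_{\leq i-1}$.
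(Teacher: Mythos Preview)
Your decomposition over cluster--ring pairs and the reduction to a per-pair bound $\E\big[\sum_{i \in \LD_k^\ell}\cost(i)\big]=O(f(c_k^*))$ match the paper's structure. The genuine gap is in the per-pair argument: you track only the single ``target'' level $j^*$ and assert that the phase ends with probability essentially $\delta_i^{(j^*)}/f_{j^*}$, then conclude $\E\big[\sum_{i\in S}\delta_i^{(0)}\big]=O(f_{j^*})$. But by \Cref{def:cap} the sampling mass at level $j^*$ is $(\delta_i^{(j^*-1)}-\delta_i^{(j^*)})/f^{(j^*)}$, and more generally the probability of building at any level $\geq j^*$ scales with $\hat d_i^{(j^*-1)}/\hat d_i^{(0)}$, not with $1$. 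This ratio can be arbitrarily small during the long-distance phase: if a cheap (level $<j^*$) facility sits at distance, say, $10\cdot 2^\ell r$ from $c_k^*$ while the nearest open facility is much farther, then $\hat d_i^{(0)}\gg \hat d_i^{(j^*-1)}$, the algorithm overwhelmingly opens the cheap facility (which does \emph{not} terminate the long-distance phase for ring $\ell$), and the Meyerson martingale bounds only $\sum_i \delta_i^{(j^*-1)}$, not $\sum_i \delta_i^{(0)}=\frac12\sum_i\cost(i)$. Your capping case split does not address this; the obstruction is present even when no capping occurs.

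The paper's fix is a multi-level argument that your single-level claim cannot replace. It introduces a nested sequence of events $\cE_\ell^{(0)},\dots,\cE_\ell^{(j_k^*)}$ (construction within $\gamma_0^{(j)}+2^{\ell+1}r$ of $c_k^*$), bounds the level-$j$ cost before $\cE_\ell^{(j)}$ by $2f^{(j)}$ and sums geometrically to $4f^{(j_k^*)}$, and then proves the key geometric fact that for $i$ between $\cE_\ell^{(j)}$ and $\cE_\ell^{(j+1)}$ one has $(\delta_i^{(j)}-\delta_i^{(j_k^*)})/\delta_i^{(0)}\geq (\lambda-4)/(\lambda+1)$. This last inequality is exactly what converts a bound on the ``useful'' levels into a bound on the full $\delta_i^{(0)}$, and its proof crucially uses both that $\cE_\ell^{(j)}$ has already occurred (forcing $d(x_i,F_{i-1})\leq \gamma_i^{(j)}+3\cdot 2^\ell r$) and the long-distance hypothesis $\hat d_i^{(0)}>(\lambda+1)2^\ell r$. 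Your proposal is missing this bridge from level-$j^*$ quantities back to $\delta_i^{(0)}$, and there is no way to obtain it without tracking the lower-level events.
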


It is now easy to see that  \Cref{thm:non-unif} directly follows by combining \Cref{lem:sd} and \Cref{lem:ld}. The proofs of these lemmas, along with a more formal definition of the probabilities $p_i \up j$, are given in \cref{sec:MissingNonUnif}.


\section{Adaptivity Gaps for Stochastic Probing} \label{sec:stochProbing}

In this section,  we  use submodular norms to prove small adaptivity gaps for the stochastic probing problem.
Recall the   stochastic probing problem from \Cref{sec:introApplications}:  Given $n$ independent random variables $X = (X_1, \dots, X_n) \in \R_+$, a downward closed set family $\cF \subseteq 2^{[n]}$, and a monotone objective $f : \R^n_+ \to \R_+$, the stochastic probing problem $(X, \cF, f)$ is to open a feasible set $S \in \cF$ of variables  to maximize $f(X_S)$. 

Denote by $\adap(X, \cF, f)$ the maximum expected objective achievable by an adaptive algorithm, i.e., one which selects elements of $S$ one at a time, and may change its strategy based on its observations of the selected variables. We denote by $\na(X, \cF, f)$ the maximum expected objective by a non-adaptive algorithm, i.e., $\na(X, \cF, f) := \max_{S \in \cF} \E[f(X_S)]$.

\begin{theorem} \label{thm:adapGapSubmod}
If $f$ is a submodular norm, then $\adap(X, \cF, f)  \leq 2 \cdot \na(X, \cF, f)$.
\end{theorem}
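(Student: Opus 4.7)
The plan is to adapt the argument of \cite{BSZ-Random19} for the adaptivity gap of submodular set functions to the continuous setting of submodular norms. Let $\pi^*$ denote the optimal adaptive policy, and for each realization $X$ let $S(X) \in \cF$ be the (random) set it probes. I will construct a non-adaptive policy via a ``ghost sample'': draw an independent copy $X'$ of $X$, compute $S(X')$ by simulating $\pi^*$ on $X'$, and commit to probing $S(X')$ on the real instance. Since $S(X')$ is independent of $X$, this is a valid non-adaptive policy whose expected value is $\E_{X,X'}[f(X_{S(X')})]$, which by exchangeability of $(X, X')$ equals $\E_{X,X'}[f(X'_{S(X)})]$. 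Hence $\na(X,\cF,f) \geq \E_{X,X'}[f(X'_{S(X)})]$, so it suffices to establish the key inequality
\[
\E_X[f(X_{S(X)})] \;\leq\; 2 \cdot \E_{X, X'}[f(X'_{S(X)})].
\]

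To prove this inequality, I would mimic the BSZ analysis step-by-step, coupling $X$ and $X'$ and tracking the marginal gains along the execution trajectory of $\pi^*$. At each step $j$, when $\pi^*$ probes item $e_j$ based on past observations of $X$, the marginal gain $f(X_{S_j}) - f(X_{S_{j-1}})$ is compared to a corresponding ``ghost-side'' quantity using the continuous-submodularity analogue of the diminishing-returns inequality: for any $y \geq z$ with $y_i = z_i$ and any $t \geq 0$, one has $f(y + t e_i) - f(y) \leq f(z + t e_i) - f(z)$. A key observation is that the event $\{e_j \in S(X)\}$ depends on $X$ only through coordinates other than $e_j$ (since $\pi^*$ must commit to probing $e_j$ before observing $X_{e_j}$), so $X_{e_j}$ is independent of this event and is distributed identically to $X'_{e_j}$. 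Combining this independence with the continuous-submodularity inequalities applied at each step, and then summing, yields the desired factor of $2$.

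The main obstacle is translating each combinatorial inequality in the BSZ argument into its continuous counterpart without loss. In the set-function case the marginals are $\{0,1\}$-valued and telescope cleanly; in the norm case they are real-valued, and diminishing returns only applies at points that agree in the ``added'' coordinate. Verifying that every marginal arising from the coupling between the adaptive trajectory and the ghost sample satisfies this condition---so that continuous submodularity can legitimately be invoked---is the crux of the technical work. Once this translation is carried out, the overall structure of the BSZ proof carries through verbatim, yielding the claimed bound $\adap(X, \cF, f) \leq 2 \cdot \na(X, \cF, f)$.
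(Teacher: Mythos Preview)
Your proposal is correct and follows essentially the same approach as the paper. Both arguments couple the adaptive policy on $X$ with the same set-selection applied to an independent copy $X'$ (the paper calls it $Y$), reduce to the inequality $\E[f(X_{S(X)})]\le 2\,\E[f(X'_{S(X)})]$, and establish it by peeling off one probed coordinate at a time using the continuous diminishing-returns property of submodular norms; the paper packages this as induction on $n$ with the anchor $X_1\vee Y_1$ (exploiting that $f_x$ is submodular and decreasing in $x$), which is exactly the concrete form of the step-by-step marginal comparison you outline.
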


The following result for symmetric norms is an immediate corollary due to \cref{thm:symTosubmod}.

\adapGapSymmetric*

\begin{proof}[Proof of \cref{thm:adapGapSubmod}] We follow the same proof approach as in \cite{BSZ-Random19}. 
Consider an adaptive algorithm $\adap$, and a non-adaptive algorithm $\alg$ which selects each $S \in \cF$ with the same probabilities as $\adap$, only non-adaptively. We will show by induction on $n$ that $\adap$ achieves an expected objective at most twice that of $\alg$. This is trivially true for $n=1$, so we only need to show the inductive step.

We can compare the performance of these two algorithms by coupling their actions. Let's say $\adap$ runs on random variables $X = (X_1, \dots, X_n)$, and let $S \in \cF$ be the (random) set of adaptively chosen variables. We can say $\alg$ runs on variables $Y = (Y_1, \dots, Y_n)$, i.i.d. copies of $X$, by choosing the same set $S$ as $\adap$. Without loss of generality, say that $\adap$ starts by selecting $X_1$. Since $1 \in S$ deterministically, we have that $\adap$ achieves reward
\begin{align*}
    \E [f(X_S)] &= \E [f(X_1)]  + \E\big[f_{X_1}(X_{S \setminus \{1\}} )\mid X_1\big]\\
    &\leq \E [f(X_1 \vee Y_1)]  + \E\big[f_{X_1 \vee Y_1}(X_{S \setminus \{1\}} )\mid X_1\big]\\
    &\leq 2\E [f(Y_1)]  + \E\big[f_{X_1 \vee Y_1}(X_{S \setminus \{1\}} )\mid X_1, Y_1\big],
\end{align*}
where $f_{x}(Z) := f(x, Z) - f(x, 0)$ for $Z = (Z_2, \dots, Z_n) \in \R^{n-1}_+$. Notice that, since $f$ is submodular, we have $f_x$ is submodular and decreasing in $x$ for all $x \in \R_+$.

Now, notice that $\alg$ achieves reward
\begin{align*}
    \E [f(Y_S)] &= \E[f(Y_1)] + \E\big[f_{Y_1}(Y_{S \setminus \{1\}} )\mid X_1\big]\\
    &\geq \E [f(Y_1)] + \E\big[f_{X_1 \vee Y_1}(Y_{S \setminus \{1\}} )\mid X_1, Y_1\big].
\end{align*}

Notice that given $X_1$, the set $S \setminus \{1\} \in \cF |_{-1}$ is adaptively chosen among the variables $X_2, \dots, X_n$ by $\adap$. Thus, by induction we can say $\E\big[f_{X_1 \vee Y_1}(X_{S \setminus \{1\}} )\mid X_1, Y_1\big] \leq 2 \cdot \E\big[f_{X_1 \vee Y_1}(Y_{S \setminus \{1\}} )\mid X_1, Y_1\big]$. Combining this with the above inequalities gives
\[
\E [f(X_S)] \leq 2\E [f(Y_1)]  + 2\E\big[f_{X_1 \vee Y_1}(Y_{S \setminus \{1\}} )\mid X_1, Y_1\big] \leq 2 \E [f(Y_S)]. \qedhere
\]
\end{proof}

\section{Conclusion}
This paper introduces the concept of submodular norms and demonstrates their application in proving the efficiency of optimization problems beyond traditional $\ell_p$ objectives. We provide examples showcasing the utility of submodular norms in various scenarios. Specifically, we establish bounds on the competitive ratio of online facility location problems and the adaptivity gap of stochastic probing techniques when using symmetric norm objectives. These bounds crucially depend on the norm parameter $\rho$, and are approximately tight in the case of facility location. We also obtain an alternative algorithm for certain generalized load balancing settings using our techniques. There are several natural directions for future work:

(i) \emph{General Monotone Norms:} 
We have shown a logarithmic competitive ratio and adaptivity gap for online facility location and stochastic probing, respectively,  when the objective is a symmetric norm or approximately a submodular norm. However, it remains open whether poly-logarithm bounds exist for either problem when the norm can be an arbitrary monotone norm.

(ii) \emph{Symmetric Norm Stochastic Probing.}
The logarithmic factor we get in our adaptivity gap bound for symmetric norm stochastic probing comes from the loss in approximating a symmetric norm by a submodular norm. However, it is not clear if such a loss is necessary. It would be interesting to determine if the true adaptivity gap is sub-logarithmic or even a constant.

(iii) \emph{Parameter $\rho$.}
Similar to online facility location, there are other optimization problems (e.g., online fractional set cover) which are known to have differing performance guarantees for $\ell_1$ and $\ell_\infty$ objectives. We hypothesize that for such problems with symmetric norm objectives, the parameter $\rho$ could provide a way of interpolating between $\ell_1$ and $\ell_\infty$.

{\small
\newcommand{\etalchar}[1]{$^{#1}$}

}

\appendix

\section{Omitted Proofs from \Cref{sec:submod}} \label{sec:missingSubmod}

\subsection{Properties of Continuous Submodularity} \label{sec:contSubmod}
 
As with submodular set functions, there are many equivalent definitions for continuous submodularity which may be helpful in different settings. These are folklore properties, but we prove them for completeness.

\begin{lemma}\label{lem:submod-tfae}
    Let $f : \R^d_+ \to \R_+$. The following are equivalent.
    \begin{enumerate}
        \item $f$ is continuously submodular.
        \item For all $x, y, z \in \R^d_+$ with $\Supp(y) \cap \Supp(z) = \emptyset$, we have $$f(x) + f(x + y + z) \leq f(x + y) + f(x + z).$$
        \item \label{eq:contSubmodDR} For all $x,y \in \R^d_+$ with $x \leq w$, and $i \in [d]$ such that $x_i = w_i$, and $a \geq 0$, we have
        $$
        f(w + ae_i) - f(w) \leq f(x + ae_i) - f(x).
        $$
        \item For all $x \in \R^d_+$ and $a,b \geq 0$ and distinct $i,j \in [d]$, we have
        $$
        f(x) + f(x + ae_i + be_j) \leq f(x + ae_i) + f(x + be_j).
        $$
    \end{enumerate}
\end{lemma}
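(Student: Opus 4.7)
The plan is to prove the cyclic chain $1 \Rightarrow 2 \Rightarrow 4 \Rightarrow 3 \Rightarrow 1$. The two short edges of the cycle are direct substitutions. For $1 \Rightarrow 2$, given $x, y, z \in \R^d_+$ with $\Supp(y) \cap \Supp(z) = \emptyset$, set $u := x+y$ and $v := x+z$; the disjoint-support condition forces $u \wedge v = x$ and $u \vee v = x+y+z$ coordinatewise, so applying 1 to the pair $(u, v)$ is exactly 2. For $2 \Rightarrow 4$, specialize $y := ae_i$ and $z := be_j$, whose supports are disjoint whenever $i \neq j$.

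The first substantive step is $4 \Rightarrow 3$. Given $x \leq w$ with $x_i = w_i$, write $w - x = \sum_{j \neq i} c_j e_j$ with $c_j := w_j - x_j \geq 0$, fix any enumeration $j_1, \ldots, j_{d-1}$ of $[d] \setminus \{i\}$, and define $x\up{0} := x$ and $x\up{k} := x\up{k-1} + c_{j_k} e_{j_k}$, so that $x\up{d-1} = w$. Applying 4 at the point $x\up{k-1}$ with the distinct coordinates $i, j_k$ and increments $a, c_{j_k}$ gives
\[
f(x\up{k} + ae_i) - f(x\up{k}) \;\leq\; f(x\up{k-1} + ae_i) - f(x\up{k-1}),
\]
i.e., the $ae_i$-marginal is non-increasing along the chain. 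Telescoping from $k=0$ to $k=d-1$ yields 3.

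Finally, $3 \Rightarrow 1$: given $x, y \in \R^d_+$, let $I := \{i : x_i > y_i\}$, enumerate $I = \{i_1, \ldots, i_m\}$, and build two parallel chains $u\up{0} := x \wedge y$, $v\up{0} := y$, with $u\up{k} := u\up{k-1} + (x_{i_k} - y_{i_k}) e_{i_k}$ and $v\up{k} := v\up{k-1} + (x_{i_k} - y_{i_k}) e_{i_k}$. Then $u\up{m} = x$ and $v\up{m} = x \vee y$, and $u\up{k-1} \leq v\up{k-1}$ throughout. Since coordinate $i_k$ has not been touched before step $k$, one checks $u\up{k-1}_{i_k} = (x \wedge y)_{i_k} = y_{i_k} = v\up{k-1}_{i_k}$, so 3 applies and gives $f(u\up{k}) - f(u\up{k-1}) \geq f(v\up{k}) - f(v\up{k-1})$. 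Telescoping yields $f(x) - f(x \wedge y) \geq f(x \vee y) - f(y)$, which is 1. The main obstacle is the coordinate-agreement bookkeeping in this last step (the DR hypothesis is only available when $x_i = w_i$); this is precisely why the cycle routes through the two-coordinate condition 4, where pointwise agreement on the incremented coordinate is automatic.
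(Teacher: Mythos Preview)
Your proof is correct; every step checks out, including the coordinate-agreement bookkeeping in $3 \Rightarrow 1$. However, you route the implications differently from the paper, and it is worth seeing what each route buys.

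The paper proves $1 \Leftrightarrow 2$ directly in both directions (for $2 \Rightarrow 1$ it simply sets $x' := x \wedge y$, $y' := y_{\{i : y_i > x_i\}}$, $z' := x_{\{i : x_i > y_i\}}$ and observes that $y'$ and $z'$ have disjoint supports), then closes the loop via $2 \Rightarrow 3 \Rightarrow 4 \Rightarrow 2$. In particular, $2 \Rightarrow 3$ is a one-line substitution ($y := w - x$, $z := ae_i$; these have disjoint supports precisely because $x_i = w_i$), and $3 \Rightarrow 4$ is the specialization $w := x + be_j$. The only place the paper does real work is $4 \Rightarrow 2$, handled by a double telescoping sum over coordinates of $y$ and $z$. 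Your cycle $1 \Rightarrow 2 \Rightarrow 4 \Rightarrow 3 \Rightarrow 1$ trades that single double-telescoping step for two one-dimensional chain arguments ($4 \Rightarrow 3$ and $3 \Rightarrow 1$). The net effect is that your $4 \Rightarrow 3$ and $3 \Rightarrow 1$ are somewhat longer than the paper's one-line $2 \Rightarrow 3$ and $2 \Rightarrow 1$, but you entirely avoid the double-indexed sum in $4 \Rightarrow 2$. Both routes are clean; the paper's is arguably more economical because it exploits condition~2 as a hub, whereas you treat condition~3 as the workhorse. A minor quibble: in your $4 \Rightarrow 3$ step you call it ``telescoping,'' but it is really just iterating a monotonicity inequality along the chain; nothing actually cancels.
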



\begin{proof}[Proof of \Cref{lem:submod-tfae}]
($1 \iff 2$) Let $x, y, z \in \R^d_+$ with $y \perp z$. Notice that for non-negative vectors $y,z$, orthogonality implies that they have disjoint support. Hence, $(x + y) \vee (x + z) = x$ and $(x + y) \wedge (x + z) = x + y + z$. Then $2$ follows from the definition of continuous submodularity. Likewise, if $f$ satisfies condition $2$, then applying condition 2 with $x' := x \wedge y$, $y' := y_{\{i : y_i > x_i\}}$, and $z' := x_{\{i : x_i > y_i\}}$ gives continuous submodularity.

($2 \implies 3$) Simply take $y = w - x$ and $z = ae_i$.

($3 \implies 4$) Simply take $w = a + be_j$.

($4 \implies 2$) Let $x\up{i,j} := x + y_{< i} + z_{< j}$. Consider the sum
\begin{align*}
    & f(x + y) + f(x + z) -f(x) - f(x + y + z) \\
    = &\sum_{i,j \in [d]} \left[f(x\up{i,j} + y_ie_i) + f(z_i\up{i,j} + be_j)-f(x\up{i,j}) - f(x\up{i,j} + y_ie_i + z_je_j) \right].
\end{align*}
By condition 4, every term in the RHS sum is non-negative, so the LHS is non-negative as well.
\end{proof}

A commonly studied variant of continuous submodularity is  DR-submodularity \cite{BianB019, FeldmanK20, NiazadehRW18}: a function $f : \R^d_+ \to \R_+$ is \emph{DR-submodular} if it satisfies the stronger condition that for all $x, w \in \R^d_+$ with $x \leq w$, $i \in [d]$, and $a \geq 0$, we have
 $ f(w + ae_i) - f(w) \leq f(x + ae_i) - f(x).$ 
In other words, $f$ satisfies condition \ref{eq:contSubmodDR} of \Cref{lem:submod-tfae} even where $w_i \neq x_i$. However,   the only DR-submodular norm is the $\ell_1$-norm. 

\begin{lemma}
Any DR-submodular norm is equivalent to  $\ell_1$ up to rescaling the coordinates. 
\end{lemma}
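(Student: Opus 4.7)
The plan is to exploit the tension between convexity (which every norm has) and DR-submodularity to force the norm to be affine along each coordinate ray, and then conclude by induction on the support.

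First, I would fix a coordinate $i$ and a non-negative base point $x \in \R^d_+$, and study the univariate function $g(t) := \|x + te_i\|$ for $t \geq 0$. As a restriction of a norm to a line, $g$ is convex. On the other hand, applying DR-submodularity with $w := x + se_i$ (so $x \leq w$ and $w_i \geq x_i$ with no constraint of equality) yields
\begin{equation*}
g(s+a) - g(s) \;=\; \|w + ae_i\| - \|w\| \;\leq\; \|x + ae_i\| - \|x\| \;=\; g(a) - g(0),
\end{equation*}
for all $s,a \geq 0$. Equivalently $g(s+a) + g(0) \leq g(s) + g(a)$, which is exactly concavity of $g$. A function that is simultaneously convex and concave on $[0,\infty)$ is affine, so $g(t) = \|x\| + c_{x,i}\, t$ for some scalar $c_{x,i} \geq 0$.

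Next, I identify the slope. Dividing by $t$ and using positive homogeneity of the norm,
\begin{equation*}
\frac{\|x + te_i\|}{t} \;=\; \left\| \frac{x}{t} + e_i \right\| \;\xrightarrow[t \to \infty]{}\; \|e_i\|,
\end{equation*}
by continuity of the norm, so $c_{x,i} = \|e_i\|$ independently of $x$. Hence
\begin{equation*}
\|x + te_i\| \;=\; \|x\| + t\,\|e_i\| \qquad \text{for all } x \in \R^d_+,\ t \geq 0,\ i \in [d].
\end{equation*}

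Finally, I induct on the size of the support of $x$. Writing $x = \sum_{i \in [d]} x_i e_i$ and peeling off coordinates one at a time using the identity above gives
\begin{equation*}
\|x\| \;=\; \sum_{i \in [d]} x_i \|e_i\|,
\end{equation*}
which is exactly an $\ell_1$ norm after rescaling each coordinate by $\|e_i\|$. The main conceptual step is the first one, where one realizes that DR-submodularity applied to aligned increments $x \leq x + se_i$ is strong enough to demand concavity of $g$ along the $i$-th coordinate (continuous submodularity only gives this when $x_i = w_i$, which is why the conclusion fails for genuine submodular norms such as $\ell_p$ with $p > 1$). Everything after that is essentially bookkeeping via homogeneity and induction.
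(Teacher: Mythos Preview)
Your argument is correct and rests on the same core idea as the paper: along each coordinate ray the norm is convex, DR-submodularity forces decreasing increments, and homogeneity pins the slope at $\|e_i\|$. The paper packages this as a contradiction---assume some marginal $\|x_{\leq i}\|-\|x_{\leq i-1}\|$ falls short of $x_i$ by $\varepsilon$, iterate the increment $k$ times via DR-submodularity, divide by $k$, and let $k\to\infty$ to contradict $\|x_ie_i\|=x_i$---whereas you argue directly that $g$ is affine; the content is the same.

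One small quibble: the displayed inequality $g(s+a)+g(0)\le g(s)+g(a)$ is subadditivity of $g-g(0)$, not concavity. To get concavity proper, apply DR-submodularity to the pair $x+s'e_i \le x+se_i$ for arbitrary $0\le s'\le s$, which yields $g(s+a)-g(s)\le g(s'+a)-g(s')$, i.e.\ decreasing increments. (Alternatively, the inequality you wrote combined with convexity already forces $(g(t)-g(0))/t$ to be constant, so your conclusion stands either way.)
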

\begin{proof}
Suppose $\|\cdot\| : \R^d_+ \to \R$ is a DR-submodular norm with $\|e_i\| = 1$ for each $i \in [d]$. Clearly, $\|x\| \leq \sum x_i = \|x\|_1$ by triangle inequality. Suppose that $\|x\| < \|x\|_1$ for some $x \in \R^d_+$. Then for some $i \in [d]$, we have $\|x_{\leq i}\| - \|x_{\leq i-1}\| \leq x_i - \varepsilon$, where $\varepsilon > 0$. By DR submodularity, this means $\|x_{\leq i-1} + kx_i e_i\| \leq k(x_i - \varepsilon)$ for all $k \in \N$. However, with the continuity of norms, this gives
\[
x_i - \varepsilon \geq \lim_{k \to \infty} \frac{\|x_{\leq i-1} + kx_i e_i\|}{k} = \lim_{k \to \infty} \norm{\frac{1}{k} \cdot x_{\leq i-1} + x_i e_i} = \|x_i e_i \| = x_i,
\]
which is a contradiction, so we have $\|x\| = \|x\|_1$.
\end{proof}

\IGNORE{\color{red}
The $\log n$ factor is best possible for approximation by combinations of top-$k$ norms, also called \emph{ordered norms}. This name comes from the fact combinations of top-$k$ norms are exactly norms of the form $\|x\| = \inprod{a, x^\downarrow}$, where $a \in \R^n_{\geq 0}$ is a descending vector, and $x^\downarrow$ is the vector $x$ after sorting the elements in descending order. The follow lemma formalizes this.

\begin{lemma} \label{lem:symToOrdered}
There is a symmetric monotone norm $\|\cdot\|$ such that for any ordered norm $\|\cdot\|'$ with $\|x\| \leq \|x\|'$ for all $x \in \R^n$, there is a $y \in \R^n$ such that $\|y\|' = \Omega(\log n) \cdot \|y\|$.
\end{lemma}

\begin{proof}
Let $e_{\leq 1}, \dots, e_{\leq n} \in \R^n$ be the vectors where $e_{\leq k} = (1, \dots, 1, 0, \dots, 0)$ consists of $k$ ones followed by $n-k$ zeros. Let $R_k := \sum_{j=1}^k j^{-1/2}$. Define the set $\cA = \{a_1, \dots, a_n\}$, where $a_k = \frac{R_k}{k} \cdot e_{\leq k},$ and let $\|x\| := \max_{a \in \cA} \inprod{a, x^\downarrow}$. 

Let $\|x\|' = \inprod{a^*, x^\downarrow}$ for some $a^*$. If $\|x\| \leq \|x\|'$ for all $x \in \R^n$, then we must have that $a^*$ majorizes all $a \in \cA$ (obtained by the inequalities $\|e_{\leq k}\| \leq \|e_{\leq k}\|'$ for each $k$).

Now consider the decreasing vector $y = (1^{-1/2}, 2^{-1/2}, \dots, n^{-1/2})$. By construction, we have $\inprod{e_{\leq k}, y} = R_k$ for each $k \in n$, so 
$$\inprod{a_k, y} = \frac{R_k^2}{k} \leq 4.$$ Hence, $\|y\| \leq 4$.

However, since $a^*$ majorizes each $a_k$, we have $\inprod{a^*, e_{\leq k}} \geq \inprod{a_k, e_{\leq k}} = R_k = \inprod{y, e_{\leq k}}$. Thus, $a^*$ majorizes $y$, so 
$$
\|y\|' = \inprod{a^*, y} \geq \inprod{y, y} = \sum_{k=1}^n \frac{1}{k} = \Omega(\log n).
$$

\end{proof}
}


\subsection{Alternative Proof of \cref{lem:log-rho-approx}}
We reprove the lemma using the alternative norm definition 
$\|x\|' := 2\sum_j \inprod{b_j, x^\downarrow}$, where $b_j := \frac{\|\ones_{\leq m_j}\|}{m_j} \cdot \ones_{\leq m_j}$ for each 
$b_j \in \{0, \dots, \floor{\log \rho}\}$.

Let $a_j$ for each $j$ be defined as before. Notice that $a_j$ must majorize $b_j$, since $\inprod{a_j, \ones_{\leq m_j}} = \|\ones_{\leq m_j}\| = \inprod{b_j, \ones_{\leq m_j}}$, and the first $m_j$ coordinates of $b_j$ are identical. Schur-convexity then tells us that for any $x \in \R_+^n$, we have $\inprod{b_j, x^\downarrow} \leq \inprod{a_j, x^\downarrow}$. Thus we have
$$
\frac{1}{2(\floor{\log \rho} + 1)}\|x\|' \leq \max_{j} \inprod{b_j, x^\downarrow} \leq \max_{j} \inprod{a_j, x^\downarrow} \leq \max_{a \in \cA} \inprod{a, x^\downarrow} = \|x\|.
$$

Next, we write $x^\downarrow = \sum_{k \in [n]} \lambda_k \ones_{\leq k}$ for some $\lambda_k \geq 0$. Since $\inprod{b_j, \ones_{\leq m_j}} = \|\ones_{\leq m_j}\|$, just like in the original proof we have
$$
\|x\|' = 2\sum_j \sum_k \lambda_k \inprod{b_j, \ones_{\leq k}} 
\geq 2\sum_k \lambda_k \max_{j} \inprod{b_j, \ones_{\leq k}} 
\geq \sum_k \lambda_k \|\ones_{\leq k}\| \geq \|x\|.
$$

\section{Omitted Proofs from \Cref{sec:online-fac}} 

\subsection{Uniform Costs} \label{sec:MissingUnif}

\prefcost*

\begin{proof}
Consider the following game: For $i \in A$, a player is shown $\delta_{i}$, and has to option to pay a cost of $\delta_{i}$ to play a lottery, which has a $\frac{\delta_{i}}{f}$ chance of giving reward $f$. Since the expected reward of playing the lottery is exactly the cost, the player is indifferent to playing at each step. This means any strategy for the player has zero expected reward. In particular, the strategy of playing the lottery only until the first win has an expected reward of 0.

Let $R$ be the total lottery winnings of this strategy and $C$ be the total cost of playing. We have that $\E[R - C] = 0$, and since at most one lottery is won, $\E[R] \leq f$. Thus, $\E[R + C] = 2 \E[R] \leq 2f$. But $R + C$ has exactly the distribution of $\sum_{i \in S} \cost(i)$, which gives the desired result.
\end{proof}

\subsection{Non-uniform Costs} \label{sec:MissingNonUnif}

We now introduce the notation needed to prove that Algorithm~\ref{alg:nunif-ofc} shows  \Cref{thm:non-unif}.  Let us recall that $f : X \to \R_+$ is the cost function of opening a facility. First, we assume without loss of generality that $f(x)$ is a power of $2$ for each $x \in X$.\footnote{That is, by rounding costs down to powers of $2$, we lose only a factor of 2.} Let $f\up 1 \leq \dots \leq f\up m$ be the distinct cost levels, so $2 f\up i \leq f\up {i+1}$. Additionally, let $f\up 0 := 0$ for completeness. As before, let $F_i$ denote the set of facilities that have been opened after the arrival of $x_i$, before the arrival of $x_{i+1}$.

\begin{definition}\label{def:cap}
    For each step $i \in [n]$ and cost level $j \in \{0, \dots, m\}$, let us define

\begin{enumerate}
    \item $W_i \up j := F_{i-1} \cup \{x \in X: f(x) \leq f\up j\}$ to be the set of locations which are open or have an opening cost at most $f\up j$;
    \item $\hat d_i \up j := \min\{d(x_i, W_i \up j), \tau_i\}$ to be the capped value of $d_i \up j$ (where cap $\tau_i$ is defined in \Cref{pt5});
    \item $\hat d_i := \hat d_i \up 0 = \min\{d(x_i, F_{i-1}), \tau_i\}$ for simplicity.
    \item $\delta_i \up j := \frac{\hat d_i \up j}{\hat d_i \up 0}\left(\|\hat \bd_{\leq i} \up 0\| - \|\hat \bd_{\leq i-1} \up 0\|\right)$ to be the fraction of marginal increase in assignment cost we attribute to cost levels $\leq j$;
    \item $p_i \up j := \frac{\delta_i \up {j-1} - \delta_i \up j}{f\up j} 
    $ for $j \geq 1$ to be the assigned probability of opening a facility in $W_i\up j$, and $p_i \up 0 := 1 - \sum_{j=1}^m p_i \up j$;
    \item $\tau_i := \arg\max\{\tau \in \R_{\geq 0} \cup \{+\infty\} \mid \sum_{j=1}^m p_i \up j \leq 1\}$ to be the cap value, i.e., the largest nonnegative cap such that $\sum_{j=1}^m p_i \up j \leq 1$. This exists as each $p_i \up j$ is monotone decreasing in $\tau_i$. \label{pt5}
\end{enumerate}
\end{definition}

Given the above definitions, we notice that since $d_i \up j$ is decreasing in $j$, this means for some $j$ we have
    \begin{align*}
        0 = \hat d_i\up m \leq \hat d_i \up {m-1} \leq \dots \leq \hat d_i \up {j+1} \leq \tau_i = \hat d_i \up j = \dots = \hat d_i \up 0.
    \end{align*}
We shall prove the following theorem as discussed in \Cref{sec:nonUnif}.

\nonUnif*

To prove this theorem, we will separately the so-called short distance demands $\SD_k^\ell$ and long distance demands $\LD_k^\ell$ in each ring $C_k^\ell$. Formally, we define
\begin{align*}
    \LD_k^\ell &:= \{i \in C_k^\ell : \hat d_i\up 0 > (\lambda + 1)2^\ell r\},&
    \LD := \bigcup_{k = 1}^K \bigcup_{\ell = 0}^L \LD_k^\ell,\\
    \SD_k^\ell &:= \{i \in C_k^\ell : \hat d_i\up 0 \leq (\lambda + 1)2^\ell r\},&
    \SD := \bigcup_{k = 1}^K \bigcup_{\ell = 0}^L \SD_k^\ell.
\end{align*}

\subsubsection{Short-distance stage}

\lemsd*

\begin{proof}
    Let us fix a set $C_k^\ell$. If $\ell = 0$, then for all $i \in \SD_k^\ell$, we have that $\hat d_i \leq (\lambda + 1)r$. If $\ell > 0$, we still have that $\hat d_i \leq (\lambda + 1)2^\ell r \leq  2(\lambda + 1)d^*_i$. Summing up overall demands arriving in the short distance stage we have,
    \begin{align*}
        \E\left[\ALG_{\SD}\right] &\leq \sum_{i \in \SD} \E[\cost(i)] \leq \sum_{i \in \SD} 2\delta^{(0)}_i = \sum_{i \in \SD : d^*_i \leq r} 2\delta^{(0)}_i + \sum_{i \in \SD : d^*_i > r} 2\delta^{(0)}_i\\
        &\leq 2\cdot\|(\hat d_i)_{i \in \SD : d^*_i \leq r}\| + 2\cdot\|(\hat d_i)_{i \in \SD : d^*_i > r}\|\\
        &\leq 2(\lambda + 1)\cdot r \cdot \|(1 \ldots 1)\| + 4(\lambda + 1)\cdot\|\bd^*\|\\
        &\leq 6(\lambda + 1)\cdot\|\bd^*\|.
    \end{align*}
    Here, the second inequality comes from the fact that we need to account for the facility opening cost as well as the connection cost. Moreover, the third inequality holds by norm submodularity, the fourth by what was argued earlier on distances, and the last by definition of $r$. The lemma then follows from choosing $\lambda = 5$, which is needed for the proof of \cref{lem:ld}.
\end{proof}

\subsubsection{Long-distance stage} 

\lemld*

\begin{proof}
    Let us fix a cluster ring $C_k^\ell$, and let $j_k^*$ be defined such that $f(c_k^*) = f\up {j_k^*}$. Denote by $\gamma_i^{(j)} := d(c^*_k, W^{(j)}_i)$, the distance at step $i$ between the cluster center and a facility whose opening cost is at most $f^{(j)}$. We denote by $\cE^{(j)}_\ell$ the event that a facility is opened within a $\gamma_0^{(j)} + 2^{\ell+1} r$ distance from optimal center $c^*_k$. It is easy to see that such an event occurs whenever the algorithm constructs a facility of cost $f^{(j)}$ or higher for a demand in $C_k^\ell$. We now analyze the expected cost accumulated by the algorithm before, and after $\cE^{(j)}_\ell$ has occurred. We denote by $t^{(j)}_\ell$ the time of event $\cE^{(j)}_\ell$ occurrence.

    Before $\cE^{(j)}_\ell$ has occurred, we have $\sum_{i \leq t^{(j)}_\ell}\E[\cost^{(j)}(i)] \leq 2f^{(j)}$, by the same reasoning as \cref{clm:pre-f-cost}. Hence, we have that the total cost all levels $j \leq j_k^*$ before event $\cE^{(j)}_\ell$ is
    \begin{align*}
        \sum_{j \leq j^*_k} \sum_{\substack{i \in \LD_k^\ell \\ i \leq t^{(j)}_\ell}} \E[\cost^{(j)}(i)] \leq 2\sum_{j \leq j^*_k} f^{(j)} \leq 4f^{(j^*_k)}.
    \end{align*}
    We seek to demonstrate that these costs make up a constant fraction of all costs during the long-distance stage. Notice that by re-indexing, we can write
    $$
    \sum_{j \leq j^*_k} \sum_{\substack{i \in \LD_k^\ell \\ i < t^{(j)}_\ell}} \E[\cost^{(j)}(i)] = \sum_{j=0}^{j_k^*-1} \sum_{\substack{i \in \LD_k^\ell \\ t_\ell\up {j} < i \leq  t_\ell\up{j+1}}} \sum_{j'=j+1}^{j_k^*} \E[\cost^{(j')}(i)] = \E\sum_{j=0}^{j_k^*-1} \sum_{\substack{i \in \LD_k^\ell \\ t_\ell\up {j} < i \leq  t_\ell\up{j+1}}} 2(\delta_i \up j - \delta_i \up {j_k^*}),
    $$
    i.e., these are also the costs that occur in the range $\{j+1, \dots, j_k^*\}$, during each period between event $\cE^{(j)}_\ell$ and $\cE^{(j+1)}_\ell$. In particular, we will show for each term in the sum, $\nicefrac{(\delta^{(j)}_i - \delta_i\up {j_k^*})}{\delta^{(0)}_i} \geq \nicefrac{\lambda - 4}{\lambda + 1}$, so these costs comprise a constant fraction of the total expected cost $2 \delta_i \up 0$ at each step $i$. 
    
    We start with the simple observation that
    \begin{align*}
        \frac{\delta^{(j)}_i - \delta_i \up {j_k^*}}{\delta^{(0)}_i} = \frac{\hat d^{(j)}_i - \hat d_i\up {j_k^*}}{\hat d^{(0)}_i} \geq \frac{d(x_i, W_i\up j)}{d(x_i, F_{i-1})} - \frac{d(x_i, c^*_k)}{\hat d_i \up 0},
    \end{align*}
    by definition and subadditivity. We now proceed with bounding each term. Since $\cE^{(j)}_\ell$ has occurred, but we are still in the long distance stage, we have
    \begin{align*}
        (\lambda + 1)2^\ell r < \hat d_i \up 0 \leq d(x_i, F_{i-1}) \leq d(x_i, c^*_k) + d(c^*_k, F_{i-1}) \leq \gamma_i^{(j)} + 3\cdot 2^\ell r. 
    \end{align*}
    This implies both $d(x_i, F_{i-1}) \leq \gamma_i^{(j)} + 3\cdot 2^\ell r$ and $\gamma_i \up j \geq (\lambda - 2)2^\ell r$. Additionally, we have $d(x_i, c_k^*) \leq 2^\ell r$, and by triangle inequality, we have that $\gamma_i^{(j)} \leq d(x_i, W_i\up j) + 2^{\ell} r$. 
    
    Altogether, we get
    \begin{align*}
        \frac{d(x_i, W_i\up j)}{d(x_i, F_{i-1})} - \frac{d(x_i, c^*_k)}{\hat d_i \up 0}\geq \frac{\gamma_i^{(j)} - 2^\ell r}{\gamma_i^{(j)} + 3\cdot 2^{\ell+1} r} - \frac{2^\ell r}{(\lambda + 1)2^\ell r}\geq \frac{\lambda - 4}{\lambda + 1},
    \end{align*}
    as desired. Thus, the total cost of points in $\LD_k ^\ell$ is bounded as follows:
    \begin{align*}
        \sum_{i \in \LD_k^\ell} \E [\cost (i)] ~=~
        \sum_{j=0}^{j_k^*-1} \sum_{\substack{i \in \LD_k^\ell \\ t_\ell\up {j} < i \leq  t_\ell\up{j+1}}} \E [\cost (i)] 
        &~=~\E\sum_{j=0}^{j_k^*-1} \sum_{\substack{i \in \LD_k^\ell \\ t_\ell\up {j} < i \leq  t_\ell\up{j+1}}} 2\delta_i \up 0\\
        &\leq 2 \frac{\lambda + 1}{\lambda - 4} \cdot \E\sum_{j=0}^{j_k^*-1} \sum_{\substack{i \in \LD_k^\ell \\ t_\ell\up {j} < i \leq  t_\ell\up{j+1}}} (\delta_i \up j - \delta_i \up {j_k^*})\\
        &\leq 8 \frac{\lambda + 1}{\lambda - 4} f\up{j^*_k}.
    \end{align*}
    We now sum across all concentric rings across all $K$ optimal clusters, to obtain that 
    \begin{align*}
        \sum_{i \in \LD} \E [\cost(i)] = \sum_{k = 1}^K \sum_{\ell = 0}^L \sum_{i \in \LD_k^\ell} \E[\cost(i)] \leq 8L \left(\frac{\lambda + 1}{\lambda - 4}\right) \cdot \sum_{k \in [K]} f^{(j^*_k)},
    \end{align*}
    and the claim follows from choosing $\lambda = 5$.
\end{proof}

\subsection{Lower Bound}\label{sec:ofl-lower-bound}
\begin{theorem}\label{thm:ofl-lower-bound}
    For any monotone norm $\|\cdot\|$, there exists a uniform-cost OFL problem with norm $\|\cdot\|$ such that any online algorithm only achieves $\Omega\left(\frac{\log \sigma}{\log \log \sigma}\right)$ competitive ratio, where $\sigma = \frac{\|\ones_{\leq n}\|}{\max_i \|e_i\|}$. Notice that for symmetric norms, $\sigma= \rho$.
\end{theorem}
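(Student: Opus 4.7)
The plan is to adapt Fotakis's $\Omega(\log N / \log\log N)$ lower bound for the $\ell_1$ online facility location problem to a general monotone norm $\|\cdot\|$, by choosing the number of demands $N$ in terms of $\sigma$ rather than the ambient dimension $n$. After permuting coordinates so that $\|e_1\| = \max_i\|e_i\|$, we set the uniform opening cost to $f := \|e_1\|$ and embed a Fotakis-style hierarchical metric of depth $L = \Theta(\log\sigma/\log\log\sigma)$ with branching factor $b = \Theta(\log\sigma)$ into the first $N \le n$ coordinates. The edge lengths at level $\ell$ are scaled by $r_\ell := r \cdot b^{-\ell}$, where $r$ is a normalization constant chosen so that $r \cdot \|\ones_{\leq N}\| = \Theta(f)$; since $\|\ones_{\leq n}\|/\max_i\|e_i\| = \sigma$, such an $N \le n$ always exists in the non-trivial regime $\sigma = \omega(1)$.

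To lower bound the competitive ratio we invoke Yao's principle: the adversary samples a uniformly random stopping level $\ell^* \in [L]$ and, for each $\ell \le \ell^*$, reveals a batch $B_\ell$ of $b$ fresh demands inside a child (selected adversarially after observing the online algorithm's past moves) of the current live cluster. Conditioned on $\ell^*$, the offline optimum opens a single facility at the centroid of the live level-$\ell^*$ cluster: the facility cost is $f$, and the assignment vector $\bd^*$ has entries at most $r$ supported on at most $N$ coordinates, so by monotonicity and the normalization, $\|\bd^*\| \leq r \cdot \|\ones_{\leq N}\| = O(f)$. Hence $\OPT = O(f)$ deterministically. For any online algorithm facing the random $\ell^*$, either (i) the algorithm opens fresh facilities inside the live cluster at $\Omega(L)$ distinct levels in expectation, paying $\Omega(L \cdot f)$ in opening cost, or (ii) at some level $\ell$ an entire batch $B_\ell$ is routed to a facility at scale $\Omega(r_{\ell-1}) = \Omega(b \cdot r_\ell)$, in which case aggregating across levels via norm monotonicity produces an assignment-vector lower bound that matches $\Omega(L \cdot f)$. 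Either way, $\E[\ALG] = \Omega(L) \cdot \OPT = \Omega(\log\sigma/\log\log\sigma) \cdot \OPT$. Finally, $\sigma = \rho$ for symmetric norms is immediate from the definitions.

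The main obstacle is the analogue of Fotakis's per-level packing argument under a general monotone norm: in the $\ell_1$ case, bad-case assignment contributions at different levels simply add, while here they must be combined through $\|\cdot\|$. The key balance is between two constraints on $(N, L, b, r_\ell)$, both of which rely only on monotonicity and the scalar identity $\|\ones_{\leq N}\| \le \sigma \cdot \max_i\|e_i\|$: OPT's assignment vector must remain $O(f)$ at every possible stopping level, while the bad-case online assignment vector $\sum_\ell r_{\ell-1} \cdot \ones_{B_\ell}$ must have norm $\Omega(L \cdot f)$. These two constraints simultaneously pin down the maximal productive depth at $L = \Theta(\log\sigma/\log\log\sigma)$, precisely recovering the advertised factor. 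A secondary subtlety, resolved by Yao's principle, is the reduction from randomized online algorithms to deterministic ones against a distribution over adversaries.
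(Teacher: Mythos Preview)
Your high-level plan (a Fotakis-style hierarchical instance, one facility suffices for $\OPT$, while the online algorithm must either open at many levels or incur large assignment cost) matches the paper's approach. However, there is a genuine gap in your construction: you take each batch $B_\ell$ to have fixed size $b$, so that the total number of demands is $N = Lb = \Theta\bigl((\log\sigma)^2/\log\log\sigma\bigr)$. With this choice, your own normalization $r\cdot\|\ones_{\leq N}\| = \Theta(f)$ kills the lower bound: the trivial online strategy that opens a single facility at the very first demand point pays opening cost $f$ plus assignment cost at most $O(r)\cdot\|\ones_{\leq N}\| = O(f)$, since every later demand lies within distance $\sum_\ell r_\ell = O(r)$ of that first point. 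So the online algorithm is $O(1)$-competitive on your instance, and the claimed dichotomy ``either open at $\Omega(L)$ levels or $\|\sum_\ell r_{\ell-1}\ones_{B_\ell}\| = \Omega(Lf)$'' is simply false for general monotone norms (it already fails for $\ell_1$). The phrase ``aggregating across levels via norm monotonicity'' hides exactly the step that does not go through.

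What the paper does differently---and this is the missing idea---is to calibrate the batch sizes to the norm rather than fixing them. It defines $m_j$ to be the least integer with $\|\ones_{\leq m_j}\| \geq k^j$ (where $k = \Theta(\log\sigma/\log\log\sigma)$ and $k^k \leq \sigma$), and places $m_j - m_{j-1}$ copies of the demand at level $j$, with edge lengths $k^{-j}$ and opening cost $k$. Then $\OPT = O(k)$ by a telescoping bound on $\|\bd^*\|$, while if the online algorithm leaves two consecutive levels $j-1,j$ without a facility, the $m_j - m_{j-1}$ demands at level $j$ are each at distance at least $k^{-j+2}$, giving $\|\bd\| \geq k^{-j+2}\bigl(\|\ones_{\leq m_j}\| - \|\ones_{\leq m_{j-1}}\|\bigr) = \Omega(k^2)$ by the very definition of the $m_j$. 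This norm-dependent choice of multiplicities is precisely what makes the per-level packing argument work beyond $\ell_1$; without it, monotonicity alone gives you nothing.
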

\begin{proof}
    We may assume $\max_i \|e_i\| = 1$ (otherwise we rescale costs), so $\|\ones_{\leq n}\| = \sigma$. Let $k$ be the largest integer such that $k^k \leq \sigma$, so we have $k = \Theta\left(\frac{\log \sigma}{\log \log \sigma}\right)$. Assume $k \geq 2$.
    
    Now, let $G = (V, E)$ be a complete $N$-ary tree with height $k$, where $N$ is sufficiently large (intuitively, think of $N$ as infinite). For $j = 0, \dots k-1$, each downwards edge from a node at depth level $j$ will have length $k^{-j}$. We also define the facility opening cost to be $k$.
    
    Let $0 \leq m_0 \leq m_1 \leq \dots \leq m_k$ be defined such that $m_j$ is the least positive integer with $\|\ones_{\leq m_j}\| \geq k^j$. Notice that this implies $\|\ones_{\leq m_j}\| \leq \|\ones_{\leq m_j - 1}\| + \|e_{m_j}\| < k^j + 1$. 
    
    Our adversary will supply the demand locations as follows. First, they will choose a random path $v_0 v_1 \dots v_k$ from the root $v_0$ to a leaf $v_k$. Then, for $j=0, \dots, k$, the adversary supply $v_j$ as a demand repeated $m_j - m_{j-1}$ times ($m_0$ times for $j = 0$).
    
    In the offline setting, one may simply place a single facility at $v_k$ and assign all demands to it. This gives
    $$
    \OPT = k + \|\bd^*\| \leq k + \sum_{j=0}^{k-1} k^{-j} \|\ones_{\leq {m_j}}\| \leq k + \sum_{j=0}^{k-1} k^{-j} (k^j + 1) = O(k).
    $$
    
    In the online setting, we will show that no algorithm can achieve an expected cost of less than $\Omega(k^2)$.
    
    Notice that any online algorithm, upon receiving a demand at $v_j$, should only consider the options of allocating the demand or constructing a facility at $v_j$. Constructing a facility anywhere else is strictly disadvantageous, as there is a negligible probability (by choice of $N$) that the chosen location is in the subtree rooted at $v_{j+1}$. Thus, after the algorithm is complete, it will have constructed a set of facilities $F \subseteq \{v_0, \dots, v_k\}$, and each demand will be allocated to the most recently constructed facility above it. Let $\bd$ be the vector of allocation distances.
    
    If there is some $j \geq 2$ such that $v_j, v_{j-1} \not \in F$, then notice that every demand at $v_j$ will have allocation distance at least $k^{-j+2}$. Thus, we have
    $$
    \|\bd\| \geq k^{-j+2} \|\ones_{\leq m_j} - \ones_{\leq m_{j-1}}\| \geq k^{-j+2} (\|\ones_{\leq m_j}\| - \|\ones_{\leq m_{j-1}}\|) \geq k^2 - k - k^{-j+2} = \Omega(k^2)
    $$
    However, if no such $j$ exists, then $|F| \geq k/2$, so construction costs are at least $k^2/2 = \Omega(k^2)$.
\end{proof}

\begin{corollary}
    In the case of a symmetric norm $\|\cdot\|$, our lower bound becomes $\Omega\left( \frac{\log \rho}{\log \log \rho}\right)$ as $\rho = \sigma = \frac{\|\ones_{\leq n}\|}{\|e_1\|}$.
\end{corollary}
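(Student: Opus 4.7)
The corollary is essentially a one-line specialization of the preceding theorem, so the plan is simply to verify that $\sigma = \rho$ for symmetric norms and then invoke the theorem as a black box. Recall that the theorem gives an $\Omega(\log\sigma/\log\log\sigma)$ lower bound for any monotone norm, where $\sigma := \|\ones_{\leq n}\|/\max_i \|e_i\|$, whereas $\rho$ is defined earlier in the paper as $\rho := \|\ones_{\leq n}\|/\min_i \|e_i\|$. So the only thing to check is that $\max_i \|e_i\| = \min_i \|e_i\|$ for symmetric norms.

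The key step is to observe that if $\|\cdot\|$ is symmetric, then for any permutation $\pi$ of $[n]$ and any $i$, the vectors $e_i$ and $e_{\pi(i)}$ are coordinate permutations of one another, so $\|e_i\| = \|e_{\pi(i)}\|$. Choosing $\pi$ to be the transposition sending $1 \mapsto i$ (or more generally ranging over all $i$) shows that $\|e_i\|$ is constant in $i$. Hence $\min_i \|e_i\| = \max_i \|e_i\| = \|e_1\|$, and therefore
\[
\sigma \;=\; \frac{\|\ones_{\leq n}\|}{\max_i \|e_i\|} \;=\; \frac{\|\ones_{\leq n}\|}{\|e_1\|} \;=\; \frac{\|\ones_{\leq n}\|}{\min_i \|e_i\|} \;=\; \rho.
\]

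With this identification in hand, applying \Cref{thm:ofl-lower-bound} to any symmetric norm yields an online lower bound of $\Omega(\log\sigma/\log\log\sigma) = \Omega(\log\rho/\log\log\rho)$, which is exactly the corollary. There is no real obstacle here; the only thing to be careful about is to note that the adversarial construction in the theorem (the complete $N$-ary tree with specific edge weights and demand multiplicities $m_j - m_{j-1}$) is defined purely in terms of the norm's values on $\ones_{\leq k}$ vectors, so it applies verbatim when the norm is additionally assumed to be symmetric, and no modification of the construction or the analysis is required.
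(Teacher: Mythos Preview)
Your proposal is correct and matches the paper's approach exactly: the corollary is stated without a separate proof in the paper, with the only justification being the equality $\rho = \sigma = \|\ones_{\leq n}\|/\|e_1\|$, which you verify via the symmetry of $\|e_i\|$ over $i$. There is nothing more to add.
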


\end{document}